\documentclass[11pt]{article}%
\usepackage{fullpage}
\usepackage{amssymb,amsmath}
\usepackage{graphicx, epsfig}
\usepackage{caption}
\usepackage{subcaption}

\def\idrm#1{\ensuremath{\mathrm{#1}}}
\def\idtt#1{\ensuremath{\mathtt{#1}}}


\newtheorem{theorem}{Theorem}
\newtheorem{lemma}{Lemma}
 \newtheorem{fact}{Fact}

 \newenvironment{proof}{\trivlist\item[]\emph{Proof}:}%
 {\unskip\nobreak\hskip 1em plus 1fil\nobreak$\Box$
 \parfillskip=0pt%
 \endtrivlist}

\newenvironment{itemize*}%
  {\begin{itemize}%
    \setlength{\itemsep}{0pt}%
    \setlength{\parskip}{0pt}%
    \setlength{\parsep}{0pt}%
    \setlength{\topsep}{0pt}%
    \setlength{\partopsep}{0pt}%
  }%
  {\end{itemize}}%

\newenvironment{enumerate*}%
  {\begin{enumerate}%
    \setlength{\itemsep}{0pt}%
    \setlength{\parskip}{0pt}%
    \setlength{\parsep}{0pt}%
    \setlength{\topsep}{0pt}%
    \setlength{\partopsep}{0pt}%
  }%
  {\end{enumerate}}%

\DeclareMathOperator\Max{Max}
\DeclareMathOperator\Min{Min}
\DeclareMathOperator\Nav{Nav}

\newcommand{\no}[1]{}
\newcommand{\myparagraph}[1]{\textbf{#1}}

\newcommand{\cP}{{\cal P}}
\newcommand{\cS}{{\cal S}}
\newcommand{\cT}{{\cal T}}
\newcommand{\cC}{{\cal C}}
\newcommand{\cD}{{\cal D}}

\newcommand{\cM}{{\cal M}}
\newcommand{\cB}{{\cal B}}

\newcommand{\cW}{{\cal W}}

\newcommand{\parent}{\mathit{par}}
\newcommand{\eps}{\varepsilon}
\newcommand{\orth}{_{\mbox{\scriptsize\rm orth}}}
\newcommand{\spred}{\idrm{pred}}
\newcommand{\ssucc}{\idrm{succ}}
\newcommand{\lab}{\idtt{lab}}

\newcommand{\shortver}[1]{}
\newcommand{\longver}[1]{#1}
\newcommand{\shlongver}[2]{#2}

\pagestyle{plain}




\begin{document}

\title{Dynamic Planar Point Location in External Memory}
\author{
  J. Ian Munro\thanks{Cheriton School of Computer Science, University of Waterloo. Email {\tt imunro@uwaterloo.ca}.}
  \and 
  Yakov Nekrich\thanks{Cheriton School of Computer Science, University of Waterloo.
    Email: {\tt yakov.nekrich@googlemail.com}.}
}
\date{}

\maketitle
\begin{abstract}
  In this paper we describe a fully-dynamic data structure for the planar point location problem
 in the external memory model. Our data structure supports queries in  $O(\log_B n(\log\log_B n)^3))$ I/Os and updates in $O(\log_B n(\log\log_B n)^2))$ amortized I/Os, where $n$ is the number of segments in the subdivision and $B$ is the block size.  This is the first dynamic data structure with almost-optimal query cost.  For comparison all previously known results for this problem require  $O(\log_B^2 n)$ I/Os to answer queries. Our result almost matches  the best known upper bound in the internal-memory model.
\end{abstract}

\thispagestyle{empty}
\newpage
\setcounter{page}{1}
\section{Introduction}
\label{sec:intro}
Planar point location is a classical computational geometry problem with a number of important applications. In this problem we keep a polygonal subdivision $\Pi$ of the two-dimensional plane in a data structure; for an arbitrary query point $q$, we must be able to find the face of $\Pi$ that contains $q$.  In this paper we study the dynamic version of this problem in the external memory model.  We show that a planar subdivision can be maintained under insertions and deletions of edges, so that the cost of queries and updates is close to $O(\log_B n)$, where $n$ is the number of segments in the subdivision and $B$ is the block size.

Planar point location problem was studied extensively in different computational models. 
Dynamic internal-memory data structures for general subdivisions were described by Bentley~\cite{Bentley}, Cheng and Janardan~\cite{ChengJ92}, Baumgarten et al.~\cite{BaumgartenJM94}, Arge et al.~\cite{ArgeBG06}, and Chan and Nekrich~\cite{ChanN15}.  Table~\ref{table:intmem} lists previous results. We did not include in this table many other results for special cases of the point location problem, such as the data structures for monotone, convex, and orthogonal subdivisions, e.g.,~\cite{PreparataT89,PreparataT92,ChiangT92,ChiangPT96,GoodrichT98,GioraK,ChanT18}.  The currently best data structure~\cite{ChanN15} achieves\footnote{In this paper $\log n$ denotes the binary logarithm of $n$ when the logarithm base is not specified.} $O(\log n)$ query time and $O(\log^{1+\eps} n)$ update time or $O(\log^{1+\eps} n)$ query time and $O(\log n)$ update time; the best query-update trade-off described in~\cite{ChanN15} is $O(\log n\log\log n)$ randomized query time and $O(\log n\log\log n)$ update time. See Table~\ref{table:intmem}.

In the external memory model~\cite{AggarwalV88} the data can be stored in the internal memory of size $M$ or on the external disk. Arithmetic operations can be performed only on data in the internal memory. Every input/output operation (I/O) either reads a block of $B$ contiguous words from the disk into the internal memory or writes $B$ words from the internal memory into disk. Measures of efficiency in this model are the number of I/Os needed to solve a problem and the amount of used disk space.

Goodrich et al.~\cite{GoodrichTVV93} presented a linear-space static external data structure for point location in a monotone subdivision with $O(\log_B n)$ query cost. Arge et al.~\cite{ArgeDT03} designed a  data structure for a general subdivison with the same query cost. Data structures for answering a batch of point location queries were considered in~\cite{GoodrichTVV93} and~\cite{ArgeVV07}. 
Only three external-memory results are known for the dynamic case. The data structure of Agarwal, Arge, Brodal, and Vitter~\cite{AgarwalABV99} supports queries on monotone subdivisions in $O(\log_B^2 n)$ I/Os and updates in $O(\log^2_B n)$ I/Os amortized. Arge and Vahrenhold~\cite{ArgeV04}  considered the case of general subdivisons; they retain the same cost for queries and insertions as~\cite{AgarwalABV99} and reduce the deletion cost to $O(\log_B n)$. Arge, Brodal, and Rao~\cite{ArgeBR12} reduced the insertion cost to $O(\log_B n)$. Thus all previous dynamic data structures did not break $O(\log^2_B n)$ query cost barrier. For comparison the first internal-memory data structure with query time close to logarithmic was presented by Baumgarten et al~\cite{BaumgartenJM94} in 1994.  See Table~\ref{table:extmem}. 
All previous data structures use $O(n)$ words of space (or $O(n/B)$ blocks of $B$ words\footnote{Space usage of external-memory data structures is frequently measured in disk blocks of $B$ words. In this paper we measure the space usage in words. But the  space usage of $O(n)$ words is equivalent to $O(n/B)$ blocks of space.}).

\begin{table}[tb]
  \centering
  \begin{tabular}{|c|c|l|l|l|l|} \hline
    Reference & Space & Query Time & Insertion Time & Deletion Time & \\ \hline
    Bentley \cite{Bentley} & $n\log n$ & $\log^2 n$ & $\log^2 n$
& $\log^2 n$ & \\
    Cheng--Janardan \cite{ChengJ92} & $n$ & $\log^{2} n$ & $\log n$ & $\log n$ & \\
    Baumgarten et al. \cite{BaumgartenJM94} & $n$ & $\log n\log\log n$ & $\log n\log\log n$  & $\log^2 n$  & $^\dagger$\\
    Arge et al. \cite{ArgeBG06} & $n$ & $\log n$ & $\log^{1+\eps}n$  & $\log^{2+\eps}n$   &  $^\dagger$\\
    Arge et al. \cite{ArgeBG06} & $n$ & $\log n$ & $\log n (\log \log n)^{1+\eps}$\! & $\log^2n/\log\log n$\! & $^{\dagger\ddagger\ast}$\!\\  \hline
    Chan and Nekrich \cite{ChanN15}  & $n$ & $\log n(\log\log n)^2$\! & $\log n\log\log n$ & $\log n\log\log n$ &\\
    Chan and Nekrich \cite{ChanN15}  & $n$ & $\log n$ & $\log^{1+\eps}n$ & $\log^{1+\eps}n$ &\\
    Chan and Nekrich \cite{ChanN15}  & $n$ & $\log n$ & $\log^{1+\eps}n$ & $\log n(\log\log n)^{1+\eps}$\! &$^{\ast}$\\
    Chan and Nekrich \cite{ChanN15}   & $n$ & $\log^{1+\eps} n$ & $\log n$ & $\log n$ &\\
    Chan and Nekrich \cite{ChanN15}   & $n$ & $\log n\log\log n$\! & $\log n\log\log n$ & $\log n\log\log n$ &$^{\ddagger\ast}\!$\\
 \hline
  \end{tabular}
  \caption{\rm Previous results on dynamic planar point location in internal memory. Entries marked $^\dagger$ and $^\ddagger$ require amortization and (Las Vegas) randomization respectively, $\eps>0$ is an arbitrarily small constant.  Results marked $^\ast$ are in the RAM model, all other results are in the pointer machine model. Space usage is measured in words.
}\label{table:intmem}
\end{table}

\begin{table}[t]
  \centering
\resizebox{\textwidth}{!}{%
  \begin{tabular}{|c|c|l|l|l|l|} \hline
    Reference & Space & Query Cost & Insertion Cost & Deletion Cost & \\ \hline
Agarwal et al \cite{AgarwalABV99} & $n$ & $\log_B^2 n$ & $\log_B^2 n$ & $\log_B^2 n$ & M\\
Arge and Vahrenhold \cite{ArgeV04}  & $n$ & $\log_B^2 n$ & $\log_B^2 n$ & $\log_B n$ & G\\
Arge et al \cite{ArgeBR12}  & $n$  & $\log_B^2 n$ & $\log_B n$ & $\log_B n$ & G\\ \hline
This paper  & $n$ & $\log_B n (\log\log_B n)^3$ & $\log_B n (\log\log_B n)^2$ & $\log_B n (\log\log_B n)^2$  & G\\
This paper  & $n$ & $\log_B n \log\log_B n$ & $\log_B n \log\log_B n$ & $\log_B n \log\log_B n$  & O\\ 
\hline
\end{tabular}
}
\caption{Previous and new results on dynamic planar point location in external memory. G denotes most general subdivisions, M denotes monotone subdivision, and O denotes orthogonal subdivision. Space usage is measured in words and update cost is amortized.}\label{table:extmem}
\end{table}

In this paper we show that it is possible to break the $O(\log^2_B n)$ barrier for the dynamic point location problem.
Our data structure answers queries in $O(\log_B n (\log\log_B n)^3)$ I/Os, supports updates in $O(\log_B n(\log\log_B n)^2)$ I/Os amortized, and uses linear space.  Thus we achieve close to logarithmic query cost and a query-update trade-off  almost matching the state-of-the-art upper bounds in the internal memory model. Our result is  within double-logarithmic factors from optimal. Additionally we describe a data structure that supports point location queries in an orthogonal subdivision with $O(\log_B n\log\log_B n)$ query cost and $O(\log_B n\log\log_B n)$ amortized update cost.  The computational  model used in this paper is the standard external memory model~\cite{AggarwalV88}. 

\section{Overview}
\label{sec:overview}
\subsection{Overall Structure}
\label{sec:overall}

As in the previous works, we concentrate on answering \emph{vertical ray shooting} queries. The successor segment of a point $q$ in a set $S$ of non-intersecting segments is the first segment that is hit by a ray emanating from $q$ in the $+y$-direction.  Symmetrically, the predecessor segment of $q$ in $S$ is the first segment hit by a ray emanating from $q$ in the $-y$ direction. 
A vertical ray shooting query for a point $q$ on a set of segments  $S$ asks for the successor segment of $q$ in $S$. If we know the successor segment or the predecessor segment of $q$ among all segments of a subdivision $\Pi$, then we can answer a point location query on $\Pi$ (i.e., identify the face of $\Pi$ containing $q$) in $O(\log_B n)$ I/Os~\cite{ArgeV04}. In the rest of this paper we will show how to answer vertical ray shooting queries  on a dynamic set of non-intersecting segments. 

Our base data structure is a variant of the segment tree. Let $\cS$ be a set of segments. We store a tree $\cT$ on $x$-coordinates of segment endpoints. Every leaf contains $\Theta(B)$ segment endpoints and every internal node has $r=\Theta(B^{\delta})$ children for $\delta=1/8$. Thus the height of $\cT$ is $O(\log_B n)$. We associate a vertical slab with every node $u$ of $\cT$.  The slab of the root node is $[x_{\min},x_{\max}]\times \mathbb{R}$, where $x_{\min}$ and $x_{\max}$ denote the $x$-coordinates of the leftmost and the rightmost segment endpoints. The slab of an internal node $u$ is divided into $\Theta(B^{\delta})$ slabs that correspond to the children of $u$.  A segment $s$ \emph{spans} the slab of a node $u$ (or simply spans $u$) if it crosses its vertical boundaries.

A segment $s$ is assigned to  an internal  node $u$, if $s$ spans at least one child $u_i$ of $u$ but does not span  $u$. We assign $s$ to a leaf node $\ell$ if at least one endpoint of $s$ is stored in $\ell$. 
All segments assigned to a node $u$ are trimmed to slab boundaries of children and stored in a \emph{multi-slab} data structure $C(u)$: Suppose that  a segment $s$ is assigned to $u$ and it spans the children $u_f$, $\ldots$, $u_l$ of $u$. Then we store the segment $s_u=[p_f,p_l]$ in $C(u)$, where $p_f$ is the point where $s$ intersect the left slab boundary of $u_f$ and $p_l$ is the point where $s$ intersects the right boundary of $u_l$. See Fig.~\ref{fig:slabs}. Each segment is assigned to $O(\log_B n)$ nodes of $\cT$.

In order to answer a vertical ray shooting query for a point $q$, we identify the leaf $\ell$ such that the slab of $\ell$ contains $q$. Then we visit all nodes $u$ on the path $\pi_{\ell}$ from the root of $\cT$ to $\ell$ and answer vertical ray shooting queries in  multi-slab structures $C(u)$. 
\begin{figure}[tb]
  \centering
  \includegraphics[width=.45\textwidth]{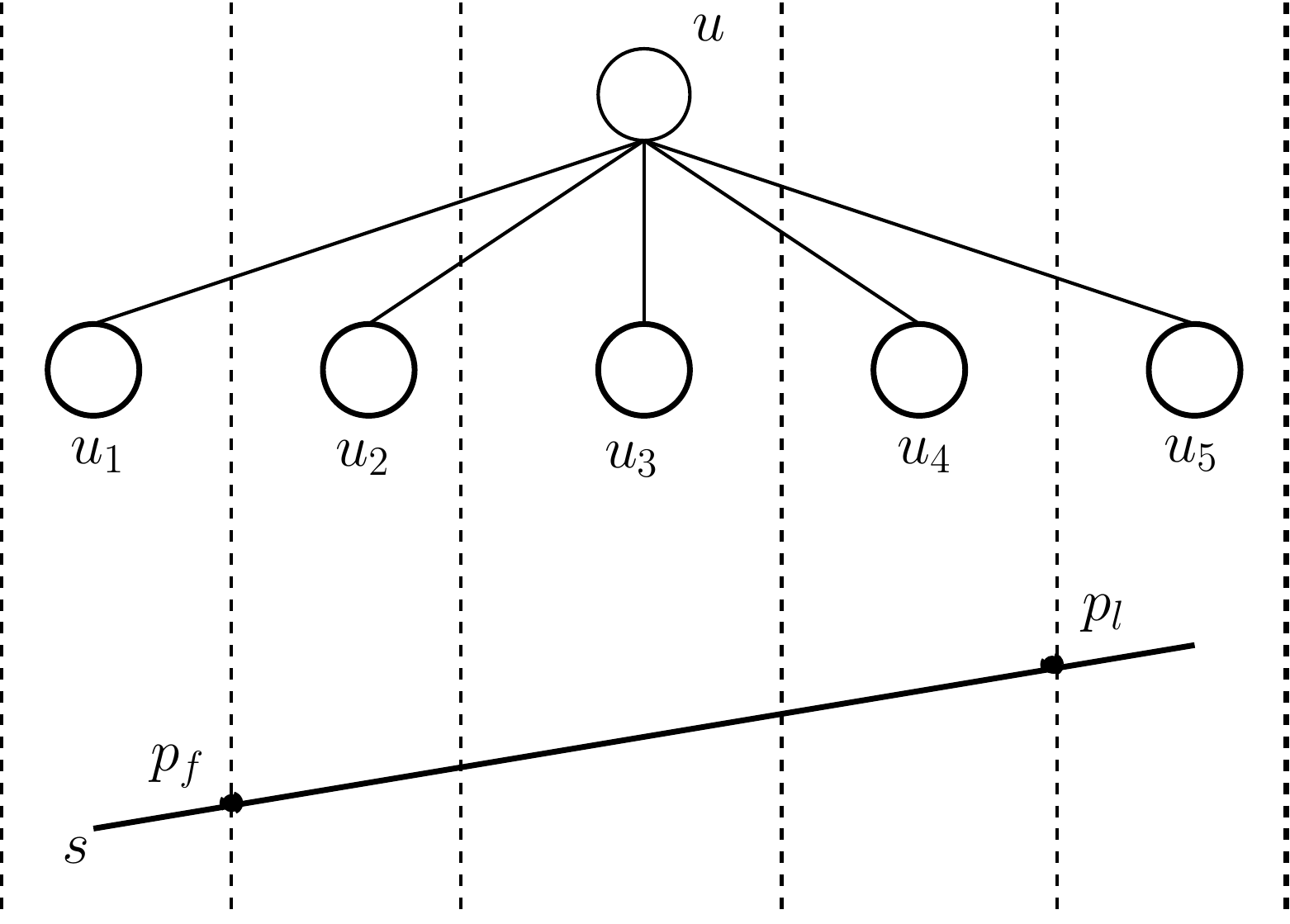}
  \caption{Segment $s$ is assigned to node $u$. The trimmed segment $[p_f,p_l]$ is stored in $C(u)$.}
  \label{fig:slabs}
\end{figure}

\subsection{Our Approach}
\label{sec:challenges}
Thus our goal is to answer $O(\log_B n)$ ray shooting queries in multi-slab structures along a path in the segment tree $\cT$  with as few  I/Os as possible. Segments stored in a multi-slab are not comparable in the general case; see Fig.~\ref{fig:order}. It is possible to impose a total order $\prec$ on all segments in the following sense: let $l$ be a vertical line  that intersects segments $s_1$ and $s_2$; if the intersection of $l$ with $s_1$ is above the intersection of $l$ with $s_2$, then  $s_2\prec s_1$. We can find such a total order in $O((K/B)\log_{M/B}K)$ I/Os, where $K$ is the number of segments~\cite[Lemma 3]{ArgeVV07}. But this ordering is not stable under updates: even a single deletion and a single insertion can lead to significant changes in the order of segments. See Fig.~\ref{fig:order}. 
Therefore it is hard to apply standard techniques, such as fractional cascading~\cite{ChazelleG,MehlhornN}, in order to speed-up ray shooting queries. Previous external-memory solutions in~\cite{AgarwalABV99,ArgeBR12} essentially perform $O(\log_B n)$ independent searches in the nodes of a segment tree or an interval tree in order to answer a query. Each search takes $O(\log_B n)$ I/Os, hence the total query cost is $O(\log_B^2 n)$.  

Internal memory data structures achieve $O(\log n)$  query cost using dynamic fractional cascading~\cite{ChazelleG,MehlhornN}.  
Essentially the difference with external memory is as follows: since we aim for $O(\log_2 n)$ query cost in internal memory, we can afford to use base tree $\cT$ with small node degree. In this special case the segments stored in sets $C(u)$, $u\in \cT$, can be ordered resp.\ divided into a small number of ordered sets. When the order of segments in $C(u)$ is known, we can apply the fractional cascading technique~\cite{ChazelleG,MehlhornN} to speed up queries. 
Unfortunately dynamic fractional cascading  does not work in the case when the total order of segments in  $C(u)$ is not known. 
Hence we cannot use previous internal memory solutions of the point location problem ~\cite{ChengJ92,BaumgartenJM94,ArgeBG06,ChanN15} to decrease the query cost in  external memory.

\begin{figure}[tb]
  \centering
  \begin{minipage}{.5\textwidth}
    \includegraphics[width=.6\linewidth,page=1]{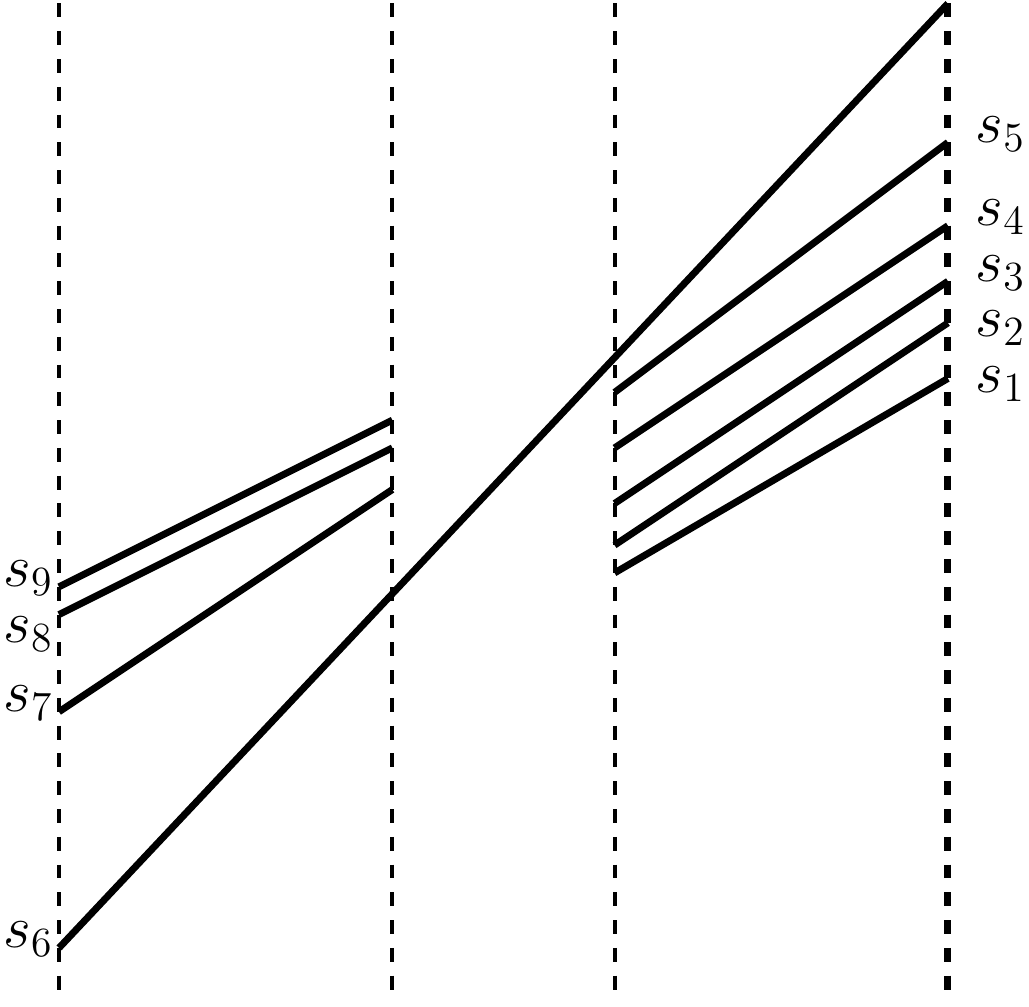}
  \end{minipage}%
  \begin{minipage}{.5\textwidth}
    \centering
    \includegraphics[width=.6\linewidth,page=4]{segment-order}
  \end{minipage}
  \caption{Left: example of segment order in a multi-slab; $s_1\!\!\prec\! s_2\!\!\prec\! s_3\!\!\prec\! s_4\!\!\prec\! s_5\!\!\prec\! s_6\!\!\prec\! s_7\!\!\prec\! s_8\!\!\prec\! s_9$. Right: a deletion and an insertion of one new segment in a multi-slab changes the order of segments to $s_7\!\!\prec\! s_8\!\!\prec\! s_9\!\!\prec\! s_0\!\!\prec\! s_1\!\!\prec\! s_2\!\!\prec\! s_3\!\!\prec\! s_4\!\!\prec\! s_5$.}
  \label{fig:order}
\end{figure}
In this paper we propose a different approach. Searching  in a multi-slab structure $C(u)$ is based on a weighted search among segments of $C(u)$. Weights of segments are chosen in such way that the total cost of searching in all multi-slab structures along a path $\pi$ is logarithmic. We also use fractional cascading, but this technique plays an auxiliary role: we apply fractional cascading to compute  the weights of segments and to navigate between the tree nodes. Interestingly, fractional cascading is usually combined with the union-split-find data structure, which is not used in our construction. 

 This paper is structured as follows.
In Section~\ref{sec:plstat} we show how our new technique, that will be henceforth called weighted telescoping search, can be used to solve the static vertical ray shooting problem. Next we turn to the dynamic case. In our exposition we assume, for simplicity, that the set of segment $x$-coordinates is fixed, i.e., the tree $\cT$  does not change. We also assume that the block size $B$ is sufficiently large, $B> \log^8 n$.  We show how our static data structure from Section~\ref{sec:plstat} can be modified to support insertions in Section~\ref{sec:pldyn}. To maintain the order of segments in a multi-slab under insertions we pursue the following strategy: when a new segment is inserted into the multi-slab structure $C(u)$, we split it into a number of \emph{unit} segments, such that every unit segment spans exactly one child of $u$. Unit segments can be inserted into a multi-slab so that the order of other segments is not affected. The number of unit segments per inserted segment can be large; however we can   use buffering  to reduce the cost of updates.\footnote{As a side remark, this approach works with weighted telescoping search, but it would not work with the standard fractional cascading used in internal-memory solutions~\cite{ChengJ92,BaumgartenJM94,ArgeBG06,ChanN15}. The latter technique relies on a union-split-find data structure (USF)  and it is not known how to combine buffering with USF.}  We need to make some further changes in our data structure in order to support deletions; the fully-dynamic solution for large $B$ is described in Section~\ref{sec:plfullydyn}. The main result of Section~\ref{sec:plfullydyn}, summed up in Lemma~\ref{lemma:slowupdate}, is the data structure that answers queries in $O(\log_B n\log\log_B n)$ I/Os; insertions and deletions are supported in $O(\log^2_B n)$ and $O(\log_B n)$
amortized I/Os respectively. \longver{We show how to reduce the cost of insertions and the space usage in Sections~\ref{sec:insfast} and Appendix~\ref{sec:space} respectively. We address some missing technical details and consider the case of small block size $B$ in Section~\ref{sec:smallB}.  The special case of vertical ray shooting among horizontal segments is studied in Appendix~\ref{sec:horiz}. 
Appendix~\ref{sec:tele} provides an alternative introduction to the weighted telescoping search by explaining how this technique works in a simplified scenario.  This section  is not used in the rest of the paper; the sole purpose of  Appendix~\ref{sec:tele} is to provide an additional explanation for the weighted telescoping search.}
\shortver{We show how to reduce the cost of insertions  in Section~\ref{sec:insfast}. We address some missing technical details and consider the case of small block size $B$ in Section~\ref{sec:smallB}.  
In the full version of this paper we will show how the space usage can be reduced to linear and address some issues related to updates of bridge segments. 
The special case of vertical ray shooting among horizontal segments will be  also considered in the full version.
}

\section{Ray Shooting: Static Structure}
\label{sec:plstat}
In this section we show how the weighted telescoping search can be used to solve the static point location problem. 
Let $\cT$ be the  tree, defined in Section~\ref{sec:overall}, with node degree $r=B^{\delta}$ for $\delta=1/8$.  Let $C(u)$ be the set of segments that span at least one child of $u$ but do not span $u$. 

\myparagraph{Augmented Catalogs.}  We keep augmented catalogs $AC(u)\supset C(u)$ in every node $u$.  Each $AC(u)$ is divided into subsets $AC_{ij}(u)$ for $1\le i\le j\le r$; $AC_{ij}(u)$ contains segments that span children $u_i$, $\ldots$, $u_j$ of $u$ and only those children.  Augmented catalogs $AC(u)$ satisfy the following properties: 
\begin{itemize*} 
  \item[(i)] If a segment $s\in (AC(u)\setminus C(u))$, then $s\in C(v)$ for an ancestor $v$ of $u$ and $s$ spans $u$.
\item[(ii)] Let  $E_i(u)=AC(u)\cap AC(u_i)$ for a child $u_i$ of $u$.  For any $f$ and $l$, $f\le i \le l$, there are at most $d=O(r^4)$ elements of $AC_{fl}(u)$ between any two consecutive elements of $E_i(u)$.
\item[(iii)] If $i\not=j$, then $E_i(u)\cap E_j(u)=\emptyset$.
\end{itemize*}

Elements of $E_i(u)$ for some $1\le i\le r$ will be called down-bridges; elements of the set $UP(u)=AC(u)\cap AC(\parent(u))$, where $\parent(u)$ denotes the parent node of $u$, 
are called up-bridges. We will say that a sub-list of a catalog $AC(u)$ bounded by two up-bridges is a \emph{portion} of  $AC(u)$.  
We refer to e.g., \cite{ArgeBG06} or \cite{ChanN15} for an explanation how we can construct and maintain $AC(u)$. We assume  in this section that all segments in every catalog $AC(u)$ are ordered. We can easily order a set $AC_{fl}(u)$ or  any set of segments that cross the same vertical line $\ell$: the order of segments  is determined by ($y$-coordinates of) intersection points of segments and $\ell$. Therefore we will speak of e.g., the largest/smallest segments in such a set. 
 


\myparagraph{Element  weights.}  We assign the weight to each element of $AC(u)$ in a bottom-to-top manner: All segments in a set $AC(\ell)$ for every leaf node $\ell$ are assigned weight $1$. Consider a segment $s\in AC_{fl}(u)$, i.e., a segment that spans children $u_f$, $\ldots$, $u_l$ of some internal node $u$. For $f\le i\le l$ let $s_1$ denote the largest bridge in $E_i(u)$ that is (strictly) smaller than  $s$ and let $s_2$ denote the smallest bridge in $E_i(u)$ that is (strictly) larger than $s$; we  let $W(s_1,s_2,u_i)=\sum_{s_1<  s' < s_2}weight(s',u_i)$, where the sum is over all segments $s'\in AC(u_i)$  and $weight_i(s,u)=W(s_1,s_2,u_i)/d$. See Fig.~\ref{fig:segment-weights} for an example. We set $weight(s,u)=\sum_{i=f}^l weight_i(s,u)$. We keep a weighted search tree for every portion $\cP(u)$ of the list $AC(u)$ By a slight misuse of notation this tree will also be denoted by $\cP(u)$. Thus every catalog $AC(u)$ is stored in a forest of weighted trees $\cP_j(u)$ where every tree corresponds to a portion of $AC(u)$~\footnote{In most cases we will  omit the subindex and will speak of a weighted tree $\cP(u)$ because it will be clear from the context what portion of $AC(u)$ is used.}.  We also store  a data structure supporting finger searches on $AC(u)$. 
\begin{figure}[tb]
  \centering
  \begin{minipage}{.5\textwidth}
    \includegraphics[width=.7\linewidth,page=1]{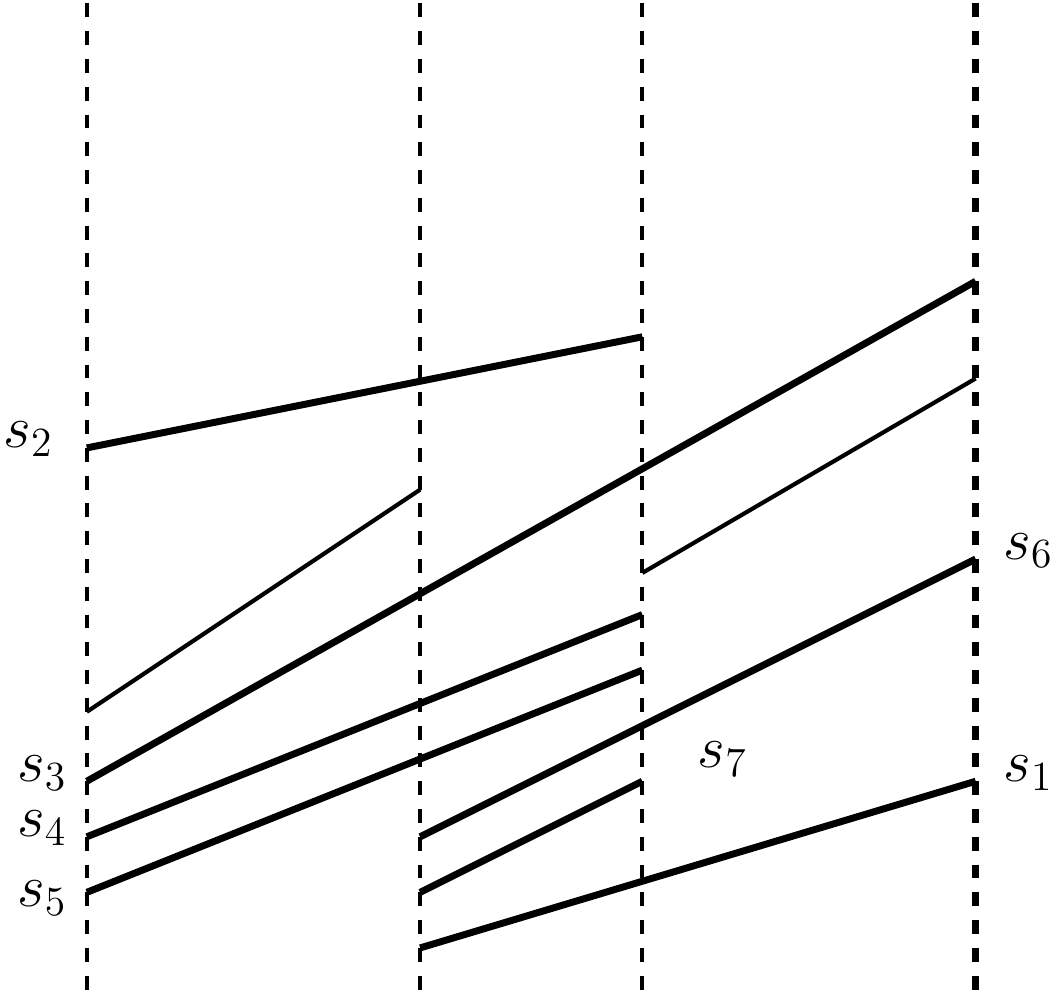}
  \end{minipage}%
  \begin{minipage}{.5\textwidth}
    \centering
    \includegraphics[width=.7\linewidth,page=2]{segment-weights}
  \end{minipage}
\caption{Computation of segment weights.  Left: segments $s_1$ and $s_2$ are down-bridges from $E_2(u)$.  Segments $s_3$, $s_4$, $s_5$, $s_6$, and $s_7$ are in $AC(u)\setminus E_2(u)$. 
For $3\le j \le 7$, $weight_2(s_j,u)=W(s_1,s_2,u_2)/d$. Right: portion of $AC(u_2)$ for the  child $u_2$ of $u$.  
$W(s_1,s_2,u_2)$ is equal to the total weight of all segments between $s_1$ and $s_2$. 
If $u_2$ is a leaf node, then $W(s_1,s_2,u_2)$ equals the total number of segments in $AC(u_2)$ that are situated between $s_2$ and $s_1$. }
\label{fig:segment-weights}
\end{figure}

\myparagraph{Weighted Trees.} Each weighted search tree is implemented as a biased $(a,b)$-tree  with parameters $a=B^{\delta}/2$ and $b=B^{\delta}$~\cite{BentST85,FeigenbaumT83}. 
The depth of a leaf $\lambda$ in a biased $(B^{\delta}/2,B^{\delta})$-tree is bounded by $O(\log_B(W/w_{\lambda}))$, where $w_{\lambda}$ is the weight of an element in the leaf $\lambda$ and $W$ is the total weight of all elements in the tree. Every internal node $\nu$ has 
$B^{\delta}$ children and every leaf holds $\Theta(B)$ segments\footnote{In the standard biased ($a,b$)-tree~\cite{BentST85,FeigenbaumT83}, every leaf holds one element. But we can modify it so that every leaf holds $\Theta(B)$ different elements (segments). The weight of a leaf $\lambda$ is the total weight of all segments stored in $\lambda$.}. 
In each internal node $\nu$ we keep $B^{3\delta}$ segments $\nu.\max_{jk}[i]$. For every child $\nu_i$ of $\nu$ and for all $j$ and $k$, $1\le j\le k\le r$, $\nu.\max_{jk}[i]$  is the highest segment from $AC_{jk}$ in the subtree of $\nu_i$ ; if there are no segments from $AC_{jk}$ in the subtree of $\nu_i$, then $\nu.\max_{jk}[i]=\mathrm{NULL}$. Using values of $\nu.\max$  we can find, for any node $\nu$ of the biased search tree, the child $\nu_i$ of $\nu$ that holds the successor segment of the query point $q$. Hence we can find the smallest segment $n(u)$ in a portion $\cP(u)$ that is above a query point $q$ in $O(\log_B (W_P/\omega_n))$ I/Os where $W_P$ is the total weight of all segments in $\cP(u)$ and $\omega_n$ is the weight of $n(u)$. 

\myparagraph{Additional Structures.}  When the segment $n(u)$ is known, we will need to find the bridges that are closest to $n(u)$ in order to continue the search.  
We keep a list $V_i(u)\subseteq AC(u)$ for each node $u$ and for every $i$, $1\le i\le r$.  $V_i(u)$ contains all segments of $E_i(u)$ and some additional segments chosen as follows:  $AC(u)$ is divided into groups so that each group consists of $\Theta(r^6)$ consecutive segments; the only exception is the last group in $AC(u)$ that contains $O(r^6)$ segments (here we use the fact that segments in $AC(u)$ are ordered).  We choose the constant in such way that every group but the last one contains $d\cdot r^2$ segments. If a group $G$ contains a segment that spans $u_i$, then we select the highest segment from $G$ that spans $u_i$ and the lowest segment from $G$ that spans $u_i$; we store both segments in $V_i$. See Fig.~\ref{fig:segments-Vi}. For every segment in $V_i$ we also store a pointer to its group in $AC(u)$. We keep $V_i$ in a B-tree that supports finger search queries. 
   
Suppose that we know the  successor segment $n(u)$ of a query point $q$ in $AC(u)$. We can find the successor  segment $b_n(u)$ of $q$ in $E_i(u)$  using $V_i$: Let $G$ denote the group that contains $n(u)$. We search in $G$ for the segment $b_n(u)\ge n(u)$ using finger search. If $b_n(u)$ is not in $G$, we consider the highest segment $s_1\in G$ that spans $u_i$. 
 By definition of $AC(u)$, there are at most $dr^2$ segments between $n(u)$ and $b_n(u)$. 
We can find $b_n(u)$ in $O(\log_B (dr^2))=O(1)$ I/Os by finger search on $V_i$ using $s_1$ as the finger. Using a similar procedure, we can find the highest bridge segment $b_p(u)\le n(u)$ in $E_i(u)$. 
\begin{figure}[tb]
  \centering
  \includegraphics[width=.3\linewidth]{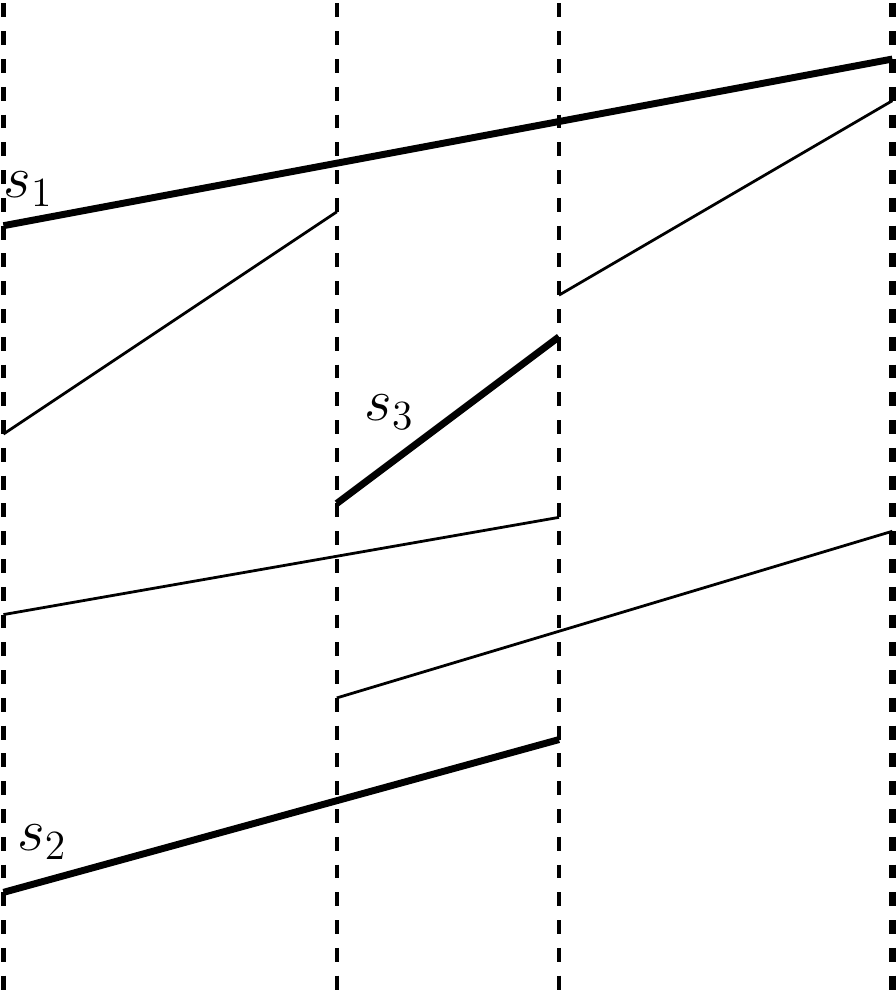}
\caption{Segments in a group. Segments $s_1$, $s_2$, and $s_3$ are stored in $V_2$: $s_1$ and $s_2$ are the highest and the lowest segments that span the second child $u_2$; $s_3$ is a bridge segment from $E_2(u)$.}
\label{fig:segments-Vi}
\end{figure}

\myparagraph{Queries.} A vertical ray shooting query for a point $q=(q_x,q_y)$ is answered as follows.  Let $\ell$ denote the leaf such that the slab of $\ell$ contains  $q$. We visit all nodes $v_0$, $v_1$, $\ldots$, $v_h$ on the root-to-leaf path $\pi(\ell)$ where $v_0$ is the root node and $v_h=\ell$. We find the segment $n(v_i)$  in every visited node, where  $n(v_i)$ is the successor segment of $q$ in $AC(v_i)$. Suppose that $v_{i+1}$ is the $j$-th child of $v_i$; $n(v_i)$ spans the $j$-th child of $v_i$.  First we search for  $n(v_0)$  in the  weighted tree of $AC(v_0)$. Next, using the list $V_j$, we identify the smallest bridge $b_n(v_0)\in E_j(v_0)$ such that $b_n(v_0)\ge n(v_0)$ and the largest bridge segment $b_p(v_0)\in E_j(v_0)$ such that $b_p(v_0)\le n(v_0)$. The index $j$ is chosen so that $v_1$ is the $j$-th child of $v_0$.  We execute the same operations in nodes $v_1$, $\ldots$, $v_h$. When we are in a node $v_i$ we consider the portion $\cP(v_i)$ between bridges $b_p(v_{i-1})$ and $b_n(v_{i-1})$; we search in the weighted tree of $\cP(v_i)$ for the successor segment $n(v_i)$ of $q$.  Then we identify the lowest  bridge $b_n(v_i)\ge n(v_i)$ and the highest bridge $b_p(v_i)\le n(v_i)$. When all $n(v_i)$ are computed, we find the lowest segment $n^*$ among $n(v_i)$. Since $\cup_{i=0}^h AC(v_i)=\cup_{i=0}^h C(v_i)$, $n^*$ is the successor segment of a query point $q$.

The cost of a ray shooting query can be estimated as follows.  Let $\omega_i$ denote the weight of $n(v_i)$.
Let $W_i$ denote the total weight of all segments of $\cP(v_i)$ (we assume that $\cP(v_0)=AC(v_0)$). 
Search for $n(v_i)$ in the weighted tree $\cP(v_i)$ takes  $O(\log_B(W_i/\omega_i))$ I/Os. By definition of weights, $\omega_i\ge W_{i+1}/d$. Hence 
\[\sum_{i=0}^h \log_B(W_i/\omega_i)= \log_B W_0+ \sum_{i=0}^{h-1} (\log_B W_{i+1} -\log_B\omega_i) - \log_B \omega_h\le \log_B(W_0/\omega_h) + 2(h+1)\log_B r.\] 
We have   $\omega_h=1$ and we will show below that   $W_0\le n$. Since $r=B^{\delta}$,    $h=O(\log_B n)$ and $\log_Br=O(1)$. Hence  the sum above can be bounded by $O(\log_B n)$. When $n(v_i)$ is  known, we can find $b_p(v_i)$ and $b_n(v_i)$ in $O(1)$ I/Os, as described above. Hence the total cost of answering a query is $O(\log_B n)$. 
Since every segment is stored in $O(\log_B n)$ lists $AC(u)$, the total space usage is $O(n\log_B n)$. 

It remains to prove that $W_0\le n$. We will show by induction that the total weight of all elements on every level of $\cT$ is bounded by $n$: Every element in a leaf node has weight  $1$; hence their total weight does not exceed $n$. Suppose that, for some $k\ge 1$, the total weight of all elements on level $k-1$ does not exceed $n$. Consider an arbitrary node $v$ on level $k$, let $v_1$, $\ldots$, $v_r$ be the children of $v$, and let $m_i$ denote the total weight of elements in $AC(v_i)$. Every element in $AC(v_i)$ contributes $1/d$ fraction of its weight to at most $d$ different elements in $AC(v)$. Hence $\sum_{e\in AL(v)}weight_i(v)\le m_i$ and the total weight of all elements in $AC(v)$ does not exceed $\sum_{i=1}^r m_i$. Hence, for any level $k\ge 1$, the total weight of $AC(v)$ for all nodes $v$ on  level $k$ does not exceed $n$. Hence the total weight of $AC(u_0)$ for the root node $u_0$ is also bounded by $n$.

\begin{lemma}
  \label{lemma:static}
There exists an $O(n\log_B n)$-space static data structure that 
supports point location queries on $n$ non-intersecting segments in $O(\log_B n)$ I/Os.
\end{lemma}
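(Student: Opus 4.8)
\emph{Proof proposal.} The plan is to assemble the data structure exactly as built above and then verify three things: that the query procedure correctly returns the successor segment, that it runs in $O(\log_B n)$ I/Os, and that the structure occupies $O(n\log_B n)$ space. The lemma as stated (point location) then follows from the reduction of \cite{ArgeV04} recalled in Section~\ref{sec:overall}, which turns a vertical-ray-shooting structure into a point-location structure with an additive $O(\log_B n)$ overhead, so it suffices to handle vertical ray shooting.

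First I would fix the segment tree $\cT$ with branching factor $r=B^{\delta}$, $\delta=1/8$, so its height is $h=O(\log_B n)$, and equip each node $u$ with the augmented catalog $AC(u)$ satisfying (i)--(iii), the forest of biased $(B^{\delta}/2,B^{\delta})$-trees over the portions $\cP_j(u)$ (each internal node storing the $B^{3\delta}$ values $\nu.\max_{jk}[\cdot]$), the lists $V_i(u)$ kept in finger-search B-trees, and a finger-search structure on $AC(u)$ itself. For the space bound I would invoke the catalog analysis of \cite{ArgeBG06,ChanN15}: every input segment occurs in $O(\log_B n)$ of the catalogs $AC(u)$, and each of the auxiliary structures attached to $u$ has size linear in $|AC(u)|$, so the total is $O(n\log_B n)$.

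For correctness I would argue by induction along the path $\pi(\ell)$, where the slab of the leaf $\ell$ contains $q$, that after processing $v_{i-1}$ the portion $\cP(v_i)$ delimited by the bridges $b_p(v_{i-1}),b_n(v_{i-1})\in E_j(v_{i-1})$ (with $v_i$ the $j$-th child of $v_{i-1}$) contains the successor $n(v_i)$ of $q$ in $AC(v_i)$: since $E_j(v_{i-1})=AC(v_{i-1})\cap AC(v_i)$, the two down-bridges surrounding $n(v_{i-1})$ in $E_j(v_{i-1})$ also bound a sublist of $AC(v_i)$ between consecutive up-bridges, i.e.\ exactly a portion, and it brackets the true successor. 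Inside that portion the values $\nu.\max_{jk}[\cdot]$ route the weighted-tree search to the child subtree holding the relevant successor restricted to the classes $AC_{jk}$, and the $V_i$ lists together with their group pointers and the $O(r^6)$-size grouping (so at most $dr^2=O(r^6)$ segments lie between $n(v_i)$ and its nearest bridge) recover $b_p(v_i),b_n(v_i)$ in $O(\log_B(dr^2))=O(1)$ I/Os via finger search. Finally, because $\bigcup_{i=0}^h AC(v_i)=\bigcup_{i=0}^h C(v_i)$ is exactly the set of segments relevant to the slab containing $q$, the lowest of the computed $n(v_i)$ is the sought successor.

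The running-time bound is the telescoping estimate already sketched: searching in $\cP(v_i)$ costs $O(\log_B(W_i/\omega_i))$ with $\omega_i=weight(n(v_i),v_i)$ and $W_i$ the total weight of $\cP(v_i)$; since $\omega_i\ge W_{i+1}/d$, $\sum_{i=0}^h\log_B(W_i/\omega_i)\le\log_B(W_0/\omega_h)+2(h+1)\log_B r$, and with $\omega_h=1$, $\log_B r=\delta=O(1)$, $h=O(\log_B n)$, and $\log_B d=O(1)$ (as $d=O(r^4)=O(B^{4\delta})$), this collapses to $O(\log_B n)$ provided $W_0=O(n)$. I expect the only genuinely substantive step to be this weight bound $W_0\le n$, proved by induction on the level of $\cT$: leaf segments have weight $1$, so their total is at most $n$; and in the inductive step each segment $s'\in AC(v_i)$ contributes a $1/d$ fraction of its weight to at most $d$ segments of $AC(v)$ (those strictly between the $E_i(v)$-bridges enclosing $s'$), so $\sum_{s\in AC(v)}weight_i(s,v)\le\sum_{s'\in AC(v_i)}weight(s',v_i)$, and summing over children $v_i$ and over all nodes $v$ on a fixed level keeps the per-level total at most $n$; in particular $W_0\le n$ at the root. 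Combining the space bound, correctness, the $O(\log_B n)$ query cost, and the \cite{ArgeV04} reduction yields the claimed static data structure.
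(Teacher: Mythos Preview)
Your proposal is correct and follows essentially the same approach as the paper: the paper's proof is precisely the construction and analysis of Section~\ref{sec:plstat}, culminating in the same telescoping sum $\sum_i\log_B(W_i/\omega_i)$, the same level-by-level induction showing $W_0\le n$, and the same $O(n\log_B n)$ space accounting. Your write-up is slightly more explicit about correctness (the inductive invariant that $n(v_i)$ lies in the portion $\cP(v_i)$) and about invoking the \cite{ArgeV04} reduction from ray shooting to point location, but these are elaborations rather than deviations.
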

The result of Lemma~\ref{lemma:static} is not new. However we will show below that the data structure described in this section can be dynamized.

\section{Semi-Dynamic Ray Shooting for  $B\ge \log^8n$: Main Idea}
\label{sec:pldyn}
Now we turn to the dynamic problem. In Sections~\ref{sec:pldyn} and \ref{sec:plfullydyn} we will assume\footnote{Probably a smaller power of $\log$ can be used, but we consider $B\ge \log^8n$ to simplify the analysis.} that $B\ge\log^8n$.
\myparagraph{Overview.}
The main challenge in dynamizing the static data structure from Section~\ref{sec:plstat} is the order of segments. Deletions and insertions of segments can lead to significant changes in the segment order, as explained in Section~\ref{sec:overview}. However segment insertions within a slab are easy to handle in one special case. We will say that a segment $s\in AC(u)$  is a \emph{unit} segment if $s\in AC_{ii}(u)$ for some $1\le i \le r$. In other words a unit segment spans exactly one child $u_i$ of $u$. Let $L_i(u)=\cup_{f\le i\le l}AC_{fl}(u)$ denote the conceptual list of all segments that span $u_i$.  When a unit segment $s\in AC_{ii}(u)$ is inserted, we find the segments $s_p$ and $s_n$ that precede and follow $s$ in $L_i(u)$; we insert $s$ at an arbitrary position in $AC(u)$ so that $s_p< s < s_n$. It is easy to see that the correct order of segments is maintained: the correct order is maintained for the segments that span $u_i$ and other segments are not affected.

\begin{figure}[tb]
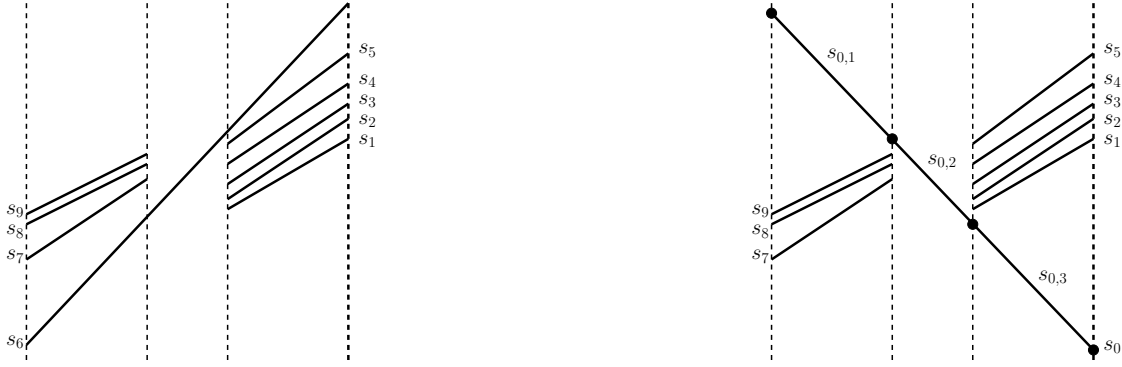

  \centering
  \begin{minipage}{.5\textwidth}
    \includegraphics[width=.6\linewidth,page=1]{segment-order}
  \end{minipage}%
  \begin{minipage}{.5\textwidth}
    \centering
    \includegraphics[width=.6\linewidth,page=5]{segment-order}
  \end{minipage}
\caption{Example from Fig.~\ref{fig:order} revisited. Left: original segment order $s_1\!\!\prec\! s_2\!\!\prec\! s_3\!\!\prec\! s_4\!\!\prec\! s_5\!\!\prec\! s_6\!\!\prec\! s_7\!\!\prec\! s_8\!\!\prec\! s_9$.  Right: segment $s_6$ is deleted, the new inserted segment $s_0$ is split into unit segments. The new segment order is e.g., $s_{0,3}\!\!\prec\! s_1\!\!\prec\! s_2\!\!\prec\! s_4\!\!\prec\! s_5\!\!\prec\! s_{0,2}\!\!\prec\! s_7\!\!\prec\! s_8 \!\!\prec\! s_9\!\!\prec\! s_{0,1}$. Thus new unit segments are inserted, but the relative order of other segments does not change. 
 }
\label{fig:order-revisited}
\end{figure}

An arbitrary segment $s$ that is to be inserted into $AC(u)$ can be represented as $B^{\delta}$ unit segments. See Fig.~\ref{fig:order-revisited} for an example. However we cannot afford to spend $B^{\delta}$ operations for an insertion. To solve this problem, we use bufferization: when a segment is inserted, we split it into $B^{\delta}$ unit segments and insert them  into a buffer $\cB$. A complete description of the update procedure is given below.

\myparagraph{Buffered Insertions.} 
We distinguish between two categories of segments, \emph{old} segments and \emph{new} segments. We know the total order in the set of old segments in the portion $\cP(u)$ (and in the list $AC(u)$). New segments are represented as a union of up to $r$ unit segments. When the number of new segments in a portion $\cP(u)$ exceeds the  threshold that will be specified below, we re-build $\cP(u)$: we compute the order of old and new segments and declare all segments in $\cP(u)$ to be old.

As explained in Section~\ref{sec:plstat} every portion $\cP(u)$ of $AC(u)$ is stored in a biased search tree data structure. Each node of $\cP(u)$ has a buffer $\cB(\nu)$ that can store up to $B^{3\delta}$ segments. When a new segment is inserted into $\cP(u)$, we split it into unit segments and add them to the insertion buffer of $\nu_r$, where $\nu_r$ is the root node of $\cP(u)$. When the  buffer of an internal node $\nu$  is full, we \emph{flush} it, i.e., we move all segments from $\cB(\nu)$ to buffers in the children of $\nu$. We keep values $\nu.\max_{kj}[i]$, defined in Section~\ref{sec:plstat}, for all internal nodes $\nu$.  All $\nu.\max_{kl}[\cdot]$ and all segments in $\cB(\nu)$ fit into one block of memory; hence we can flush the buffer of an internal node in $O(B^{\delta})$ I/Os.  When the buffer of an internal node is flushed, we do not change the shape of the tree. When the buffer $\cB(\lambda)$ of a leaf node $\lambda$ is full, we insert segments from $\cB(\lambda)$ into the set of segments stored in $\lambda$. If necessary we create a new leaf $\lambda'$ and update the weights of $\lambda$ and $\lambda'$. We can update the biased search tree $\cP(u)$ in $O(\log n)$ time. We also update  data structures $V_i$ for $i=1$, $\ldots$, $r$. Since a leaf node contains the segments from at most two different groups, we can update all $V_i$ in $O(r)$ I/Os. The biased tree is updated in $O(\log n)$ I/Os. The total amortized cost of a segment insertion into a portion $\cP(u)$ is $O(1+\frac{\log n + r}{B^{3\delta}} + \frac{\log_B n}{B^{2\delta}})=O(1)$ because $B^{\delta}> \log n$.

When the number of new segments in $\cP(u)$ is equal to $n_{\mathtt{old}}/r$, where $n_{\mathtt{old}}$ is the number of old segments in $\cP(u)$, we rebuild $\cP(u)$. Using the method from~\cite{ArgeVV07}, we order \emph{all} segments in $\cP(u)$ and update the biased tree. Sorting of segments takes $O((n_{\mathtt{old}}/B)\log_{M/B} n_{\mathtt{old}})=o(n_{\mathtt{old}})$ I/Os.
We can re-build the weighted tree $\cP(u)$ in $O((n_{\mathtt{old}}/B^{3\delta})\log n_{\mathtt{old}})=o(n_{\mathtt{old}})$ I/Os 
by computing the weights of leaves and inserting the leaves into the new tree one-by-one.

When a new segment $s$ is inserted, we identify all nodes $u_i$ where $s$ must be stored. For every corresponding list $AC(u_i)$, we find the portion $\cP(u_i)$ where $s$ must be stored. This takes $O(\log_B^2 n)$ I/Os in total. Then we insert the trimmed segment $s$ into each portion as described above. The total insertion cost is $O(\log_B^2 n)$. Queries are supported in the same way as in the static data structure described in Section~\ref{sec:plstat}. The only difference is that biased tree nodes have associated buffers.  
Many technical  aspects are not addressed in this section. We fill in the missing details and provide the description of the data structure that also supports deletions in Section~\ref{sec:plfullydyn}.

\section{Ray Shooting for  $B\ge \log^8n$: Fully-Dynamic Structure}
\label{sec:plfullydyn}
Now we give a complete description of the fully-dynamic data structure for vertical ray shooting queries. Deletions are  also implemented using bufferization: deleted segments are inserted into  deletion buffers $\cD(\nu)$ that are kept in the  nodes of trees $\cP(u)$. Deletion buffers  are processed similarly to the insertion buffers.  There are, however, a number of details that were not addressed in the previous section. When a new bridge $E_i$ is inserted we need to change weights for a number of segments. 
When the segment $n(u)$ is found, we need to find the bridges $b_p(u)$ and $b_n(u)$. The complete solution that addresses all these issues is more involved. First, we apply weighted search only to segments from $E(u)=\cup_{i=1}^r E_i(u)$. We complete the search 
and find the successor segment in $AC(u)$ using some auxiliary sets stored in the nodes of $\cP(u)$. Second, we use a special data structure to find the bridges $b_p(u)$ and $b_n(u)$. We start by describing the changed structure of weighted trees $\cP(u)$. 

Segments stored in the leaves of $\cP(u)$ are divided into weighted and unweighted segments. Weighted segments are segments from $E(u)$, i.e., weighted segments are used as down-bridges. All other segments are unweighted. Every leaf contains $\Theta(r^2)$ weighted segments. There are at $\Omega(r^2)$ and $O(r^4)$ unweighted segments between any two weighted segments. Hence the total number of segments in a leaf is between $\Omega(r^4)$ and $O(r^6)$. Only weighted segments in a leaf have non-zero weights.  Weights of weighted segments are computed in the same way as explained in Section~\ref{sec:plstat}. Hence the  weight of a leaf $\lambda$ is the total weight of all weighted segments in $\lambda$.  The search for a successor of $q$ in $\cP(u)$ is organized in such way that it ends in the leaf holding the successor of $q$ in $E(u)$. Then we can find the successor of $q$ in $AC(u)$ using auxiliary data stored in the nodes of $\cP(u)$.

We keep the following auxiliary sets and buffers in nodes $\nu$ of every weighted tree $\cP(u)$. Let $AC_{fl}(u,\nu)$ denote the set of segments from $AC_{fl}(u)$ that are stored in leaf descendants of a node $\nu$.
\begin{itemize*}
\item[(i)] Sets $\Max_{fl}(\nu)$ and $\Min_{fl}(\nu)$ for all $f,l$ such that $1\le f\le l\le r$ and for all nodes $\nu$. $\Max_{fl}(\nu)$ ($\Min_{fl}(\nu)$) contains $\min(r^4,|AC_{fl}(u,\nu)|)$ highest (lowest) segments from $AC_{fl}(u,\nu)$.  
For every segment $s$ in sets $\Max_{fl}(\nu)$ and $\Min_{fl}(\nu)$ we record the index $i$ such that  $s\in E_i(u)$ (or NULL if $s$ is not a bridge segment). 
\item[(ii)] The set $\Nav(\nu)$ for an internal node $\nu$ is the union of all sets $\Max_{fl}(\nu_i)$ and $\Min_{fl}(\nu_i)$ for all children $\nu_i$ of $\nu$. 
\item[(iii)] The set $\Max'_{fl}(\nu)$, $1\le f\le l \le r$ contains  highest segments from $AC_{fl}(u,\nu)$ that are not stored in any set $\Max'(u,\mu)$ for an ancestor $\mu$ of $\nu$. Either $\Max'_{fl}(\nu)$ holds at least $r^4$ and at most $2r^4$ segments  or
$\Max'_{fl}(\nu)$ holds less than $r^4$ segments and $\Max'_{fl}(\rho)$ for all descendants $\rho$ of $\nu$ are empty. In other words, $\Max'_{fl}(\cdot)$ are organized as external priority search trees~\cite{ArgeSV99}.  The set $\Min'_{fl}(\nu)$ is defined in the same way with respect to the lowest segments. We use $\Max'$ and $\Min'$  to maintain sets $\Max$ and $\Min$.
\item[(iv)] Finally we keep an insertion buffer $\cB(\nu)$ and a deletion buffer $\cD(\nu)$ in every node $\nu$. 
\end{itemize*}

\myparagraph{Deletions.} If  an old segment $s$ is deleted, we insert it into the deletion buffer $\cD(\nu_R)$ of the root node $\nu_R$. If a new segment $s$ is deleted, we split $s$ into $O(r)$ unit segments and insert them into $\cD(\nu_R)$. When one or more segments are inserted into $\cD(\nu_r)$, we also update sets $\Max_{fl}(\nu_R)$ and $\Min_{fl}(\nu_R)$. For any node $\nu\in \cP(u)$, when the number of segments in $\cD(\nu)$ exceeds $r^3$, we flush both $\cD(\nu)$
and $\cB(\nu)$ using the following procedure. First we identify segments $s\in \cB(\nu)\cap \cD(\nu)$ and remove such $s$ from both $\cB(\nu)$ and $\cD(\nu)$. Next we move segments from $\cB(\nu)$ and $\cD(\nu)$ to buffers $\cB(\nu_i)$ and $\cD(\nu_i)$ in the children $\nu_i$ of $\nu$. For every child $\nu_i$ of $\nu$, first we update sets $Max'_{fl}(\nu_i)$ by removing segments from $\cD(\nu_i)$ (resp.\ inserting segments from $\cB(\nu_i)$) if necessary. Then we take care that the size of $\Max'_{fl}(\nu_i)$ is not too small. If some $\Max'_{fl}(\nu_i)$ contains less than $r^4$ segments and more than $0$ segments, we move up  segments from the children of $\nu_i$ into $\nu_i$, so that the total size of $\Max'_{fl}(\nu_i)$ becomes equal to $2r^4$ or all segments are moved from the corresponding sets $\Max'_{fl}(\cdot)$ in the children of $\nu_i$ into $\Max'_{fl}(\nu_i)$. We recursively update $\Max'_{fl}(\cdot)$ in each child of $\nu_i$ using the same procedure. 

Next, we update sets $\Max_{fl}(\nu_i)$. We compute $\cM_{fl}=\cup \Max'_{fl}(\mu)$ where the union is taken over all proper ancestors $\mu$ of $\nu$. Every segment in $\Max_{fl}(\nu)$ is either from $\Max'_{fl}(\nu)$ or from $\Max'_{fl}(\mu)$ for a proper ancestor $\mu$ of $\nu$. Hence we can compute all $\Max_{fl}(\nu_i)$ when $\cM_{fl}$ and $\Max'_{fl}(\nu_i)$ are known. Sets $\Min'_{fl}(\nu_i)$ and $\Min_{fl}(\nu_i)$ are updated in the same way. Finally we update the set $\Nav(\nu)$ by collecting segments from $\Max_{fl}(\nu_i)$ and $\Min_{fl}(\nu_i)$. 

All segments needed to re-compute sets after flushing buffers $\cD(\nu)$ and $\cB(\nu)$ fit into one block of space. Hence we can compute the set $\cM$ in $O(\log_B n)=O(r)$ I/Os and all sets in each node $\nu_i$ in $O(1)$ I/Os. The set $\Nav(\nu)$ is updated in $O(r)$ I/Os. Since each node has $O(r)$ children, the total  number of I/Os needed to flush a buffer is $O(r)$.  Every segment 
can be divided into up to $r$ unit segments and each unit segment can contribute to $\log_B n$ buffer flushes. Hence the total amortized cost per segment is $O(\frac{r^2\log_B n}{r^3})=O(1)$. We did not yet take into account the cost of refilling the buffers $\Max'$; using the analysis similar to the analysis in~\cite[Section 4]{Brodal16}, we can estimate the cost of re-filling $\Max'$ as $O(\frac{\log_B n}{r^3})=o(1)$. 

We do not store buffers in the leaf nodes. Let $S(\lambda)$ be the set of segments kept in a leaf $\lambda$ and let $S_W(\lambda)$ be the set of weighted segments stored in $\lambda$. When we move segments from $\cB(\nu)$ or $\cD(\nu)$ to its leaf child $\lambda$, we update $S(\lambda)$ accordingly. This operation changes the weight  of $\lambda$. Hence we need to update the weighted tree $\cP(u)$ in $O(\log n)$ I/Os. Sets $\Max_{fl}(\cdot)$ and $\Min_{fl}(\cdot)$ are also updated. 

After an insertion of new segments into a leaf node, we may have to insert or remove some bridges in $E_i(u)$ for $1\le i \le r$. When we insert a new bridge $b$ into $E_i(u)$, we must split some portion $\cP(u_i)$ into two new portions, $\cP_1(u_i)$ and $\cP_2(u_i)$.  Additionally we must change the weights  of the bridge segments in $E_i(u)$ that precede and follow $b$. The cost of splitting $\cP(u_i)$ is $O(\log n)$. We also need $O(\log n)$ I/Os to change the weights of two neighbor bridges. Hence the total cost of inserting a new bridge is $O(\log n)$. We insert a bridge at most once per $O(r)$ insertions into $AC(u)$ because every new segment is divided into up to $r$ unit segments. We remove a bridge at most once after $O(r)$  deletions. See~\cite{ChanN15} for the description of the method to maintain bridges in catalogs $AC(u)$.   Thus the total amortized cost incurred by a bridge insertion or deletion is $O(\frac{\log n}{r})=O(1)$.  

\myparagraph{Insertions.}
Insertions are executed in a similar way. A new inserted segment is split into $O(r)$ unit segments that are inserted into the buffer $\cB(\nu_R)$ for the root node $\nu_R$. The buffers and auxiliary sets are updated and flushed in the same way as in the case of deletions. When the number of new segments in some portion $\cP(u)$ is equal to $n_{\mathtt{old}}/r$, where $n_{\mathtt{old}}$ is the number of old segments in $\cP(u)$, we rebuild $\cP(u)$. As explained in Section~\ref{sec:pldyn}, rebuilding of $\cP(u)$ incurs an amortized cost of $o(1)$.

\myparagraph{Queries.}
The search for the successor segment $n(u)$ in the weighted tree $\cP(u)$ consists of two stages. Suppose that the query point $q$ is in the slab of the $i$-th child $u_i$ of $u$. First we find the successor $b_n(u)$ of $q$ in $E_i(u)$ by searching in $\cP(u)$. We traverse the path from the root to the leaf $\lambda_n$ holding $b_n(u)$. In every node $\nu$ we select its leftmost child $\nu_j$, such that $\Max_{fl}(\nu_j)$ for some $f\le i \le l$ contains a segment $s$ that is above $q$ and $s$ is not deleted (i.e., $s\not\in \cD(\mu)$ for all ancestors $\mu$ of $\nu$). The size of each set $\Max_{fl}(\nu_k)$ is larger than the total size of all $\cD(\mu)$ in all ancestors $\mu$ of $\nu$. Hence every $\Max_{fl}(\nu_i)$ contains some elements that are not deleted unless 
the set $C_{fl}(u,\nu_i)$ is empty.  Therefore we  select the correct child $\nu_j$ in every node.  Since $\cP(u)$ is a biased search tree~\cite{BentST85,FeigenbaumT83}, the total cost of finding the leaf $\lambda_n$ is bounded by $O(\log(W_P/\omega_{\lambda}))=O(\log (W_P/\omega_n))$ where $\omega_{\lambda}$ is the total weight of all segments in $\lambda_n$ and $\omega_n \le \omega_{\lambda}$ is the weight of the bridge segment $b_n(u)$. 

During the second stage we need to find the successor segment $n(u)$ of $q$ in $AC(u)$. The distance between $n(u)$ and $b_n(u)$ in $AC(u)$ can be arbitrarily large. Nevertheless $n(u)$ is stored in one of the sets $\Nav(\mu)$ for some ancestor $\mu$ of $\lambda_n$. Suppose that $n(u)$ is an unweighted segment stored in a leaf $\lambda'$ of $\cP(u)$ and let $\mu$ denote the lowest common ancestor of $\lambda$ and $\lambda'$. Let $\mu_k$ be the child of $\mu$ that is an ancestor of $\lambda'$. There are at most $r^4$ segments in $AC_{fl}(u)$ between $n(u)$ and $b_n(u)$. Hence, $n(u)$ is stored in the set $\Max_{fl}(\mu_k)$. Hence, $n(u)$ is also stored in $\Nav(\mu)$. We visit all ancestors $\mu$ of $\lambda_n$ and compute $\cD=\cup_{\mu}\cD(\mu)$. Then we visit all ancestors one more time and find the successor of $q$ in $\Nav(\mu)\setminus \cD$. The asymptotic query cost remains the same because we only visit the nodes between $\lambda_n$ and the root and each node is visited a constant number of times. 

We need to consider one additional special case. It is possible that there are no bridge segments $s\in E_i(u)$ stored in the leaves of $\cP(u)$. In this case there are at most $r^2$ segments in $AC_{fl}(u)$ for every pair $f,\,l$, satisfying $f\le i\le l$, stored in the leaves of $\cP(u)$. For each portion $\cP(u)$, if there are at most $r^2$ segments in $AC_{fl}(u)\cap \cP(u)$, we keep the list of all such segments. All such lists fit into one block of memory. We also keep the list of indexes $i$, such that  $E_i(u) \cap \cP(u)$ is empty.Suppose that we need to find the successor of $q$ and $\cP(u)\cap E_i(u)$ is empty. Then we simply examine all  segments in $AC_{fl}(u)\cap \cP(u)$ for all $f\le i\le l$ and find the successor of $q$ in $O(1)$ I/Os. 

When $n(u)$ is known, we need to find $b_p(u)$ and  $b_n(u)$, if $b_n(u)$ was not computed at the previous step. It is not always possible to find these bridges using $\cP(u)$ because $b_p(u)$ and $b_n(u)$ can be outside of $\cP(u)$.  To this end, we use the data structure for colored union-split-find problem on a list (list-CUSF) that will be described in \shlongver{the full version of this paper}{Section~\ref{sec:col}}. We keep the list $V(u)$ containing  all down-bridges from $E_i(u)$, for $1\le i\le r$, and all up-bridges from $UP(u)$. Each segment in $e\in V(u)$ is associated to an interval; a segment $e\in V_i(u)$ is associated 
to an interval $[i,i]$ and a segment from $UP(u)$ is associated to a dummy interval $[-1,-1]$. For any segment $e\in V(u)$ we can find the preceding/following segment associated to an interval $[i,i]$ for any $i$, $1\le i\le r$, in $O(\log\log_B n)$ I/Os.  Updates of $V(u)$ are supported in $O(\log \log_B n)$ I/Os. Since we insert or remove bridge segments once per $r^2$ updates, the  amortized cost of maintaining the list-CUSF structure is $O(1)$. 

\myparagraph{Summing up.}
By the same argument as in Section~\ref{sec:plstat}, weighted searches in all nodes take $O(\log_B n)$ I/Os in total.
Additionally we spend $(\log\log_B n)$ I/Os in every node with a query to list-CUSF. Thus the total query cost is $O(\log_B n\log\log_B n)$. 
When a segment is deleted, we remove it from $O(\log_B n)$ lists $AC(u)$ and from secondary structures (weighted trees etc.) in these nodes. The deletions take $O(1)$ I/Os per node or $O(\log_B n)$ I/Os in total. When a segment is inserted, it must be inserted into $O(\log_B n)$ lists $AC(u)$. We first have to spend $O(\log_B n)$ I/Os to find the portion $\cP(u)$ of each $AC(u)$ where it must be stored. When $\cP(u)$ is known, an insertion takes $O(1)$ amortized I/Os as described above. The total cost of an insertion is $O(\log^2_B n)$ I/Os. Since every segment is stored in $O(\log_B n)$ lists, the total space is $O(n\log_B n)$. 

\begin{lemma}
  \label{lemma:slowupdate}
     If $B>\log^8 n$, then there exists an $O(n\log_B n)$ space data structure that supports vertical ray shooting queries on a dynamic set of $n$ non-intersecting segments in $O(\log_B n\log\log_B n)$ I/Os. Insertions and deletions of segments are supported in $O(\log^2_B n)$  and $O(\log_B n)$ amortized I/Os respectively. 
\end{lemma}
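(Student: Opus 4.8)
The plan is to assemble the ingredients developed in Sections~\ref{sec:plstat}--\ref{sec:plfullydyn} into one structure and then verify the query, update, and space bounds separately. I would start from the segment tree $\cT$ of Section~\ref{sec:overall} with fan-out $r=B^{\delta}$, $\delta=1/8$, keep in each node $u$ the augmented catalog $AC(u)\supset C(u)$ with its down-bridges $E_i(u)$, up-bridges $UP(u)$, and the weight function $weight(\cdot,u)$ of Section~\ref{sec:plstat}, and store each portion $\cP(u)$ of $AC(u)$ as the modified biased $(B^{\delta}/2,B^{\delta})$-tree of Section~\ref{sec:plfullydyn}: leaves carry $\Theta(r^2)$ weighted segments (the down-bridges of $E(u)$) separated by $\Omega(r^2)$ and $O(r^4)$ unweighted segments, and each node $\nu$ carries the auxiliary sets $\Max_{fl}(\nu)$, $\Min_{fl}(\nu)$, $\Nav(\nu)$, $\Max'_{fl}(\nu)$, $\Min'_{fl}(\nu)$ together with the buffers $\cB(\nu)$ and $\cD(\nu)$. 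Everything is pinned down once I commit to the invariants of Section~\ref{sec:plfullydyn} (the $\Max'$ priority-search-tree shape, the $r^4$-window guarantees for $\Max_{fl}/\Min_{fl}$, the $r^3$ flush threshold, and the $n_{\mathtt{old}}/r$ rebuild threshold); the real work is checking that these invariants are maintainable and that they suffice for queries.

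For the query bound I would argue in two stages as in the ``Queries'' paragraph. First, a root-to-leaf walk in $\cP(u)$: at a node $\nu$ pick the leftmost child $\nu_j$ whose $\Max_{fl}(\nu_j)$ (for some $f\le i\le l$, where $q$ lies in the slab of $u_i$) contains a segment above $q$ that is not in any $\cD(\mu)$ for an ancestor $\mu$ of $\nu$; the point to nail down is that $|\Max_{fl}(\nu_k)|$ exceeds the total size of all deletion buffers on the ancestors of $\nu$, so an undeleted witness survives unless $C_{fl}(u,\nu_k)$ is genuinely empty, hence the walk reaches the leaf $\lambda_n$ holding the successor $b_n(u)$ of $q$ in $E_i(u)$ at cost $O(\log(W_P/\omega_n))$ since $\cP(u)$ is biased. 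Second, the true successor $n(u)$ in $AC(u)$ lies within $r^4$ positions of $b_n(u)$, hence is recorded in $\Max_{fl}(\mu_k)\subseteq\Nav(\mu)$ for an ancestor $\mu$ of $\lambda_n$, so one more pass over the ancestors of $\lambda_n$ (first forming $\cup_\mu\cD(\mu)$, then scanning $\Nav(\mu)$ minus that set) finds it; the degenerate case $E_i(u)\cap\cP(u)=\emptyset$ is disposed of by the single-block lists kept for that purpose, and $b_p(u),b_n(u)$ are then read off the list-CUSF structure on $V(u)$ in $O(\log\log_B n)$ I/Os. Summing over the root-to-leaf path $\pi(\ell)$ and using $\omega_i\ge W_{i+1}/d$ exactly as in Section~\ref{sec:plstat} telescopes the weighted-search terms to $O(\log_B n)$ (via $W_0\le n$, $\omega_h=1$, $\log_B r=O(1)$, $h=O(\log_B n)$), and the $O(\log\log_B n)$ per node for list-CUSF contributes $O(\log_B n\log\log_B n)$ overall.

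For the updates I would run the amortized analysis of the ``Deletions'' and ``Insertions'' paragraphs. A deleted old segment enters $\cD(\nu_R)$; a deleted or inserted new segment is first split into at most $r$ unit segments; buffers are flushed at the $r^3$ threshold, and a flush costs $O(r)=O(\log_B n)$ I/Os (all relevant data fit in one block, with $O(r)$ extra for recomputing $\cM_{fl}$ and $O(r)$ for $\Nav$). Since a unit segment takes part in $O(\log_B n)$ flushes along its branch, the amortized flush cost per updated segment is $O(r^2\log_B n/r^3)=O(1)$; refilling the $\Max'$ priority-search trees adds only $O(\log_B n/r^3)=o(1)$ by the argument of~\cite[Section~4]{Brodal16}; a bridge insertion/deletion costs $O(\log n)$ but occurs once per $\Theta(r)$ (resp.\ $\Theta(r^2)$) catalog updates, hence $O(1)$ amortized, and maintaining $V(u)$ in the list-CUSF adds $O(\log\log_B n)$ per bridge change, again $O(1)$ amortized; rebuilding a portion at the $n_{\mathtt{old}}/r$ threshold costs $o(n_{\mathtt{old}})$, i.e.\ $o(1)$ charged per insertion. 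Hence a deletion touches $O(\log_B n)$ catalogs at $O(1)$ amortized I/Os each, i.e.\ $O(\log_B n)$ total, while an insertion additionally spends $O(\log_B n)$ I/Os per catalog just to locate the portion $\cP(u)$ in which the trimmed segment belongs, i.e.\ $O(\log^2_B n)$ total; space is $O(n\log_B n)$ since each segment lies in $O(\log_B n)$ catalogs and the per-node auxiliary sets and biased trees are linear in catalog size. Throughout, $B>\log^8 n$ is what makes $r=B^{\delta}>\log n$, so that $O(\log n)$-sized updates of a biased tree are absorbed into the $O(1)$ amortized bound and the $\Max'/\Nav$ data fit in a block.

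I expect the main obstacle to be the third paragraph together with the invariant-maintenance half of the first: showing rigorously that buffered flushes, leaf splits, bridge insertions/deletions, and portion rebuilds keep every invariant of the $\Max,\Min,\Nav,\Max',\Min'$ hierarchy intact while the biased tree simultaneously rebalances and changes leaf weights, and in particular that the two size guarantees used by the query — $|\Max_{fl}(\nu_k)|$ exceeds the total ancestor deletion-buffer size, and $\Max_{fl}(\cdot)$ covers $r^4$ consecutive segments of $AC_{fl}$ — survive all of these operations. The list-CUSF ingredient for $b_p(u),b_n(u)$ is also genuine work, but it is deferred to \shlongver{the full version of this paper}{Section~\ref{sec:col}} and only its interface ($O(\log\log_B n)$ per operation, invoked $O(1)$ times amortized because bridge changes are $\Theta(r^2)$-spaced) is needed here.
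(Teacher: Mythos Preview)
Your proposal is correct and follows essentially the same approach as the paper: it is precisely the argument of Section~\ref{sec:plfullydyn} summarized, with the telescoping query analysis inherited from Section~\ref{sec:plstat} and the ``Summing up'' paragraph supplying the final accounting. One small slip: you write that a flush costs ``$O(r)=O(\log_B n)$'' I/Os, but the correct direction under $B>\log^8 n$ is $\log_B n=O(r)$ (since $r=B^{1/8}>\log n\ge \log_B n$); this does not affect the amortized bounds, which are stated in terms of $r$.
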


\no{
\section{Worst-Case Updates}
Some additional issues are related to updating the bridge segments. We can insert or delete a bridge segment into/from  a set $AC(u)$ in  $O(\log n)$ I/Os.   However every segment is split into $O(\log_B n)$ truncated segments that are stored $O(\log_B n)$ nodes of the segment tree, as described in Section~\ref{sec:overall}. Since every truncated segment can be used as a bridge, the total cost of a deletion can be as high as $O(\log n \log_ B n)$. In order to reduce this cost, we use lazy updates following the scheme from~\cite{ChanN15}. A more detailed description is given below. 

Every list $C_{fl}(u)$ is divided into blocks of size $\Theta(B^{\delta}\log n)$ and the set $C'_{fl}(u)$ contains one representative segment from every block. $C'_{fl}(u)$ is further divided into sub-blocks of size $\Theta(\log n)$  and every set $C_{fl}(u,i)$ for $1\le i\le \log n$ contains one representative from each sub-block. Finally we divide each set $C_{fl}(u,i)$ into chunks of size 
$\Theta(B^{\delta\cdot i}$; the set $C_{fl}(u,i,j)$ contains one representative segment from each chunk of $C_{fl}(u,i)$. 
The set $A(v)$ contains all segments from $C(v)$  and all segments from $C(u,i_v,j_v)$ for every ancestor $u$ of $v$, where 
$i_v$ is the distance from $v$ to $u$ and $v$ is the $j_v$-th descendant of $u$ (counting all descendants of $u$ on the same tree level from left to right).
The set $A'(v)$ is obtained by dividing $A(v)$ into groups  and selecting the topmost segment from every group. In every node $v$, the set of up-bridges is $UP(v)=A'(v)$, the set of down-bridges to the $i$-th child is $E_i(v)=A'(v_i)$, and $AC(v)=C(v)\cup A'(v)\cup (cup_i E_i(v))$.

The purpose of sets $C_{fl}(u,i)$ and $C_{fl}(u,i,j)$ is to distribute the segments of $C'_{fl}(u)$ among its descendants on the same level:  There are $O(B^{\delta}\log n)$ consecutive chunks of $C(u,i,j)$ between any two consecutive chunks of $C(u,i-1,j')$ for any node $u$ and any $i$, $j$ and $j'$. Hence there are $O(B^{\delta}\log^ 2n)$ segments of $AC(u_i)\setminus C(u_i)$ between any two concecutive segments of $AC(u)$ where $u$ is the parent of $u_i$.  Selection of segments in  sets $C'_{fl}(u)$ and $A'(u)$  serves a different purpose. Since our update procedure relies on lazy deletions, we will keep some deleted segments in the data structure. A deleted segment can be intersected by another segment. We employ sets $C'_{fl}(u)$ and $A'_{fl}(u)$ to avoid intersections  between bridge segments.

Now we describe how to maintain sets $AC'(u)$ using background processes. 
The first process maintains the sizes of blocks. At every iteration we select 
the largest block of $C_{fl}(u)$; if the blocks size exceeds $8\log^4 n$, we divide the block into two equal blocks. The old representative element is removed from $C'(u)$ and two new representatives are inserted into $C'(u)$. We always select the middle segment (that is, the median segment in a block) as its representative.  We also find the smallest block in $C_{fl}(u)$; if its size is smaller than $2\log^4 n$, we merge it with one of its neighbor blocks. If the size of the resulting block is larger than $8\log^4 n$, we split it into two equal-size blocks and update the set $C'(u)$. Each iteration takes $O(\log^3 n)$ I/Os; its cost is distributed among $O(\log^3 n)$ updates of the set $C_{fl}(u)$. Using standard analysis, we can show that the size of  

Two other processes maintain sizes of chunks in $C'_{fl}(u)$ and sub-chunks of $C(u,j)$. At each iteration, we find the largest chunk of $C'(u)$; if its size is larger than $4\log^2 n$, we divide it into two equal-size chunks. Representative segments of the old chunk are removed from $C(j,u)$ for $1\le j \le \log n$ and representative segments of the two new chunks are inserted into $C(j,u)$. We also find the smallest chunk in $C'(u)$. If its size is smaller than $2\log^2 n$, we merge it with one of the neighbor chunks. If the resulting chunk is larger than $4\log^2 n$, we split it into two equal-size chunks. Representative segments in $C(j,u)$ are updated accordingly. The cost of each iteration is distributed among $\log n$ updates of $C'(u)$, i.e., over $\log^4 n$ updates of $C(u)$. We maintain sub-chunks of $C(j,u)$ in the same way. When an element is inserted into or removed from $C(u,j,i)$ we also update the corresponding set $A(v)$.

The process for maintaining the set $A'(v)$ is a little more involved. 
\begin{lemma}
  \label{lemma:segintersect}
Segments in $C'(u)$ do not intersect. When a segment $s$ is inserted into $A(u)$, it intersects at most $g$ other segments where $g=B^{2\delta}\log n$
\end{lemma}
\begin{proof}
  We always select the middle segment as a representative of its block in $C(u)$. The size of a block exceeds $2\log^4 n$. Thus when we add $s$ to $C'(u)$, the segment $s$ is a real segment; $s$ is preceded and followed by $\log^4 n$ real segments in its block. Hence $s$ cannot intersect the representative segment of any other block.  Therefore segments in $C'(u)$ do not intersect. By the same argument, any segment $s\in C(v)$ intersects at most one segment $s'\in C'(u)$ where $u$ is an ancestor of $v$: if $s$ intersected two segments, $s'\in C'(u)$ and $s''\in C'(u)$, then $s$ would intersect some real segment $s_r$, such that $s'< s_r< s''$. 

The set $A(v)$ is a union of sets $C_{fl}(i,j,u)$ for all ancestors $u$ of $v$ and for all $1\le f\le l\le B^{\delta}$. Since $s\in A(v)$ intersects at most one segment in each  $C_{fl}(i,j,u)$, the total number of segments intersected by $s$ is bounded by $B^{2\delta}\log_B n$. 
\end{proof}

Lemma~\ref{lemma:segintersect} bounds the number of segments that are intersected by a segment $s$ in $A'(v)$ provided that $s$ is a real segment. 
But we also need to bound the number of deleted segments intersected by a deleted segment. 
However when a segment 
$s$ is marked as deleted, it can be intersected by an arbitrary number of other segments. For instance, a segment $s\in C'(u)$ can be intersected by 
any number of segments $s'\in C'(u_1)$ if $s$ is marked as deleted. We run a background process that (a) bounds the number of segment intersections in $A'(v)$ and (b) maintains a subset $A'(v)\subset A(v)$ such that the segments in $A'(v)$ do not intersect. Our process also bounds the number of segments in $A(v)$ between any two consecutive segments in $A'(v)$. 

We order the segments in $A'(v)$ by the $y$-coordinates of their left endpoints. W. l. o. g. we can assume that all segments in $A'(v)$ have positive slope, i.e., the $y$-coordinate of the right endpoint is larger than the $y$-coordinate of the left endpoint for every segment. The segments with negative slope can be handled separately using the symmetric procedure. 
All segments in $A'(v)$ are divided into groups of $\Theta(g\log^2 n)$ consecutive segments. Each group is divided into subgroups of $\Theta(g)$ consecutive segments where $g=B^{2\delta}\log_B n$. Every group contains at least $2g \log^2 n$ real segments and each  subgroup contains at least $2g$ real segments. 

For each subgroup, we count the maximum number of segments that intersect a segment in a subgroup.  When some segment in a group is intersected by $g\log^2 n$ segments, we re-build the group. Each segment in a group that is marked as deleted is removed from the data structure. If the deleted segment $s$ is from a block $b$ of a list $C(u)$, we find the new representative segment $s'$ for the block $b$ (merging $b$ with the preceding block if necessary) and insert $b$  into $A(u)$. If the topmost subgroup in a group $G$ is re-built, we select the new topmost segment in $G$ and insert it into $A'(v)$. 

We can maintain the sizes of groups and subgroups using  standard techniques. 
When a group $G$ or its topmost subgroup is re-built, the topmost segment in the group can be changed. In this case we remove the old representative segment of $G$ from $A'(v)$ and insert the new one. When $A'(v)$ is updated we also update $AC(v)$. We update $A'(v)$ a constant number of times per $\Theta(g\log^2 n)$ updates of a group $G$. 

When a group is re-built the representative segment is changed; then the sets 
$A'(v)$ and $AC'(v)$ are updated accordingly.

\begin{lemma}
  \label{lemma:groupsintersect}
Segments in the set $A'(v)$ do not intersect. 
\end{lemma}
\begin{proof}
  Let $s_1$ and $s_2$ be the representative segments in groups $G_1$ and $G_2$. 
 Suppose that $G_2$ is above $G_1$, i.e., each segment in $G_2$ is larger than any segment in $G_1$. Let $s'_2$ be the segment in $G_2$ that is of the same type as $s_2$ and  is smaller than $s_2$.  Let $s'_1$ be the segment that is of the same type  as $s_1$ but is larger than $s_1$.

In any block $b$ of $_{fl}C(u)$ there is a real segment $s_a$ such that $s_a$ is larger than the representative segment $s$ and $s$ does not intersect $s_a$.
There is also a real segment $s_b$ that is smaller than $s$ and $s$ does not intersect $s_b$. If $s$ and $s'$ are representatives of two consecutive blocks in $C_{fl}(u)$, then there is a real segment $s_r\in C_{fl}(u)$ such that (a)  $s_r$  intersects neither  $s$ nor $s'$ and (b) $s_r$ is between $s$ and $s'$.

Suppose that $s_1$ intersects $s_2$. The segment $s'_1$ is larger than $s_1$ and $s_1'$ does not intersect $s_1$. Hence $s'_1$ also intersects $s_2$. Let 
$b$ denote the block of the set $C_{fl}(u)$ such that the segment $s_2$ is stored in the block $b$ of $C_{fl}(u)$. There exists a real segment $s_b$ in 
$b$ such that $s_b$ is below $s_2$ and $s_b$ is above $s_2'$. 
There also exists a real segment $s_a$ such that $s_a$ is above $s_1$ and below $s'_1$. 

\end{proof}

We maintain the sizes of blocks, sub-blocks, and chunks using a background process. If a segment that is used as a bridge segment is deleted, we keep it in the data structure. But our background process guarantees that the number of segments removed from a block is bounded by $B^{2\delta}\log^4 n$, i.e., every block is re-built after at most $B^{2\delta}\log^4 n$ updates. 
When a block is re-built, we always select the middle segment of a block as a representative. Any number of consecutive segments in $C'(u)$ can be marked as deleted. But we guarantee that there is always a real (i.e., not marked as deleted) segment between any two consecutive segments of $C'(u)$. The process is organized in such way that we can spend $O(B^{2\delta}\log^3 n)$ I/Os on every insertion or deletion of a bridge segment.  The update procedure and the data structure will be modified as described below.

Firstly, to simplify the insertion of bridge segments we will assume that all segments stored in $AC(u)$, except for up-bridges are unit segments. As explained above we can split any segment into $O(B^{\delta})$ unit segments. It does not change the cost of insertions. The total space usage is increased by a factor $B^{\delta}$, but we can decrease the space by method from Section~\ref{sec:space}. Second, a bridge segment can be marked as deleted. Hence a bridge segment in $AC(u)$ can be intersected by other segments from $AC(u)$. However we can guarantee 
that any segment in $AC(u)$ intersects at most $B^{2\delta}\log_B n$ bridge segments:  a real segment $s$  intersects at most one segment in $C'_{fl}(v)$ for any ancestor $v$ of $u$ because there is always a real segment between any two consecutive segments of $C'_{fl}(v)$. Since $u$ has $O(\log_B n)$ ancestors, $s$ intersects at most $B^{2\delta}\log_B n$ segments.  We run an 
additional background process that bounds the number of segments (real or deleted)  that intersect a bridge segment.

Now we describe the procedure of inserting a new bridge segment. When an up-bridge  segment $s$ is inserted into $AC(u)$, we identify all up-bridges $s_i\in UP(u)$ intersected by $s$. 
Let $\cP_i(j,u)$ denote  tree of unit segments covering the $j$-th child $u_j$ of $u$ that is  associated with $s_i$. Let $s'$ denote the lowest segment that is not intersected by $s$ (segments are ordered by the $y$-coordinates of the left endpoint). We compute the union of all $\cP_i(j,u)$ by merging the corresponding trees. Then we split the merged tree and distribute segments of $\cup\cP_i(j,u)$ among $s$ and $s'$. This procedure is executed for 
all $j$, $1\le j\le B^{\delta}$. It takes $O(\log n)$ I/Os to split or merge two weighted 
trees. A new segment intersects $O(B^{2\delta}\log_B n)$ up-bridges. Hence the total time 
needed to update all trees portions is $O(B^{\delta}\cdot (B^{2\delta}\log_B n) \cdot \log n)=O(B^{3\delta}\log^2 n)$. 

Let $u_p$ denote the parent node of $u$. We also need to update the weights of the segment 
$s$ and segments $s_i$ in $AC(u_p)$. The weight of a segment $s$ is equals to the total weight of  FINISH.

Background process?

Now we describe the search procedure in a list $AC(u)$. Suppose that we know $b_n(u_{i-1})$.
For $i>1$. The search in $AC(u_1)$ is

We also modify the update procedure for segments that are used as bridge segments.  Suppose that we insert a new segment $s$ that will be used as a down-bridge segment in $AC(u)$ and as an up-bridge in the child $u_j$ of $u$.  We find the portion $\cP(u)$  and the leaf of $\cP(u)$ where $s$ must be stored. When we delete a segment $s$ used as a down-bridge, we mark $s$ as deleted but we keep it in $AC(u)$ and $AC(u_j)$.  An update involving a bridge segment also leads to a change of weight of some other bridge segments. Additionally some tree $\cP_k(u_j)$ can be merged (resp. split).    A complete description  will be given below. The total cost of inserting or deleting a bridge segment  is $O(B^{2\delta}\log^3 n)$ I/Os. 

The cost of an update for a bridge segment is distributed among many insertions or deletions of truncated segments.  We run a background process that satisfies the following: (i) each set $C'_{fl}(u)$ is updated $O(1)$ times per $\Theta(B^{2\delta}\log^3 n)$ updates of $C_{fl}(u)$ (ii) we re-build each  block of $C_{fl}$ after at most $(1/2)B^{2\delta}\log^4 n$ deletions or insertions. When a block is re-built we select the middle segment as a representative segment.  The background process guarantees us that every segment intersects at most $O(B^{2\delta}\log_B n)$ bridge segments and  that each bridge segment is intersected by $O(B^{2\delta}\log^4 n)$ other segments and $O(B^{2\delta}\log n)$ bridge segments. Of course, two segments can intersect only if at least one them is marked as deleted.

When a new segment $s$ is inserted into $AC(u)$, we identify all bridge segments that are intersected by $s$. If there is a tree $\cP_k(u)$  associated to a segment

Every marked segment is intersected by $O(B^{\delta}\log^3 n)$ segments in $AC(u)$. Some of the intersecting segments can also be bridge segments. Hence we also need to modify the weighting scheme. 

We will say that a truncated segment $s$.  When a segment $s$ is inserted, we insert it into

}

\section{Faster Insertions}
\label{sec:insfast}
When a new segment $s$ is inserted into our data structure, we need to find the position of $s$ in $O(\log_B n)$ lists $AC(u)$ 
(to be precise, we need to know the portion  $\cP(u)$ of $AC(u)$ that contains $s$). When positions of $s$ in $AC(u)$ are known, we can finish the insertion in $O(\log_B n)$ I/Os. In order to speed-up  insertions, we use the multi-colored segment tree of Chan and Nekrich~\cite{ChanN15}. Segments in lists $C(u)$ are assigned colors $\chi$, so that the total number of different colors is $O(\log H)$ where $H=O(\log_B n)$ is the height of the segment tree.  Let $C_{\chi}(u)$ denote the set of segments of color 
$\chi$ in $C(u)$. We apply the technique of Sections~\ref{sec:plstat}-~\ref{sec:plfullydyn} to each color separately. That is, we create augmented lists $AC_{\chi}(u)$ and construct weighted search trees $\cP_{\chi}(u)$ for each color separately. The query cost is increased by factor $O(\log H)$, the number of colors.  The deletion cost is also increased by $O(\log H)$ factor because we update the data structure for each color separately. When a new segment $s$ is inserted, we insert it into some lists $AC_{\chi_i}(u_i)$ where $u_i$ is the node such that $s$ spans $u_i$ but does not span its parent and $\chi_i$ is some color (the same segment can be assigned different colors $\chi_i$ in different nodes $u_i$). We can find the position of $s$ in all $AC_{\chi_i}(u_i)$ with  $O(\log_B n\log H + H\cdot t_{\mathtt{usf}})=O(\log_B n\log\log_B n)$ I/Os where $t_{\mathtt{usf}}=O(\log\log_B n)$ is the query cost in a union-split-find data structure in the external memory model.  See~\cite{ChanN15} for a detailed description. 

  \begin{lemma}
\label{lemma:fastupd}
     If $B>\log^8 n$, then there exists an $O(n \log_B n)$ space data structure that supports vertical ray shooting  queries on a dynamic set of non-intersecting segments in $O(\log_B n(\log\log_B n)^2)$ I/Os. Insertions and deletions of segments can be  supported in $O(\log_B n\log\log_B n)$ amortized I/Os. 
  \end{lemma}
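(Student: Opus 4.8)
The plan is to take the fully-dynamic structure of Lemma~\ref{lemma:slowupdate} as a black box and attack its only bottleneck. In that structure, an insertion of a segment $s$ must, for each of the $O(\log_B n)$ nodes $u$ with $s\in C(u)$, first \emph{locate} the portion $\cP(u)$ of $AC(u)$ into which $s$ falls; a straightforward search costs $\Theta(\log_B n)$ per node and hence $\Theta(\log_B^2 n)$ overall, whereas the subsequent per-node work is only $O(1)$ amortized. Everything else — the weighted telescoping search used for queries, the buffered flushing, the bridge maintenance, and the list-CUSF queries — already meets an $O(\log_B n\log\log_B n)$ budget. So the whole task reduces to performing the ``location phase'' of an insertion for all the relevant catalogs in $O(\log_B n\log\log_B n)$ I/Os, while blowing up the query and deletion costs by at most an $O(\log\log_B n)$ factor.

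To achieve this I would import the multi-colored segment tree of Chan and Nekrich~\cite{ChanN15}. Colour every segment of $C(u)$ so that only $O(\log H)=O(\log\log_B n)$ colours occur (here $H=O(\log_B n)$ is the height of $\cT$), writing $C_\chi(u)$ for the colour-$\chi$ class of $C(u)$, and build one independent copy of the entire Section~\ref{sec:plstat}--\ref{sec:plfullydyn} machinery on each family $\{C_\chi(u)\}_u$: augmented catalogs $AC_\chi(u)$, bridges, weights, biased trees $\cP_\chi(u)$, the $\Max$/$\Min$/$\Nav$ sets and priority-search-tree buffers, insertion and deletion buffers, and list-CUSF structures. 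Because the colours \emph{partition} each $C(u)$, all invariants of Lemma~\ref{lemma:slowupdate} hold per colour verbatim — in particular the weight-telescoping bound $W_0\le n$ holds in each colour-structure, so a weighted search there still costs $O(\log_B n)$ — and the total space stays $O(n\log_B n)$ since $\sum_\chi\sum_u|AC_\chi(u)|=O(\sum_u|AC(u)|)=O(n\log_B n)$. On top of these I would keep the auxiliary fractional-cascading / union-split-find navigation structure of~\cite{ChanN15}, which uses linear space.

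For a query I run the weighted telescoping search of Lemma~\ref{lemma:slowupdate} along the query path in each of the $O(\log\log_B n)$ colour-structures and return the lowest successor found; this multiplies the $O(\log_B n\log\log_B n)$ cost of Lemma~\ref{lemma:slowupdate} by the number of colours, giving $O(\log_B n(\log\log_B n)^2)$. A deletion is performed in every colour-structure separately, so its amortized cost becomes $O(\log\log_B n)$ times the $O(\log_B n)$ of Lemma~\ref{lemma:slowupdate}, i.e.\ $O(\log_B n\log\log_B n)$. For an insertion the colouring pays off: the positions of $s$ in all catalogs $AC_{\chi_i}(u_i)$ it touches are found by a single traversal of the insertion path which, using fractional cascading across the $O(\log\log_B n)$ colours, costs $O(\log_B n\log H)=O(\log_B n\log\log_B n)$, plus $O(H)$ union-split-find queries at $t_{\mathtt{usf}}=O(\log\log_B n)$ I/Os each, again $O(\log_B n\log\log_B n)$; once the portions are known, the $O(\log_B n)$ individual insertions cost $O(1)$ amortized apiece by the analysis of Lemma~\ref{lemma:slowupdate}. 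Hence both updates run in $O(\log_B n\log\log_B n)$ amortized I/Os, and the space is $O(n\log_B n)$, as claimed.

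The step I expect to be the genuine obstacle is establishing that the $O(\log H)$-colouring of~\cite{ChanN15} is really compatible with running fractional cascading along a single root-to-leaf path after only $O(H)$ union-split-find jumps, and that the associated navigation structure can be updated in $O(1)$ amortized I/Os per affected node so as not to re-introduce a $\log_B n$ factor into insertions; this is exactly the part that must be borrowed from~\cite{ChanN15}, and the rest is the routine verification that each per-colour instance of the Lemma~\ref{lemma:slowupdate} structure continues to satisfy its invariants and cost bounds.
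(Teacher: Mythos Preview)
Your proposal is correct and matches the paper's own argument essentially line for line: import the $O(\log H)$-colouring of Chan--Nekrich, run one copy of the Lemma~\ref{lemma:slowupdate} machinery per colour (multiplying query and deletion costs by $O(\log\log_B n)$), and use the Chan--Nekrich navigation to locate all insertion portions in $O(\log_B n\log H + H\cdot t_{\mathtt{usf}})=O(\log_B n\log\log_B n)$ I/Os. The obstacle you flag --- that the colouring really permits this fast location step --- is precisely what the paper defers to~\cite{ChanN15} as well.
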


\section{Missing Details}
\label{sec:smallB}
Using the method from~\cite{ChanN15} we can reduce the space usage of our data structure to linear at the cost of increasing the query and update complexity by $O(\log\log_B n)$ factor.The resulting data structure supports queries in $O(\log_B n(\log\log_B n)^2)$ I/Os and updates in $O(\log_B n(\log\log_B n)^3)$ amortized I/Os.  \shlongver{Details will be provided in the full version.}{See Section~\ref{sec:space} for a more detailed description.} 

In our exposition we assumed for simplicity that the tree $\cT$ does not change, i.e., the set of $x$-coordinates of segment endpoints is fixed and known in advance. To support insertions of new $x$-coordinate, we can replace the static tree $\cT$ with a weight-balanced tree with node degree $\Theta(r)=\Theta(B^{\delta})$. 
We also assumed that the block size $B$ is large, $B>\log^8 n$. 
If $B\le \log^8 n$,  the linear-space internal memory data structure~\cite{ChanN15} achieves $O(\log n (\log\log n)^2)=O(\log_B n (\log\log_B n)^3)$ query cost  and $O(\log n\log\log n)=O(\log_B n (\log\log_B n)^2)$ update cost because 
$\log n=O(\log_B n\log\log_B n)$ and $\log\log n =O(\log\log_B n)$ for $B\le \log^8 n$. Thus we obtain our main result. 
\begin{theorem}
  \label{theor:main}
 There exists an $O(n)$ space data structure that supports vertical ray shooting queries on a dynamic set of $n$ non-intersecting segments in $O(\log_B n(\log\log_B n)^3)$ I/Os. Insertions and deletions of segments are supported in $O(\log_B n(\log\log_B n)^2)$ amortized I/Os. 
\end{theorem}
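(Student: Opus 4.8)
The plan is to obtain the theorem from Lemma~\ref{lemma:fastupd} by a space reduction, a fix for the base tree, and a case split on $B$. Lemma~\ref{lemma:fastupd} already supplies, for $B>\log^8 n$, a structure with query cost $O(\log_B n(\log\log_B n)^2)$ and amortized update cost $O(\log_B n\log\log_B n)$, but it uses $O(n\log_B n)$ space and assumes the tree $\cT$ fixed in advance. First I would reduce the space to $O(n)$; then I would let the set of $x$-coordinates change; and finally I would dispose of the assumption $B>\log^8 n$.

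For the space reduction I would port the scheme of Chan and Nekrich~\cite{ChanN15} to external memory (this is what Section~\ref{sec:space} does): instead of storing a trimmed copy of each segment in full in all $O(\log_B n)$ catalogs $AC(u)$ on its path, keep a full copy in only $O(1)$ of them and, in each remaining catalog, retain merely a sampled set of representative segments (one per group of $\Theta(\polylog n)$), recovering the non-representatives from catalogs further down the tree. With the sampling rate chosen so that a segment appears, in expectation, in $O(1)$ catalogs, the total space drops to $O(n)$. The price is that a query, at each node where only representatives are kept, must additionally descend an auxiliary structure spanning the $O(\log\log_B n)$ skipped levels before it reaches the true segments, which multiplies both the query cost and the amortized update cost of Lemma~\ref{lemma:fastupd} by a factor $O(\log\log_B n)$; this gives query cost $O(\log_B n(\log\log_B n)^3)$ and amortized update cost $O(\log_B n(\log\log_B n)^2)$. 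The weighted-telescoping analysis of Section~\ref{sec:plstat} survives unchanged, since the bound $W_0\le n$ and the inequality $\omega_i\ge W_{i+1}/d$ follow solely from the augmented-catalog invariants, which the sampling is designed to preserve. To let $\cT$ change, I would replace it by a weight-balanced tree of degree $\Theta(B^{\delta})$ and rebuild the affected subtrees on node splits and merges; a standard analysis as in~\cite{ArgeV04,ArgeBR12} charges the rebuilding cost to the $\Theta(\text{weight})$ updates occurring between consecutive rebalancings of a node, so the amortized update cost stays within the stated bound and the height remains $O(\log_B n)$.

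For $B\le\log^8 n$ I would instead use the linear-space internal-memory data structure of~\cite{ChanN15}, which answers queries in $O(\log n(\log\log n)^2)$ time and supports updates in $O(\log n\log\log n)$ amortized time, with each operation costing $O(1)$ I/Os in the external-memory model. Since $B\le\log^8 n$ forces $\log B=O(\log\log n)$, we get $\log n=\log_B n\cdot\log B=O(\log_B n\log\log_B n)$ and $\log\log n=\Theta(\log\log_B n)$, so these costs become $O(\log_B n(\log\log_B n)^3)$ and $O(\log_B n(\log\log_B n)^2)$ respectively, exactly the claimed bounds. Combining the two ranges of $B$ yields the theorem.

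The step I expect to be hardest is the space reduction. One must verify that replacing the full catalogs by sampled representatives still preserves invariants~(i)--(iii) of the augmented catalogs and---more delicately---the weight invariants that collapse the telescoping sum to $O(\log_B n)$, while the auxiliary structure bridging each $O(\log\log_B n)$-level stretch remains searchable and updatable within the promised $O(\log\log_B n)$ overhead and stays compatible with the insertion/deletion buffers and the biased trees $\cP(u)$. Reconciling buffering with this sampled indirection, and in particular bounding the background process that maintains the representative sets (and keeps lazily deleted bridge segments from inducing spurious intersections), is the technical crux; it is exactly the part worked out for internal memory in~\cite{ChanN15} and adapted in Section~\ref{sec:space}.
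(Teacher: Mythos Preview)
Your high-level plan---reduce the space of Lemma~\ref{lemma:fastupd} via the method of~\cite{ChanN15}, replace $\cT$ by a weight-balanced tree of degree $\Theta(B^{\delta})$, and handle $B\le\log^8 n$ with the internal-memory structure of~\cite{ChanN15}---is exactly the paper's outline, and your treatment of the last two items matches the paper almost verbatim.

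The space-reduction step, however, is not what you describe. The paper (Appendix~\ref{sec:space}) does \emph{not} sample the catalogs $AC(u)$. It invokes two black-box lemmas ported from~\cite{ChanN15}: Lemma~\ref{lemma:spaceCN15} is the decomposable-search framework that combines any fully-dynamic structure on $n/z$ elements with a linear-space \emph{deletion-only} structure on $n$ elements, at the price of an extra $O(\log z)$ factor on queries and updates; Lemma~\ref{lemma:space2CN15} reduces deletion-only ray shooting on arbitrary segments to deletion-only ray shooting on \emph{horizontal} segments. The paper then plugs in its own linear-space horizontal-segment structure (Appendix~\ref{sec:horiz}, Theorem~\ref{theor:pl}) and takes $z=\Theta(\log_B n)$, so that the $O((n/z)\log_B n)$ space of Lemma~\ref{lemma:fastupd} collapses to $O(n)$ and the extra $\log z=O(\log\log_B n)$ factor yields exactly the stated bounds (Lemma~\ref{lemma:linspace}).

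Your sampling sketch---``keep a full copy in only $O(1)$ catalogs, sample the rest, recover non-representatives from catalogs further down the tree''---does not obviously go through in this setting: a segment placed in $C(u)$ spans some children of $u$ but not $u$ itself, so its trimmed copy has no natural home in any descendant's catalog, and there is nowhere ``further down'' to recover it from. Whatever representative-based scheme you have in mind, it is not the one behind Lemmas~3.1--3.2 of~\cite{ChanN15}, and the preservation of the telescoping invariants that you assert would have to be argued from scratch rather than inherited. The actual reduction sidesteps all of this by never touching the internals of Lemma~\ref{lemma:fastupd}; it simply runs that structure on a $1/\log_B n$ fraction of the input and covers the rest with $O(\log\log_B n)$ deletion-only copies.
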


\bibliographystyle{plain}
\bibliography{dpl,extmem}

\longver{
\appendix
\section{Saving Space}
\label{sec:space}
 We use another method from~\cite{ChanN15} to reduce the space usage of the data structure in Lemma~\ref{lemma:fastupd} to linear. 

\begin{lemma}[\cite{ChanN15}, Lemma 3.1]\label{lemma:spaceCN15}
Consider a decomposable search problem, where
(i)~there is an $S(n)$-space fully dynamic
data structure with $Q(n)$ query cost and $U(n)$ update cost,
and (ii)~there is an $S_D(n)$-space deletion-only
data structure with $Q_D(n)$ query cost, $U_D(n)$ update cost,
and $P_D(n)$ preprocessing cost.
Then there is an $O(S(n/z) + S_D(n))$-space fully dynamic
data structure with $O(Q(n/z) + Q_D(n)\log z)$ query cost and
$O(U(n/z) + U_D(n)+(P_D(n)/n)\log z)$ amortized update cost
for any given parameter $z$ (assuming that $P_D(n)/n$ is
nondecreasing).
\end{lemma}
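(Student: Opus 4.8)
The structure of Lemma~\ref{lemma:fastupd} already meets the time bounds we are after up to a $\Theta(\log\log_B n)$ factor, but it occupies $O(n\log_B n)$ space and rests on two simplifying assumptions: that the $x$-coordinate tree $\cT$ is static and that $B>\log^8 n$. My plan therefore has three parts: (1) invoke the space-reduction transform of Lemma~\ref{lemma:spaceCN15} to bring the space down to $O(n)$, paying only a $\Theta(\log\log_B n)$ factor in query and update time; (2) make $\cT$ dynamic; and (3) dispose of the hypothesis $B>\log^8 n$. Note that vertical ray shooting is a decomposable search problem — the successor of $q$ in $S_1\cup S_2$ is the lower of the successors in $S_1$ and $S_2$ — so Lemma~\ref{lemma:spaceCN15} is applicable.

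For part (1) I would take the structure of Lemma~\ref{lemma:fastupd} as the fully dynamic component, so $S(n)=O(n\log_B n)$, $Q(n)=O(\log_B n(\log\log_B n)^2)$, and $U(n)=O(\log_B n\log\log_B n)$. For the deletion-only component I would run the construction of Sections~\ref{sec:plstat}--\ref{sec:plfullydyn} with insertions switched off: segments only ever enter the deletion buffers $\cD(\nu)$, so the weighted telescoping search invariants are never disturbed, and the whole structure is discarded and rebuilt once a constant fraction of the segments has been deleted. Following~\cite{ChanN15}, subtrees of $\cT$ storing only $O(B)$ segments are collapsed into $O(1)$ blocks, so this deletion-only variant uses $S_D(n)=O(n)$ space, answers queries in $Q_D(n)=O(\log_B n(\log\log_B n)^2)$ I/Os, supports deletions in $U_D(n)=O(\log_B n\log\log_B n)$ amortized I/Os, and is built in $P_D(n)=O(n\log_B n\log\log_B n)$ I/Os. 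Taking the transform parameter $z=\Theta(\log_B n)$ makes $S(n/z)=O\bigl((n/\log_B n)\log_B n\bigr)=O(n)$ and $\log z=\Theta(\log\log_B n)$, so Lemma~\ref{lemma:spaceCN15} yields an $O(n)$-space fully dynamic structure with query cost $O\bigl(Q(n/z)+Q_D(n)\log z\bigr)=O(\log_B n(\log\log_B n)^3)$ and amortized update cost $O\bigl(U(n/z)+U_D(n)+(P_D(n)/n)\log z\bigr)=O(\log_B n(\log\log_B n)^2)$, the update bound being dominated by the $(P_D(n)/n)\log z$ term. This settles the theorem for $B>\log^8 n$ with a static $\cT$.

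For part (2) I would replace the static $\cT$ by a weight-balanced tree of node degree $\Theta(r)=\Theta(B^{\delta})$; a node split or merge triggers a rebuild of the affected subtree, which the standard weight-balanced amortization charges to $\Omega(\text{subtree size})$ updates, leaving both bounds unchanged. For part (3), when $B\le\log^8 n$ I would instead use the linear-space internal-memory structure of~\cite{ChanN15}, which supports queries in $O(\log n(\log\log n)^2)$ time and updates in $O(\log n\log\log n)$ amortized time; in this regime $\log n=O(\log_B n\log\log_B n)$ and $\log\log n=O(\log\log_B n)$, so these become $O(\log_B n(\log\log_B n)^3)$ and $O(\log_B n(\log\log_B n)^2)$, matching the claim. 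Running the structure of parts (1) and (2) when $B>\log^8 n$ and the internal-memory structure otherwise proves the theorem. The step I expect to be the main obstacle is supplying the deletion-only component of Lemma~\ref{lemma:spaceCN15}: one must argue carefully that the augmented catalogs and weighted trees of Sections~\ref{sec:plstat}--\ref{sec:plfullydyn} really can be laid out in $O(n)$ space once insertions are forbidden, without inflating the query cost past $O(\log_B n(\log\log_B n)^2)$ or the amortized rebuild cost past what the transform can absorb.
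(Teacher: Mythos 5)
Your proposal does not prove the statement it was asked to prove. Lemma~\ref{lemma:spaceCN15} is itself the black-box space-reduction transform for decomposable search problems; what you have written is an argument that \emph{applies} this transform (together with the dynamization of $\cT$ and the small-$B$ case) to derive the paper's final linear-space result, i.e.\ essentially the proof of Lemma~\ref{lemma:linspace} and Theorem~\ref{theor:main}. Nowhere do you establish the transform itself: you never explain where the $O(S(n/z)+S_D(n))$ space bound comes from, why the query cost picks up a $Q_D(n)\log z$ term, or why the amortized update cost contains $(P_D(n)/n)\log z$. Those bounds are exactly what a proof of the lemma must deliver, and they force a specific construction: keep the $O(n/z)$ most recently inserted elements in the fully dynamic structure (giving the $S(n/z)$, $Q(n/z)$, $U(n/z)$ terms), distribute the remaining elements among $O(\log z)$ deletion-only structures maintained by a logarithmic/global-rebuilding scheme (giving the $S_D(n)$ space, the $Q_D(n)\log z$ query term since decomposability lets one query each piece and combine, and the $(P_D(n)/n)\log z$ amortized rebuild charge, using that $P_D(n)/n$ is nondecreasing), with deletions forwarded to whichever structure currently holds the element. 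None of this machinery appears in your write-up.

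For the record, the paper does not reprove this lemma either: it imports it verbatim from Chan and Nekrich~\cite{ChanN15} and remarks only that the extension from internal to external memory is straightforward (the argument is purely combinatorial and oblivious to the machine model). So an acceptable answer here was either to reproduce the partition-and-rebuild construction sketched above, or to observe explicitly that the lemma is model-independent and carries over from~\cite{ChanN15} unchanged. The material you did write belongs to the proof of the paper's concluding results, where it is broadly consistent with Section~\ref{sec:smallB} and Appendix~\ref{sec:space}, but it leaves the actual statement of Lemma~\ref{lemma:spaceCN15} unproven.
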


\begin{lemma}[\cite{ChanN15}, Lemma 3.1]\label{lemma:space2CN15}
If there is a deletion-only
data structure for vertical ray shooting queries
for $n$ \emph{horizontal} segments
with $S\orth(n)$ space, $Q\orth(n)$ query cost, $U\orth(n)$ update cost,
and $P\orth(n)$ preprocessing cost,
then there is a deletion-only
data structure for vertical ray shooting queries
for $n$ arbitrary non-intersecting segments
with $S_D(n)=S\orth(n)+O(n)$ space,
$Q_D(n)=Q\orth(n)+O(\log n)$ query cost,
$U_D(n)=U\orth(n)+O(1)$ update cost,
and $P_D(n)=P\orth(n)+O(n\log_B n)$ preprocessing cost.
\end{lemma}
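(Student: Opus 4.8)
The plan is to reduce, in the deletion-only setting, to horizontal segments by exploiting that deletions never introduce new segments: the global total order $\prec$ of the initial $n$ segments (Lemma 3 of \cite{ArgeVV07}, computable in $O((n/B)\log_{M/B}n)=O(n\log_B n)$ I/Os) is fixed forever, and every segment that is ever alive belongs to this initial set. In preprocessing I would (i) compute $\prec$ and assign each initial segment $s$ an integer rank $\rho(s)\in\{1,\dots,n\}$; (ii) build a \emph{static}, linear-space, $O(\log_B n)$-query vertical-ray-shooting structure $\mathcal{R}_0$ on the full initial set (e.g.\ \cite{ArgeDT03}), which will never be updated; and (iii) build the given horizontal structure on the $n$ horizontal segments $\bar s=(\text{$x$-span of }s)\times\{\rho(s)\}$, which are pairwise non-intersecting since the $\rho(s)$ are distinct. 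This costs $S\orth(n)+O(n)$ space and $P\orth(n)+O(n\log_B n)$ preprocessing. A deletion of $s$ just deletes $\bar s$ from the horizontal structure ($U\orth(n)$ I/Os) plus $O(1)$ bookkeeping; $\mathcal{R}_0$ is left untouched, so it always describes the initial arrangement.

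For a query $q=(q_x,q_y)$ I would first query $\mathcal{R}_0$ in the $-y$ direction to obtain $s^-_0$, the highest \emph{initial} segment that stabs the vertical line $x=q_x$ and lies below $q$ (if none, set $\rho(s^-_0):=0$); this costs $O(\log_B n)$, hence $O(\log n)$. Put $\tilde y:=\rho(s^-_0)+\tfrac12$ and query the horizontal structure for the lowest alive $\bar s$ whose $x$-span contains $q_x$ and whose height exceeds $\tilde y$; report the corresponding $s$. The correctness claim to verify is: an alive segment $s$ stabs $x=q_x$ and lies above $q$ iff $\rho(s)>\tilde y$. The ``only if'' direction holds because such an $s$ is an initial segment above $q$, hence above $s^-_0$ on the line $x=q_x$, and the restriction of $\prec$ to segments meeting a common vertical line is exactly the vertical order on that line, so $\rho(s)>\rho(s^-_0)$. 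The ``if'' direction uses that $s^-_0$ is the \emph{highest} initial segment below $q$ on $x=q_x$: any initial segment meeting $x=q_x$ with larger rank is at least as high as $q$ there, i.e.\ above $q$. Since every alive segment is initial, this produces exactly the alive vertical-ray-shooting answer, at cost $Q\orth(n)+O(\log n)$.

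The main obstacle — really the only nontrivial point — is keeping $\mathcal{R}_0$ within the allotted budget: it must use only $O(n)$ extra words (so we cannot afford the $O(n\log_B n)$-space structure of Lemma~\ref{lemma:static}) while still answering a downward ray-shooting query in $O(\log n)$ I/Os and being buildable in $O(n\log_B n)$ I/Os, which is exactly what a known static linear-space external structure such as \cite{ArgeDT03} provides. Everything else is routine: distinctness of the ranks, the $O(1)$ back-pointer from $\bar s$ to $s$, and the standard general-position assumption that $q_x$ is not an endpoint abscissa (otherwise perturb). Summing the four budgets gives the stated bounds $S_D(n)=S\orth(n)+O(n)$, $Q_D(n)=Q\orth(n)+O(\log n)$, $U_D(n)=U\orth(n)+O(1)$, and $P_D(n)=P\orth(n)+O(n\log_B n)$.
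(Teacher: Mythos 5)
Your proposal is correct and is essentially the argument the paper relies on: it is the external-memory version of Chan and Nekrich's Lemma~3.2, namely fix the total order $\prec$ of the initial segments once (Lemma~3 of \cite{ArgeVV07}), replace each segment by a horizontal segment at height equal to its rank, and use a static linear-space $O(\log_B n)$-query structure on the initial set to translate the query point into a rank before querying the horizontal structure. The paper gives no independent proof (it cites \cite{ChanN15} and notes the extension is straightforward), and your cost accounting and the two-directional correctness argument for the rank threshold match that reduction.
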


The two above lemmata are obtained by a straightforward extension of Lemmata 3.1 and 3.2 from~\cite{ChanN15} to the external memory model. We will describe in Section~\ref{sec:horlinspace} a data structure that supports vertical ray shooting queries on a set of horizontal segments in $O(\log_B n \log\log_B n)$ I/Os and updates within the same amortized bounds. If we plug this result into Lemma~\ref{lemma:space2CN15}, we obtain a deletion-only data structure for ray shooting queries in a set of $n$ arbitrary non-intersecting segments with $Q_D(n)=O(\log_B n \log\log_B n)$ query cost, $U_D(n)=O(\log_B n\log\log_B n)$ deletion cost, and $P_D(n)=O(n\log_B n\log\log_B n)$ preprocessing cost.
Recall that the structure of Lemma~\ref{lemma:fastupd} has query cost  $Q(n)=O(\log_B n (\log\log_B n)^2)$, update cost $U(n)=O(\log_B n \log\log_B n)$ I/Os amortized, and space usage $O(n\log_B n)$. We apply Lemma~\ref{lemma:spaceCN15} to the structure of Lemma~\ref{lemma:fastupd} and the deletion-only structure described above. We  obtain the following lemma.
\begin{lemma}
  \label{lemma:linspace}
\tolerance=750
     If $B>\log^8 n$, then there exists an $O(n)$ space data structure that supports vertical ray shooting queries on a dynamic set of $n$ non-intersecting segments in $O(\log_B n(\log\log_B n)^3)$ I/Os. Insertions and deletions of segments are supported in 
$O(\log_B n(\log\log_B n)^2)$ amortized I/Os. 
\end{lemma}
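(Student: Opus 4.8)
The plan is to combine the $O(n\log_B n)$-space fully dynamic structure of Lemma~\ref{lemma:fastupd} with a space-efficient deletion-only structure by means of the generic reduction of Lemma~\ref{lemma:spaceCN15}, choosing the trade-off parameter so that the space collapses to linear.

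First I would fix the deletion-only building block. Section~\ref{sec:horlinspace} supplies a deletion-only data structure for vertical ray shooting among $n$ \emph{horizontal} segments (for $B>\log^8 n$) with $S\orth(n)=O(n)$, $Q\orth(n)=O(\log_B n\log\log_B n)$, $U\orth(n)=O(\log_B n\log\log_B n)$ amortized, and preprocessing $P\orth(n)=O(n\log_B n\log\log_B n)$. Feeding this into Lemma~\ref{lemma:space2CN15} yields a deletion-only structure for an \emph{arbitrary} set of $n$ non-intersecting segments with $S_D(n)=O(n)$, $Q_D(n)=O(\log_B n\log\log_B n)$, $U_D(n)=O(\log_B n\log\log_B n)$ amortized, and $P_D(n)=O(n\log_B n\log\log_B n)$; here $P_D(n)/n=O(\log_B n\log\log_B n)$ is nondecreasing, so the hypothesis of Lemma~\ref{lemma:spaceCN15} is satisfied.

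Then I would instantiate Lemma~\ref{lemma:spaceCN15} with the structure of Lemma~\ref{lemma:fastupd} in role~(i) --- for which $S(n)=O(n\log_B n)$, $Q(n)=O(\log_B n(\log\log_B n)^2)$, $U(n)=O(\log_B n\log\log_B n)$ amortized --- the deletion-only structure above in role~(ii), and the parameter $z=\Theta(\log_B n)$. Using $\log_B(n/z)=\Theta(\log_B n)$ and $\log z=O(\log\log_B n)$, this gives space $S(n/z)+S_D(n)=O\big(\frac{n}{\log_B n}\cdot\log_B n\big)+O(n)=O(n)$; query cost $Q(n/z)+Q_D(n)\log z=O(\log_B n(\log\log_B n)^2)$, which is within the asserted $O(\log_B n(\log\log_B n)^3)$; and amortized update cost $U(n/z)+U_D(n)+(P_D(n)/n)\log z=O(\log_B n(\log\log_B n)^2)$. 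That is exactly the structure claimed.

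The genuine work of this lemma lives entirely in the horizontal-segment deletion-only structure of Section~\ref{sec:horlinspace}; given that, the rest is a mechanical composition of two black-box reductions, and the only real obstacle is bookkeeping: picking $z=\Theta(\log_B n)$ --- the smallest value for which $S(n/z)=O((n/z)\log_B(n/z))$ becomes linear while $\log z$ stays $O(\log\log_B n)$ --- and verifying the side conditions of Lemmata~\ref{lemma:spaceCN15} and~\ref{lemma:space2CN15} (monotonicity of $P_D(n)/n$, and $B>\log^8(n/z)$, which follows from $B>\log^8 n$).
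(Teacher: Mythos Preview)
Your proof is correct and follows the same approach as the paper: apply Lemma~\ref{lemma:space2CN15} to the horizontal-segment structure of Section~\ref{sec:horlinspace} to obtain the deletion-only black box, then invoke Lemma~\ref{lemma:spaceCN15} with that and the structure of Lemma~\ref{lemma:fastupd}. You are in fact more explicit than the paper, which omits the choice of $z$ and the bookkeeping; your observation that the query bound actually comes out to $O(\log_B n(\log\log_B n)^2)$, one $\log\log_B n$ factor better than stated, is also correct.
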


\section{Ray Shooting on Horizontal Segments}
\label{sec:horiz}
In this section we describe a data structure for vertical ray shooting queries in a dynamic set of \emph{horizontal} segments.  A data structure for this problem can be used to answer dynamic point location queries in an orthogonal subdivision. The special case of horizontal segments is much simpler than the ray shooting among arbitrary segments because it is easy to maintain the order among horizontal segments. Our solution is based on a colored variant of the predecessor search, described in Section~\ref{sec:col}. 
We describe how this data structure can be combined with the segment tree to answer ray shooting queries in Section~\ref{sec:pl}. 
We show that  the space usage of our data structure can be reduced from $O(n\log_ B n)$ to $O(n)$ in Section~\ref{sec:horlinspace}. 
\subsection{Colored Predecessor Search in External Memory}
\label{sec:col}
In the colored predecessor searching problem, every element $e=(v_e,I_e)$ has a value  $v_e$ and color interval $I_e=[c_1,c_2]$, $1\leq c_1 \leq c_2 \leq C$. 
We assume that color intervals are disjoint for elements with the 
same value, i.e., if $v_a=v_b$ for two 
elements $a$ and $b$, then $I_a\cap I_b =\emptyset$. 
 The answer to a colored predecessor  query  $(v_q,\cC_q)$ for $v_q\in [1,V]$ and 
$\cC_q\subset [1,C]$ is the largest (with respect to its value $v_e$) 
element $e$, such that $v_e\leq v_q$ and $\cC_q\cap I_e\not=\emptyset$.
We say than an element $e$ is colored with a color $c$ if $c\in I_e$. 
First, we show how colored search queries for a small set of elements can be
 answered with a constant number of I/Os. Then, we will describe 
a data structure for an arbitrarily large set of elements.
\begin{lemma}\label{lemma:colsmall}
Let $\delta>0$ and $0<f\leq 1-\delta$.
 The colored predecessor searching problem for a set $S_l$, such that colors of  
elements belong to $[1,C]$ for $C\leq B^{f}$, 
 can be solved in $O(\log_B |S_l|)$ I/Os  
using a $O(|S_l|)$ space data structure that supports 
updates in $O(\log_B |S_l|)$ I/Os.
\end{lemma}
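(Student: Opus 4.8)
The statement to prove is Lemma~\ref{lemma:colsmall}: colored predecessor search for a set $S_l$ whose colors lie in $[1,C]$ with $C\le B^f$ and $f\le 1-\delta$ can be solved in $O(\log_B|S_l|)$ I/Os with linear space and $O(\log_B|S_l|)$-I/O updates. The plan is to build a single $B^{\delta}$-ary B-tree over the elements of $S_l$ sorted by value $v_e$, and to augment each internal node with a compact summary of which colors appear in each child's subtree. Since the tree has fan-out $B^\delta$ and $|S_l|$ leaves, its height is $O(\log_B|S_l|)$, so a single root-to-leaf traversal that reads one block per node costs $O(\log_B|S_l|)$ I/Os; the whole game is to show that the per-node summary fits in $O(1)$ blocks so that the color constraint can be resolved locally at each node without extra I/Os.

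First I would describe the node augmentation. For an internal node $\nu$ with children $\nu_1,\dots,\nu_{B^\delta}$ (ordered by value), I store, for each child $\nu_j$ and each color $c\in[1,C]$, a single bit $b_{j,c}$ indicating whether some element colored $c$ occurs in the subtree of $\nu_j$. This is a $B^\delta\times C$ bit matrix; since $C\le B^f\le B^{1-\delta}$, it has at most $B^\delta\cdot B^{1-\delta}=B$ bits, hence fits in $O(1)$ machine words (a word has $\Theta(\log n)\ge\Omega(1)$ bits, and under the standard external-memory assumption $B$ words hold $\Omega(B)$ bits — more carefully one packs the $B$-bit matrix into $O(B/\log n)$ words, which is at most one block). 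Actually it is cleanest to store, for each color $c$, the list of children indices whose subtree contains color $c$; but the bit-matrix view makes the size bound transparent. The key point is: the entire summary of a node is read with $O(1)$ I/Os along with the node's routing keys.

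Next, the query algorithm for $(v_q,\cC_q)$. I traverse from the root downward, maintaining the invariant that the sought element lies in the subtree currently under consideration (or has already been found). At node $\nu$, using the routing keys I locate the rightmost child $\nu_k$ with $v$-range entirely $\le v_q$ together with the child $\nu_{k+1}$ straddling $v_q$ (standard predecessor navigation). I then scan children $\nu_{k+1},\nu_k,\nu_{k-1},\dots$ in decreasing order, and for each consult the summary to test whether its subtree contains \emph{any} color in $\cC_q$ (for $\nu_{k+1}$ I must be slightly careful and only "trust" the summary as a necessary condition, descending to verify the $v_e\le v_q$ boundary — but this costs only the descent we were going to make anyway); the first child whose subtree both could contain a valid color and (for $j\le k$) lies fully left of $v_q$ is the one we descend into. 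If no child of $\nu$ qualifies, we report failure. Since reading the summary is $O(1)$ I/Os and there are $O(\log_B|S_l|)$ levels, the query cost is $O(\log_B|S_l|)$; at the leaf we scan its $O(B)$ elements in $O(1)$ I/Os to return the answer. Updates (insert/delete an element $(v_e,I_e)$) are the usual B-tree insertion/deletion by $v_e$ in $O(\log_B|S_l|)$ I/Os, followed by fixing the $b_{j,c}$ bits on the root-to-leaf path; a color bit at an ancestor can only flip when a leaf's contribution of that color appears or disappears, and this is detectable bottom-up along the path in $O(1)$ extra I/Os per level, and node splits/merges recompute the affected summaries from the (already in-memory) children summaries in $O(1)$ I/Os. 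Space is $O(|S_l|/B)$ blocks for the tree plus $O(1)$ block per internal node, i.e. $O(|S_l|)$ words overall.

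The main obstacle — and the only nontrivial point — is the boundary child $\nu_{k+1}$ that straddles $v_q$: its subtree may contain elements both $\le v_q$ and $>v_q$, so the color summary is necessary but not sufficient evidence that a valid element lies there. The fix is to treat the descent specially: we carry down an active "boundary flag", and in the boundary subtree the same navigation recurses, always splitting off one straddling child per level; in every non-boundary subtree the summary is exact. This keeps the number of "speculative" descents to one per level, so the total is still $O(\log_B|S_l|)$ I/Os. A secondary technical check is the bit-packing claim $B^\delta\cdot C\le B$ and that $O(1)$ blocks suffice to hold it together with the $B^\delta$ routing keys — this is exactly where the hypothesis $f\le 1-\delta$ is used and it goes through by a direct count.
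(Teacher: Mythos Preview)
Your approach is essentially the paper's: a $B^{\delta}$-ary tree on values with per-node color summaries that fit in $O(1)$ blocks because $C\cdot B^{\delta}\le B$. The one substantive difference is what the summary contains. You store only the bit matrix $b_{j,c}$; the paper additionally stores, for every pair $(j,c)$ with $b_{j,c}=1$, the \emph{minimum and maximum} element of color $c$ in child $j$'s subtree. This extra data is still $O(C\cdot B^{\delta})=O(B)$ words per node, so it fits in $O(1)$ blocks, and it dissolves exactly the boundary-child obstacle you flag: at each node one picks the rightmost child $u_i$ for which some $c\in\cC_q$ has its stored minimum $\le v_q$, and that test is exact, so the descent is a single root-to-leaf path with no speculation.

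Your bit-only variant can be made to work, but the boundary handling is too terse as written. ``One speculative descent per level'' is fine if it means the boundary path is a single root-to-leaf walk, but you still owe an explicit backtracking step: if the boundary leaf contains no valid element, walk back up to the deepest level at which some \emph{left} sibling had a color match, then descend there (where the summary is now exact). That is still $O(\log_B|S_l|)$ I/Os in total, but it should be spelled out; as stated, a reader could parse ``speculative descent'' as a full probe at each level and worry about an $O(\log_B^2|S_l|)$ cost. The paper's min/max augmentation is the cleaner way to sidestep this.
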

\begin{proof}
We sort the elements in $S_l$ by their values (elements with the same value 
are sorted by the smallest colors that belong to their color intervals)
and store them in the leaves of a tree $T_l$.  
Each leaf contains $\Theta(B)$ elements of  $S_l$  and 
each internal node has $\rho=\Theta(B^{\delta})$ children. 
We say that an internal node $u$ contains an element $s$ if $s$ is stored 
in a leaf descendant of $u$. In every internal 
node $u$, we store a table $R_u$: for each $c\in [1,C]$ and each 
$1\leq i\leq \rho$, $R_u[c,i]=1$ iff the $i$-th child of $u$ contains
 at least one 
element with color $c$. For every $(c,i)$ such that $R_u[c,i]=1$, 
we also store the maximal and the minimal elements colored with $c$ that 
belong to the $i$-th child of $v$.
Every table $R_u$ fits into $O(1)$ blocks. There are $O(|S_l|/B)$ internal 
nodes in $T_l$; hence, all $R_u$ 
use $O(|S_l|)$ words of space.

The search for $(x_q,c_q)$ starts at the root of $T_l$. In each visited 
node $u$ of $T$, we identify the rightmost child $u_i$ such that 
 $R_u[c_q,i]=1$ for some $c_q\in \cC_q$ and the minimal element colored 
with $c_q$ in $u_i$ is not greater than $x_q$. If such $u_i$ does not exist, 
then $S_l$ contains 
no element colored with $c_q\in \cC_q$ that is smaller than or equal to $x_q$.
Otherwise, the search continues in $u_i$. 
The height of $T_l$ is 
$O(\log_B |S_l|)$ and the search takes $O(\log_B |S_l|)$ I/Os. 

When a new element is inserted into $S_l$, we insert it into a leaf $u$ of 
$T_l$. Then, we visit all ancestors $u'$ of $u$ and update the tables $R_{u'}$.
The tree $T_l$ can be rebalanced in a standard way. Deletions are processed 
with a symmetric procedure.
\end{proof}

\begin{lemma}\label{lemma:colveb}
Let  $0<f<1/2$.
A colored searching problem for a set of $K$ elements, such that values 
of elements belong to  the universe 
$[1,V]$ and colors belong to the interval $[1,C]$ for 
$C\leq B^{f}$, 
 can be solved in $O(\log\log_B V)$ I/Os  using a 
$O(\max(V,K)\log\log_B V)$ space data structure that supports 
updates in $O(\log\log_B V)$ amortized I/Os.
\end{lemma}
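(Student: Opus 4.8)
The plan is to realize the label \texttt{colveb} literally: build a van~Emde~Boas--style recursive structure over the value universe $[1,V]$, truncated at depth $O(\log\log_B V)$ and combined at the bottom with the small-universe structure of Lemma~\ref{lemma:colsmall}. Recall that in the classical recursion the universe $[1,V]$ is split into $\lceil\sqrt V\rceil$ \emph{clusters} of size $\lceil\sqrt V\rceil$ each; a value $v$ is written $v=(h,l)$ with $h$ its cluster index and $l$ its offset inside the cluster. After $i$ levels of this recursion the universe has size $V^{1/2^i}$, which drops to $O(B)$ once $i\ge\log\log_B V$; at that point we stop recursing and store the (at most $O(B^{1+f})$) elements falling into a leaf cluster in the data structure of Lemma~\ref{lemma:colsmall}, whose query and update cost there is $O(\log_B(B^{1+f}))=O(1)$ I/Os. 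Thus the structure has $O(\log\log_B V)$ levels and an $O(1)$-cost base case, which will give the query and update I/O counts once the book-keeping below is in place.

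At each internal node $\nu$ of the recursion I would store, in $O(1)$ blocks, a colour table recording for every colour $c\in[1,C]$ the minimum and the maximum element of colour $c$ in the subtree of $\nu$ (or $\mathrm{NULL}$); since $C\le B^{f}$ with $f<1/2$, this table occupies $o(B)$ words and can be read or rewritten in $O(1)$ I/Os. As in the uncoloured case, $\nu$ also keeps directly addressed pointers to its non-empty cluster structures (contributing $O(V\log\log_B V)$ space over all levels, plus $O(K)$ for the elements themselves, i.e.\ $O(\max(V,K)\log\log_B V)$ total) and a pointer to its \emph{summary}, which is recursively a colored-predecessor structure over the cluster indices $[1,\lceil\sqrt V\rceil]$, the colour set attached to a cluster being the union of the colour intervals of its elements, stored as a list of maximal sub-intervals. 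A colored query $(v_q=(h,l),\mathcal C_q)$ then follows the standard single-recursive-call-per-level pattern: using the colour table of $\nu$ we first test whether $\max\{\text{max}[c]:c\in\mathcal C_q\}\le v_q$ and return it if so; otherwise, reading the colour table of cluster $h$ we test whether cluster $h$ contains an element of a colour in $\mathcal C_q$ with offset $\le l$, and if so recurse into cluster $h$ on $(l,\mathcal C_q)$; if not, we recurse into the summary on $(h,\mathcal C_q)$ to find the largest non-empty cluster $h'<h$ meeting $\mathcal C_q$ and then read the relevant $\text{max}[c]$ from $h'$'s table. Each level costs $O(1)$ I/Os, for $O(\log\log_B V)$ in total.

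Updates are the mirror image of the classical vEB insert and delete: an insertion of $(v_e=(h,l),I_e)$ updates the colour table of $\nu$ for the colours in $I_e$ (an interval update on a table that fits in $O(1)$ blocks, hence $O(1)$ I/Os), inserts $(l,I_e)$ recursively into cluster $h$, and \emph{only if} cluster $h$ was previously empty for some colour of $I_e$ does it additionally insert the corresponding colour-presence entries into the summary; batching all colours of $I_e$ together keeps this to one recursive call per level. Deletions are symmetric, removing a colour from the summary when a cluster becomes empty for it and recovering the successor of a just-deleted extreme with one extra intra-cluster query, which does not change the asymptotics. Rebalancing of the Lemma~\ref{lemma:colsmall} base structures is standard, and rebuilding the whole structure whenever $K$ changes by a constant factor preserves both the space bound and the $O(\log\log_B V)$ amortized update bound.

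I expect the main obstacle to be the interaction between the \emph{colour intervals} carried by elements and the \emph{single-recursive-call-per-level} discipline that vEB needs: one must argue (i) that at every node the choice between descending into the relevant cluster and descending into the summary can be made from $O(1)$ blocks even though the query carries a whole set $\mathcal C_q$ and elements carry intervals; (ii) that maintaining the merged colour intervals of a cluster inside the summary costs only $O(1)$ amortized per update, with no cascading when an interval endpoint is created or destroyed; and (iii) that the summary's size stays controlled so that both the $O(\log\log_B V)$-level recursion and the $O(\max(V,K)\log\log_B V)$ space bound go through. Once these points are settled, the I/O counts are exactly those of the uncoloured external van~Emde~Boas layout.
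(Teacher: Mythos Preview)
Your high-level plan---a van~Emde~Boas recursion on the value universe with $O(\log\log_B V)$ levels and Lemma~\ref{lemma:colsmall} at the leaves---is exactly the paper's approach, and your query routine is essentially correct. The gap is precisely where you suspected it: obstacle~(ii), the maintenance of merged colour intervals in the summary, does \emph{not} amortize under deletions.

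Here is the bad case. Put into one cluster $h$ elements with pairwise disjoint colour intervals $[1,1],[3,3],\ldots,[2k-1,2k-1]$ (this is allowed: the disjointness hypothesis only constrains elements of equal value, and these can have distinct values inside the same cluster). The merged colour set of $h$ is then $k$ singleton intervals. Now insert an element $e^\ast$ with interval $[1,2k-1]$; the summary entry for $h$ collapses to one interval, costing $k$ summary deletions and one insertion. Delete $e^\ast$; the summary entry splits back into $k$ pieces. Repeating insert/delete of $e^\ast$ gives $\Theta(k)$ summary operations per element operation, and no potential function on the number of intervals can absorb this because the expensive step is a deletion that \emph{increases} the potential. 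Worse, even detecting that the split happens requires knowing which of the colours $2,4,\ldots,2k-2$ are still covered by other elements of $h$; with per-colour counters you can detect it, but you still have to issue $\Theta(k)$ recursive summary updates.

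The paper avoids this entirely with a small trick you are missing: it introduces the set $\cM$ of \emph{interval colours}, one symbol $c_{ij}$ for every possible interval $[i,j]\subseteq[1,C]$. There are only $O(C^2)=O(B^{2f})<B$ such symbols, so tables indexed by them still fit in $O(1)$ blocks. Each element now carries a \emph{single} interval colour $c_{ij}$ instead of a range, and the summary stores $(r,c_{ij})$ iff cluster $r$ contains some element with that exact interval colour. Insertion or deletion of one element touches at most one summary entry---no merging, no splitting, no cascading---and the single-recursive-call-per-level discipline goes through cleanly (insert into $D_{top}$ only when the cluster's count for that interval colour goes from $0$ to $1$, insert into $D[r]$ only when it reaches $3$, and these cases are mutually exclusive). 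A query translates $\cC_q$ into $\cM_q=\{c_{ij}:[i,j]\cap\cC_q\neq\emptyset\}$ in $O(1)$ I/Os and then proceeds exactly as you described using the per-interval-colour $\min/\max$ tables. Once you make this change, your obstacles (i)--(iii) disappear and the rest of your argument goes through.
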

\begin{proof}
To simplify the description, we introduce a new set of interval 
colors $\cM$: for 
each interval $[c_1,c_2]$, $1\leq c_1\leq c_2\leq C$, there is an interval 
color $c_{12}\in \cM$. Thus each interval color corresponds 
to a color interval $I_e$. 
Obviously, for each original color $c_i$ there is the interval color
 that corresponds to the interval $[c_i,c_i]$.  
An element $e$ is colored with a color 
from a set $\cC_q$ if and only if $e$ is colored with an interval 
color from a set 
$\cM_q=\{\,c_{ij}\,|\,\exists c\in\cC_q,\,c_i\leq c\leq c_j\,\}$.  
For every set of colors $\cC_q$ the equivalent set of interval colors 
$\cM_q$ can be constructed in $O(1)$ I/Os. 
While each element $e$ is colored with colors from an interval $I_e$, 
each $e$ is colored with only one interval color.


The data structure can be defined recursively.  If $\max(K,V)\leq B^2$, we can 
use the data structure of Lemma~\ref{lemma:colsmall} and answer queries 
in $O(1)$ I/Os.  If  $\max(K,V)>B^2$, then we 
divide the interval $[1,V]$ into subintervals of size\footnote{For ease of
 description we assume that $h$ is an integer.} 
$B^{h}$ for  $h=(\log_B V)/2$. 
The array $A[r]$ contains two tables $\min[i,j]$ and $\max[i,j]$ for each 
subinterval. The table $A[r].\min[i,j]$ ($A[r].\max[i,j])$ contains the minimal (maximal) element in the $r$-th subinterval colored with an interval
 color $c_{ij}$. 
If there is no such element in the $r$-th subinterval, then  
$A[r].\min[i,j]=A[r].\max[i,j]=\idtt{NULL}$.  
Let $S[r][i,j]$ be the set of elements whose values belong to the 
subinterval $[(r-1)B^{h}+1,rB^{h}]$ and whose interval color is 
$[i,j]$. 
If $|S[r][i,j]|\leq 2$, the values of elements from $S[r][i,j]$ are already 
stored in $A[r].\min[i,j]$ and $A[r].\max[i,j]$.
If $|S[r][i,j]|\geq 3$ for at least one pair $i,j$, we construct 
a recursively defined data structure $D[r]$ for $S[r]$. 
All values of elements in $D[r]$ are specified relative to the 
left end of the $r$-th subinterval; thus  values of all elements in 
$D[r]$ belong to $[1,B^{h}]$. 
The data structure $D_{top}$ supports colored predecessor searching in the 
array $A$: if $A[r].\min[i,j]\not=\idtt{NULL}$, then $D_{top}$ contains 
an element $e$ with $v_e=r$ and $I_e=\{c_{ij}\}$, i.e. $e$ is colored with 
an interval color $c_{ij}$. Values of elements in $D_{top}$ also belong 
to the interval $[1,B^{h}]$. 
All tables $\min[]$ and $\max[]$ in the array $A$ have  
$O(B^{\log_B V/2}\cdot B^{2f})$ entries. Therefore $A$ uses $O(\max(K,V))$ 
words of space.
 Since values of elements in 
$D_{top}$ and $D_r$ belong to $[1,B^{\log_B V/2}]$, there are at 
most $O(\log\log_B V)$ recursion levels. Hence, 
the total space usage is $O((\max(K,V)/B)\log\log_B V)$.

A query $(v_q,\cC_q)$ is processed as follows. 
Let $\cM_q$ be the set of interval colors that is equivalent to $\cC_q$.
If $K\leq B^2$, the query is answered in $O(1)$ I/Os by 
Lemma~\ref{lemma:colsmall}. 
Otherwise we check whether there is  at least one pair $i,j$ such 
that $A[v'_q].\min[i,j]\leq v_q$ and $[i,j]\cap \cC_q\not=\emptyset$; 
here $v'_q$ denotes the index of the subinterval that contains $v_q$.
This condition can be tested in $O(1)$ I/Os because each entry of $A[]$ 
fits into one block of memory. 
If such a pair exists, the search continues in the data structure 
$D[v'_q]$. Otherwise, we use the data structure $D_{top}$ to find the largest 
$r'<v'_q$ such that  $r'$ is colored with a color 
$c_{ij}\in \cM_q$. The answer to the query is the maximal element 
among all $A[r'].\max[i,j]$ such that $c_{ij}\in \cM_q$.

Suppose that $\max(V,K)\geq B^2$ and an element $e=(v_e,[i_e,j_e])$ is 
inserted. We update the values 
of $A[r].\min[i,j]$ and $A[r].\max[i,j]$ for $r=v_e/B^{h}$ if necessary. 
If  $|S[r][i,j]|\geq 3$ after the insertion of $e$, we insert 
an element $(v,[i,j])$,
such that $v\not=A[r].\min[i,j]$ and $v\not=A[r].\max[i,j]$ into $D[r]$. 
If $|S[r][i,j]|=1$ after the insertion of $e$, we insert an element $(r,[i,j])$
into the data structure $D_{top}$. Deletions are symmetric to insertions.
\end{proof}

In the colored union-split-find (CUSF) problem
 we put an additional restriction on 
queries: only queries $(v_q,\cC_q)$, such that there is an element $e\in S$ 
with value $v_e=v_q$, are allowed. Moreover, we assume that a pointer 
to such  an element $e\in S$ is provided with a query. When an element $e$ 
is deleted or inserted, we assume that the position of $e$ is known.
\begin{lemma}\label{lemma:colusf}
The CUSF problem for a set of $K$ elements and  $C\leq B^{1-f}$, 
$0<f< 1$, colors can be solved in $O(\log\log_B K)$ I/Os using a $O(K)$ space data structure that supports 
updates in $O(\log\log_B K)$ amortized I/Os. 
\end{lemma}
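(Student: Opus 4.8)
The plan is to obtain the CUSF structure on top of the colored predecessor structure of Lemma~\ref{lemma:colveb}, exploiting the two concessions of the CUSF setting — every query carries a pointer to an element of value $v_q$, and every update carries the position of the affected element — so that we may work with element \emph{ranks} rather than raw values and thereby keep the effective universe linear in $K$. If $K\le B^{10}$ the whole instance is handled directly by Lemma~\ref{lemma:colsmall} (applied with node degree $B^{\delta}$ for some $\delta<f$, so that our colour bound $C\le B^{1-f}\le B^{1-\delta}$ meets its hypothesis): this gives $O(\log_B K)=O(1)=O(\log\log_B K)$ I/Os in $O(K)$ space, so from now on assume $K>B^{10}$.

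The data structure has two levels. At the bottom, keep the $K$ elements in a doubly linked list cut into $\Theta(K/g)$ \emph{buckets} of $\Theta(g)$ consecutive elements with $g=\Theta(B^{10})$ a sufficiently large polynomial in $B$, and store each bucket in its own copy of the structure of Lemma~\ref{lemma:colsmall}; since $g$ is polynomial in $B$, this answers a within-bucket colored predecessor/successor query and performs an insertion or deletion in $O(\log_B g)=O(1)$ I/Os using $O(g)$ space. With each bucket $b$ store its \emph{colour summary}, the set (a bitmap of $C\le B^{1-f}\le B$ bits) of colours occurring in $b$. At the top, build the structure of Lemma~\ref{lemma:colveb} over the bucket ranks: its universe is $V=\Theta(K/g)$, and it holds one element $(\mathrm{rank}(b),\{c\})$ for each bucket $b$ and each colour $c$ in the summary of $b$ — at most $(K/g)\cdot C=O(K/B^{9})$ elements over a universe of size $O(K/B^{10})$, hence $O(K)$ words and $O(\log\log_B V)=O(\log\log_B K)$ I/Os per query and per amortized update. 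Together with the $O(K)$ words in the bucket structures the total space is $O(K)$.

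A query $(v_q,\cC_q)$ arrives with a pointer, hence with the bucket $b$ containing $v_q$. First scan $b$ for the nearest element before (resp.\ after) $v_q$ whose colour lies in $\cC_q$; if it exists, return it. Otherwise query the top structure for the nearest bucket before (resp.\ after) $\mathrm{rank}(b)$ whose summary meets $\cC_q$, then scan that bucket for its extreme element of the required colour. Two bucket accesses plus one top query give $O(\log\log_B K)$ I/Os. An insertion or deletion of an element is performed inside its bucket in $O(1)$ I/Os and changes the bucket's summary in at most one colour, so it triggers at most one insertion/deletion in the top structure, for $O(\log\log_B K)$ amortized I/Os. When a bucket leaves the size range $[g/2,2g]$ it is split or merged with a neighbour — an event chargeable to $\Theta(g)$ updates of that bucket; a split or merge shifts the ranks of later buckets, which we absorb by maintaining bucket ranks with a standard list-labeling / packed-memory-array scheme over a range of size $\Theta(K/g)$, and by rebuilding the whole top structure whenever the number of buckets has changed by a constant factor (which takes $\Theta(K)$ updates, so the $o(K)$-I/O rebuild costs $o(1)$ amortized). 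Each rank change of a bucket re-inserts its $O(C)=O(B)$ summary elements into the top structure, so the total amortized cost all of this bookkeeping charges to one element update is $O(\mathrm{poly}(B)\cdot\polylog(K)/g)=o(1)$.

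The point where care is needed is exactly that the space bound of Lemma~\ref{lemma:colveb} forces the top universe to stay $\Theta(K/g)$: one cannot spread the bucket ranks out in a large universe as in a plain van Emde Boas or union–split–find construction, so rank shifts are unavoidable and must be paid for. The resolution is to make the buckets polynomially large, so that Lemma~\ref{lemma:colsmall} handles every insertion inside a bucket in $O(1)$ I/Os with no universe constraint, and the rare bucket split/merge/relabel events can then afford the $O(\mathrm{poly}(B))$-I/O re-insertions into the top structure; the remaining details are a routine transcription to external memory of the classical dynamization of union–split–find on a list used in dynamic fractional cascading~\cite{MehlhornN}.
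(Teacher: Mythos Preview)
Your two-level construction---chunks handled by Lemma~\ref{lemma:colsmall} at the bottom, Lemma~\ref{lemma:colveb} over chunk labels at the top, with list-labeling to maintain those labels---is exactly the paper's approach (the paper attributes the reduction to~\cite{GioraK}). The only substantive difference is cosmetic: you store one top-level element per (bucket, basic colour) pair, while the paper stores one per (chunk, interval colour) pair; both work.

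There is, however, a genuine gap in your parameter choice. You fix $g=\Theta(B^{10})$ and then assert that the amortized relabeling overhead is $O(\mathrm{poly}(B)\cdot\polylog(K)/g)=o(1)$. This is false when $B$ is small relative to $\log K$: list-labeling over $K/g$ buckets incurs $\Theta(\log^2 K)$ label changes per split, and each label change triggers $O(C)\le O(B)$ updates to the top structure, so a split costs $\Omega(B\log^2 K\cdot\log\log_B K)$ I/Os. Amortized over $g=\Theta(B^{10})$ element updates this is $\Omega(\log^2 K\cdot\log\log_B K/B^9)$, which is unbounded for constant $B$. The lemma, as stated, places no lower bound on $B$, so this case must be covered. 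The paper repairs exactly this by taking $g$ to depend on the relationship between $B$ and $K$: it sets $g=B^{1+2f}$ when $B\ge\log^2 K$ and $g=\log_B^4 K$ otherwise, so that in either regime $g$ dominates the $O(B^{2f}\log^2 K)$ relabeling work. Your argument goes through once you make the analogous case split (or simply take $g=\Theta(B^{10}+\log_B^4 K)$); without it the amortized update bound does not hold.
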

\begin{proof}
We can transform the general searching data structure into a CUSF data
 structure using the same principle as in~\cite{GioraK}. 
The set of $K$ elements is divided into chunks of size $\Theta(g)$. 
If $B\geq \log^2K$, then we set $g=B^{1+2f}$. Otherwise, we set $g=\log_B^4 K$.
We assign to each chunk $m$ an ordered  label $\lab(m)\in [1,O(K)]$: 
if  the chunk $m_1$ follows $m_2$, then $\lab(m_1)>\lab(m_2)$. 
Labels can be maintained according to the algorithm of~\cite{ItaiKR81,W92}: when 
a new chunk is inserted or when a chunk is deleted, $O(\log^2 K)$ labels 
must be changed.
The set of interval colors $\cM$ is defined exactly as in the proof of 
Lemma~\ref{lemma:colveb}.
 The data structure $D_v$ contains an element
$(\lab(m),c_{ij})$ if and only if the chunk $m$ contains an element 
with color interval $I_e=[c_i,c_j]$. By Lemma~\ref{lemma:colveb}, 
$D_v$ uses $O(K)$ words of space. We also store a data structure 
$D_m$ for each chunk that supports colored predecessor searching queries in this chunk. 
We implement the data structure $D_m$ as described in 
Lemma~\ref{lemma:colsmall},
so that colored  searching queries are supported in 
$O(\log_B g)=O(\log_B \log_B K)$ I/Os. All data structures $D_m$ 
use space $O(K)$.

Consider a query $(v_q,\cC_q)$. Suppose that some element 
$e\in S$ with $v_e=v_q$ 
belongs to a chunk $m_e$.  First, we find the  largest
chunk $m$, such that $m$ contains at least one interval color $c_{ij}$
and $[c_i,c_j]\cap \cC_q\not=\emptyset$.  The  set of colors 
$\cM_q=\{\,c_{ij}\,|\,\exists c\in\cC_q,\,c_i\leq c\leq c_j\,\}$ 
can be constructed with $O(1)$ I/Os. 
Using $D_v$, we can answer the colored search query 
for  $(m_e,\cM_q)$ and find the chunk $m$ in $O(\log\log_B K)$ I/Os. 
The largest element $e$ with $v_e\leq v_q$ and $I_e\cap \cC_q\not=\emptyset$ 
belongs to the chunk $m$. The cost of finding $e$ using the data 
structure for the chunk $m$ is $O(\log_B \log_B K)$; hence, a query can be 
answered with $O(\log\log_B K)$ I/Os.

When a new element $e$ is inserted, we insert it into a chunk $m_e$ 
in $O(1)$ I/Os. If the data structure $D_v$ does not 
contain the element $(\lab(m_e),I_e)$, then we insert this element 
into $D_v$ in $O(\log\log_B K)$ I/Os. 
If the number of elements in a block equals $2g$ we 
distribute the elements of $m$ between two chunks $m_1$ and $m_2$. 
It takes $O(g\log_B g)$ I/Os to delete the data structure $D_m$ and 
insert the elements of $m$ into data structures $D_{m_1}$ and $D_{m_2}$. 
We assign new labels to chunks $m_1$ and $m_2$ and update the set of 
labels. This leads to changing the labels of $O(\log^2_2 K)$ chunks. 
The data structure $D_v$ contains $O(B^{2f})$ labels for each 
chunk. Hence, the total number of updates in $D_v$ incurred by 
updating the set of labels is $O(B^{2f}\log^2 K)$. 
If $B> \log^2_2 K$, then $B^{1+2f}>B^{2f}\log^2 K$. 
If $B\leq \log^2K$, then $\log_B^4 K=\Omega( B^{2f}\log^2_2 K)$.
Since labels are changed after
 $\Theta(g)=\Omega(B^{2f}\log^2_2K)$ insertions, 
the amortized cost of an insertion is $O(\log\log_B K)$. 
Deletions are performed in the same way. Thus the total 
cost of an update is $O(\log\log_B K)$.
\end{proof}
We remark that, in fact,  the values of elements are not necessary. Using the same method, we can store a list of elements, such that each element $e$ in the list is assigned an interval of colors $I_e$. Given a pointer to a list element $e$ and a color interval $\cC_q$, we ask for the first element $e'$  that follows $e$ in the list  and $\cC_q\cap I_{e'}\not=\emptyset$. We will call this problem list CUSF.
\begin{lemma}
  \label{lemma:listCUCF}
The list CUSF problem for a set of $K$ elements and  $C\leq B^{1-f}$, $0<f< 1$, colors can be solved in $O(\log\log_B K)$ I/Os using a $O(K)$ space data structure that supports 
updates in $O(\log\log_B K)$ amortized I/Os. 
\end{lemma}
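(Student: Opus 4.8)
The plan is to reduce the list CUSF problem to the (value-based) CUSF problem of Lemma~\ref{lemma:colusf} by replacing element values with labels produced by an order-maintenance structure. Concretely, I would partition the list into chunks of $\Theta(g)$ consecutive elements, with the same choice of chunk size as in Lemma~\ref{lemma:colusf}: $g=B^{1+2f'}$ when $B\ge\log^2 K$ and $g=\log_B^4 K$ otherwise (here $f'=1-f$ plays the role of the color-exponent, so that the number of interval colors is $O(B^{2f'})$, exactly as in the proof of Lemma~\ref{lemma:colusf}). Each chunk $m$ is given an ordered label $\lab(m)\in[1,O(K)]$ maintained by the algorithm of~\cite{ItaiKR81,W92}, so that inserting or deleting a chunk changes $O(\log^2 K)$ labels. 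Inside a chunk I would store a structure $D_m$ as in Lemma~\ref{lemma:colsmall}, keyed by the positions of the elements within the chunk (these local positions are recomputed whenever the chunk is rebuilt); it answers colored successor/predecessor queries inside the chunk in $O(\log_B g)=O(\log\log_B K)$ I/Os. Finally I would keep the structure $D_v$ of Lemma~\ref{lemma:colveb} on the chunk labels: $D_v$ contains an element $(\lab(m),c_{ij})$ iff some element of chunk $m$ has color interval $[c_i,c_j]$; since $D_v$ holds $O(B^{2f'})$ labels per chunk it uses $O(K)$ words, and it supports colored successor queries (using negated labels if the lemma is only stated for predecessors) in $O(\log\log_B K)$ I/Os.

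For a query $(e,\cC_q)$ I would first convert $\cC_q$ into the equivalent set $\cM_q$ of interval colors in $O(1)$ I/Os, locate the chunk $m_e$ containing the element $e$ (given by the query pointer), and search $D_{m_e}$ for the first element after $e$ in $m_e$ whose color interval meets $\cM_q$; if one exists we return it. Otherwise I would query $D_v$ with $(\lab(m_e),\cM_q)$ to find the smallest label $\lab(m)>\lab(m_e)$ of a chunk $m$ containing a color from $\cM_q$, and then return the first element of $m$ (in list order) whose interval meets $\cM_q$, read off from $D_m$. This is $O(\log\log_B K)$ I/Os in total.

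Updates proceed exactly as in Lemma~\ref{lemma:colusf}. Inserting or deleting an element in its chunk costs $O(1)$ I/Os; if this changes the set of interval colors present in the chunk, we update $D_v$ in $O(\log\log_B K)$ I/Os. When a chunk reaches size $2g$ it is split into two, its local structure rebuilt (and the two new ones built) in $O(g\log_B g)$ I/Os, new chunk labels are assigned, and the resulting $O(\log^2 K)$ label changes induce $O(B^{2f'}\log^2 K)$ updates in $D_v$; merging an underfull chunk is symmetric. Amortizing these $O(g\log_B g + B^{2f'}\log^2 K)$ I/Os over the $\Theta(g)$ updates between successive rebuilds of a chunk gives $O(\log\log_B K)$ amortized cost, because the chosen $g$ dominates $B^{2f'}\log^2 K$ in both regimes ($B^{1+2f'}>B^{2f'}\log^2 K$ when $B>\log^2 K$, and $\log_B^4 K=\Omega(B^{2f'}\log^2 K)$ when $B\le\log^2 K$). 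The main point to get right — and essentially the only nontrivial part — is this amortization bookkeeping, i.e.\ verifying that the relabeling cost is absorbed by the chunk size; everything else is a verbatim transfer of the argument of Lemma~\ref{lemma:colusf}, with order-maintenance labels standing in for element values.
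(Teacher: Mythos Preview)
Your proposal is correct and is precisely what the paper intends: its entire proof of Lemma~\ref{lemma:listCUCF} is the sentence ``This result can be proved in the same way as Lemma~\ref{lemma:colusf},'' and you have spelled out that transfer, with order-maintenance chunk labels substituting for element values and the same amortization analysis. The only quibble is the phrase ``keyed by the positions of the elements within the chunk'': since insertions shift local positions, it is cleaner to implement $D_m$ directly on the list order (the structure of Lemma~\ref{lemma:colsmall} only needs comparability, not integer keys), but this is a cosmetic point and does not affect correctness.
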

This result can be proved in the same way as Lemma~\ref{lemma:colusf}; we use it in Section~\ref{sec:plfullydyn}.

\subsection{Ray Shooting on Horizontal Segments}
\label{sec:pl}
\myparagraph{Structure.}
All segments are stored in a tree $\cT$ with node degree $B^c$ for some constant $c< 1/2$. The leaves of $\cT$ contain $x$-coordinates of segment endpoints; every leaf contains $\Theta(B)$ elements.  The tree is organized as a variant of the segment tree, in the same way as in Section~\ref{sec:overall}. 
We start by introducing some additional notation. The range of a leaf node $v_l$ is the interval $[a_l,b_l]$, where $a_l$ and $b_l$ are the minimal and the maximal values stored in $v_l$. 
The range of an internal node $v$ is the interval $[a,b]$, so that $a$ and $b$ are the minimal and maximal values stored in leaf descendants of $v$. 

We say that a segment  $s=(x_1,x_2;y)$ covers an interval $[a,b]$ if 
$x_1\leq a$ and $ b \leq x_2$; a segment covers $a$ if it covers an interval $[a,a]$.  Thus a  segment spans a node $v$ if it covers the range of $v$. We implement $\cT$ in the same way as before: a 
segment $s=(x_1,x_2;y)$ is associated 
with a node $v$ if and only if $s$ spans at least 
one child $v_i$ of $v$, but $s$ does not span the node $v$. 
Thus each segment is associated with $O(\log_B n)$ nodes. 

Let $C(v)$ be the set of segments associated with a node $v$. For simplicity, 
we sometimes will  not distinguish between a segment and its $y$-coordinate. 
For any point $q=(q_x,q_y)$, let $\pi$ denote search path for $q_x$ 
in the tree $\cT$. Each segment $s=(x_1,x_2;y_s)$, 
$x_1 \leq q_x\leq x_2$, is stored in a list $C(v)$, $v\in \pi$. 
If $s\in C(v)$ covers $q_x$ and $v$ is an internal node, then 
$s$ spans the child $v_i$ of $v$, $v_i\in \pi$.
Hence, we can identify the predecessor segment $s_q$ of $q$ by 
finding the highest segment $s\in C(v)$, $v\in \pi$, such that 
$s$ spans some node $v'\in \pi$ and $y_s\leq q_y$.

Our method is based on Lemma~\ref{lemma:colusf} and the fractional 
cascading technique~\cite{MehlhornN} applied to sets $C(v)$. We construct augmented catalogs 
$AC(v)\supset C(v)$ for all nodes $v$ of 
$\cT$. For a leaf 
node $v_l$, $AC(v_l)=C(v_l)$. Every list  $AC(v)$ is divided into 
groups $G_1(v),G_2(v),\ldots$, so that each group 
contains between $\log_B n/2$ and $2\log_B n$ segments. 
We guarantee that $AC(v)$ for an internal node $v$ contains one 
segment from a group $G_j(v_i)$ for every child $v_i$ of $v$ and 
every group $G_j(v_i)$. Moreover, $AC(v)$ contains all segments from $C(v)$. 
If copies of the same  segment $s$ are stored in augmented lists for 
a node $v$ and for a child 
$v_i$ of $v$, then the two copies of $s$ in $AC(v)$ and $AC(v_i)$ 
are connected by pointers, called bridges. Thus there are 
$O(\log_B n)$ elements of $AC(v_i)$ between any two consecutive bridges 
from $AC(v)$ to $AC(v_i)$.

The data structure $D(v)$ contains the colored set of ($y$-coordinates of) 
all segments in $AC(v)$: if a segment $s=(x_1,x_2;y_s)\in C(v)$ spans 
children $v_i,v_{i+1},\ldots,v_j$ of $v$, then an element $e_s=(y_s,[i,j])$ 
with value $y_s$ and colors $I_e=[i,j]$ is stored in $D(v)$; if a segment 
$s$ does not span any child $v_i$ of $v$ (i.e., $s$ belongs to 
$AC(v)\setminus C(v)$), then $e_s$ is colored with a dummy color $c_d$. 
The data structure $E(v)$ also contains a colored set of segments in $AC(v)$,
but segments are colored according to a different rule. 
All segments from $C(v)$ are colored with a dummy color $c_d$; for 
any segment $s\in AC(v)\setminus C(v)$, the set of colors for $s$
contains all values $i$ such that $s$ belongs to $AC(v_i)$ for a child 
$v_i$ of $v$. Both $D(v)$ and $E(v)$ are implemented as described in 
Lemma~\ref{lemma:colusf}.

\myparagraph{Queries.}
The search procedure visits all nodes on the path $\pi$ starting at the 
root. In every internal node $v\in \pi$, we identify the predecessor $s(v)$ of 
$q.y$ in $AC(v)$. Then we identify the highest segment $s'(v)$ in $AC(v)$ such 
that $s'(v)$ is below $s(v)$ and $s'(v)$ spans the child $v_i$ of $v$, 
$v_i\in \pi$.  Finally, we examine all segments in the leaf $v_l\in \pi$ 
and find the predecessor segment $s'(v_l)$ of $q$ stored in $C(v_l)$.
The predecessor segment of $q$ in $S$ is the highest segment 
among all $s'(v)$ for $v\in \pi$.

We can identify $s(v)$ for the root of $\cT$ in $O(\log_B n)$ I/Os
 using a standard B-tree. Suppose that $s(v)$ for a node $v$ is known. 
We will show how to find $s'(v)$ and $s(v_i)$ for the child $v_i\in \pi$ 
of $v$.
The highest segment $s'(v)\in AC(v)$,  such that $s'(v)$ is below 
$s(v)=(x_1,x_2;y_s)\in AC(v)$ and $s'(v)$ spans a child $v_i$ of $v$ can be
 found by answering the query $(y_s,i)$ to a data structure $D(v)$.
The segment $s(v_i)$ can be identified as follows. 
The highest segment $s_1(v)$, such that $s_1(v)$ is below $s(v)$ and 
$s_1(v)$ belongs to $AC(v_i)$ can be found in $O(\log\log_B n)$ I/Os 
using $E(v)$. Suppose that the copy of $s_1(v)$ belongs to a group 
$G_j(v_i)$ in $AC(v_i)$.
Since $s_1(v)$ is the highest segment below $q$ 
in $AC(v)\cap AC(v_i)$, $s(v_i)$ either belongs to the group $G_j(v_i)$ 
or to the next group $G_{j+1}(v_i)$. If we store $y$-coordinates of 
all segments from each   group $G_l$
in a B-tree, then we can search in  $G_l$  in $O(\log_B\log_B n)$ I/Os
because each $G_l$ contains $O(\log_B n)$ segments.
Thus the segment $s(v_i)$ can be found in $O(\log_B\log_B n)$ I/Os if 
$s_1(v)$ is known. 
Since the search procedure spends $O(\log\log_B n)$ I/Os in every node of 
$\pi$, the total cost of the search is $O(\log_B n\log\log_B n)$.

\myparagraph{Updates.}
Every segment belongs to $O(\log_B n)$ lists $C(v)$. Every insertion into 
a list $C(v)$ can be handled as follows. All nodes $v$ such that 
$s$ belongs to $C(v)$ are situated on at most two root-to-leaf paths 
$\pi$. We can identify positions of the $y$-coordinate of $s$ in 
all $AC(v)$ that belong to some path $\pi$ in $O(\log_B n \log\log_B n)$
I/Os as described in the search procedure. 

When we know the position of $s$ in a list $AC(v)$, we insert $s$ into 
data structures $D(v)$ and $E(v)$ in $O(\log\log_B n)$ I/Os. We 
also insert $s$ into the B-tree for its group $G_j$ in $C(v)$. If the 
number of elements in $G_j$ exceeds $2\log_B n$, $G_j$ is split into two 
groups $G'_j(v)$ and $G''_j(v)$. The list $AC(w)$ for the parent $w$ of $v$ 
already contains one element from either $G'_j(v)$ or $G''_j(v)$. 
A representative $s_r$ of another group must be inserted into $AC(w)$. 
The position of $s_r$ in $AC(w)$ can be found in $O(\log\log_B n)$ I/Os;
after that, $s_r$ is inserted into $AC(w)$ in $O(\log\log_B n)$ I/Os 
as described above. It can be shown that an insertion into a catalog 
$AC(v)$ leads to $O(1/\log_B n)$ insertions into augmented lists 
of ancestors of $v$~\cite{MehlhornN}. Hence, the total cost of an insertion is 
$O(\log_B n\log\log_B n)$. Deletions can be handled in the same way.
Since every segment is stored in $O(\log_B n)$ nodes, the total 
space usage is $O(n\log_B n)$.
\begin{lemma}\label{lemma:pl}
There exists a $O(n\log_B n)$ space data structure that supports 
ray shooting  queries on horizontal  segments in $O(\log_B n\log\log_B n)$
I/Os and updates in $O(\log_B n\log\log_B n)$ amortized I/Os. 
\end{lemma}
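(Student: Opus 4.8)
The plan is to assemble the structure from three ingredients already developed: the segment tree $\cT$ of Section~\ref{sec:overall} with node degree $B^c$ for a constant $c<1/2$, dynamic fractional cascading~\cite{MehlhornN} applied to the sets $C(v)$, and the colored union-split-find structure of Lemma~\ref{lemma:colusf}. First I would build augmented catalogs $AC(v)\supseteq C(v)$ so that between any two consecutive bridges leading from $AC(v)$ into $AC(v_i)$ there are only $O(\log_B n)$ elements of $AC(v_i)$; this is the standard fractional-cascading invariant, obtained by grouping each $AC(v_i)$ into blocks of $\Theta(\log_B n)$ segments and promoting one representative per block to the parent. Into each node $v$ I place two colored structures $D(v)$ and $E(v)$ of Lemma~\ref{lemma:colusf}: $D(v)$ stores each segment of $AC(v)$ with the color interval $[i,j]$ of the children it spans (or a dummy color if it spans none), while $E(v)$ stores each segment of $AC(v)\setminus C(v)$ with the set of children indices $i$ for which the segment also lies in $AC(v_i)$. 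Since $\cT$ has degree $B^c$ with $c<1/2$, the color universe has size $O(B^c)\le B^{1-f}$, so Lemma~\ref{lemma:colusf} applies with $O(\log\log_B n)$ operation cost.

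For a query $q=(q_x,q_y)$ I would walk down the search path $\pi$ from the root to the leaf whose slab contains $q_x$, maintaining at each visited node $v$ the predecessor $s(v)$ of $q_y$ in $AC(v)$. At the root $s(v)$ is found in $O(\log_B n)$ I/Os with an ordinary B-tree. Given $s(v)$ at an internal node $v$ with child $v_i\in\pi$, a single colored query $(y_{s(v)},i)$ to $D(v)$ returns the highest segment $s'(v)$ below $s(v)$ that actually spans $v_i$ — a candidate for the answer — and a single colored query to $E(v)$ returns the highest segment $s_1(v)$ below $s(v)$ lying in $AC(v)\cap AC(v_i)$, i.e.\ the nearest bridge. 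By the fractional-cascading invariant, $s(v_i)$ lies in the group of $s_1(v)$ in $AC(v_i)$ or in the next group, each of size $O(\log_B n)$, so $s(v_i)$ is found by a B-tree search inside a group in $O(\log_B\log_B n)$ I/Os. Each node thus costs $O(\log\log_B n)$, the leaf is scanned in $O(1)$ blocks, and the predecessor of $q$ overall is the highest of the $O(\log_B n)$ candidates $s'(v)$, for a total of $O(\log_B n\log\log_B n)$ I/Os.

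For an update I would first locate the $y$-coordinate of the new (or deleted) segment in every catalog $AC(v)$ with $v$ on the $O(1)$ relevant root-to-leaf paths, which costs $O(\log_B n\log\log_B n)$ exactly as in the query. Then in each of the $O(\log_B n)$ affected nodes I update $D(v)$ and $E(v)$ in $O(\log\log_B n)$ amortized I/Os by Lemma~\ref{lemma:colusf} and insert the segment into the B-tree of its group. A group overflow triggers a split that forces one new representative into the parent catalog; the standard analysis of dynamic fractional cascading~\cite{MehlhornN} shows this propagation amplifies the update by only an $O(1/\log_B n)$ factor per level, so the total amortized update cost remains $O(\log_B n\log\log_B n)$. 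Space is $O(n\log_B n)$: each segment occupies $O(\log_B n)$ catalogs and every auxiliary structure ($D(v)$, $E(v)$, and the group B-trees) is linear in the catalog size by Lemma~\ref{lemma:colusf}.

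I expect the main technical obstacle to be the bookkeeping that keeps the two colored invariants consistent under updates simultaneously with the fractional-cascading block structure: one must verify that a split or merge of a group changes only $O(1)$ entries of $D(w)$ and $E(w)$ at the parent $w$, that the dummy-color convention correctly separates ``spanning'' from ``present but non-spanning'' segments in both structures, and that the amortized $O(1/\log_B n)$ promotion bound survives the constraint that colors live in a universe of size $B^c$ with $c<1/2$ — which is precisely what couples the tree degree to the applicability of Lemma~\ref{lemma:colusf}. These steps are routine once the right invariants are fixed, but making them mutually compatible is the delicate part.
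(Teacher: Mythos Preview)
Your proposal is correct and essentially identical to the paper's own proof: the same segment tree with degree $B^c$ ($c<1/2$), the same augmented catalogs grouped into $\Theta(\log_B n)$-sized blocks with one representative promoted to the parent, and the same two CUSF structures $D(v)$ (colored by spanned children) and $E(v)$ (colored by bridge destination) used exactly as you describe. One small technical point: the paper stores \emph{all} of $AC(v)$ in $E(v)$, giving segments of $C(v)$ a dummy color, rather than only $AC(v)\setminus C(v)$ --- this matters because the CUSF interface of Lemma~\ref{lemma:colusf} requires the query value to be the value of an element already in the structure, and $s(v)$ may well lie in $C(v)$; your variant is easily repaired the same way.
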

We can reduce the space usage to linear using the same approach as 
in~\cite{BaumgartenJM94,GioraK}. For completeness, we provide the proof in Section~\ref{sec:horlinspace}.
\begin{theorem}\label{theor:pl}
There exists a $O(n)$ space data structure that supports  ray shooting  queries on horizontal  segments in $O(\log_B n\log\log_B n)$
I/Os and updates in $O(\log_B n\log\log_B n)$ amortized I/Os. 
\end{theorem}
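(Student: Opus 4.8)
The structure of Lemma~\ref{lemma:pl} already attains the required query and update I/O bounds; its only deficiency is that the augmented catalogs keep a copy of each segment at $\Theta(\log_B n)$ nodes of $\cT$, so it uses $\Theta(n\log_B n)$ words, and this replication is intrinsic to the segment-tree-plus-fractional-cascading design and cannot be removed by purely local modifications. The plan is therefore to apply the generic space-reduction scheme for decomposable searching problems of Lemma~\ref{lemma:spaceCN15} (in the spirit of~\cite{BaumgartenJM94,GioraK,ChanN15}): vertical ray shooting is decomposable, since the answer for $S_1\cup S_2$ is simply the lower of the answers for $S_1$ and for $S_2$.

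As the fully-dynamic component~(i) of Lemma~\ref{lemma:spaceCN15} I would take the data structure of Lemma~\ref{lemma:pl}, with $S(n)=O(n\log_B n)$, $Q(n)=O(\log_B n\log\log_B n)$ and $U(n)=O(\log_B n\log\log_B n)$. As the deletion-only component~(ii) I would take a \emph{static} linear-space vertical ray-shooting structure for horizontal segments — for instance the one obtained by sweeping a vertical line and making a $B$-tree of the currently crossed segments (ordered by $y$-coordinate) partially persistent, or the static structures of~\cite{GoodrichTVV93,ArgeDT03} — which uses $S_D(n)=O(n)$ space, answers queries in $Q_D(n)=O(\log_B n)$ I/Os, and is built in $P_D(n)=O((n/B)\log_{M/B}n)$ I/Os. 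Deletions in this component are handled by lazily marking segments, so $U_D(n)=O(1)$, and rebuilding from scratch once a constant fraction of the segments have been deleted; this rebuild costs $O(P_D(n)/n)=o(1)$ amortized per deletion, and in any case is dominated by the $(P_D(n)/n)\log z$ term of Lemma~\ref{lemma:spaceCN15}.

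I would then invoke Lemma~\ref{lemma:spaceCN15} with parameter $z=\Theta(\log_B n)$, so that $n/z=n/\log_B n$ and hence $\log_B(n/z)=O(\log_B n)$ and $\log\log_B(n/z)=O(\log\log_B n)$. This gives space $O(S(n/z)+S_D(n))=O\big((n/\log_B n)\log_B(n/\log_B n)+n\big)=O(n)$; query cost $O(Q(n/z)+Q_D(n)\log z)=O(\log_B n\log\log_B n+\log_B n\cdot\log\log_B n)=O(\log_B n\log\log_B n)$; and amortized update cost $O(U(n/z)+U_D(n)+(P_D(n)/n)\log z)=O(\log_B n\log\log_B n+1+((\log_{M/B}n)/B)\log\log_B n)=O(\log_B n\log\log_B n)$, since $P_D(n)/n=O((\log_{M/B}n)/B)=o(1)$. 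These are exactly the bounds claimed by the theorem.

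The only delicate point — and the real obstacle behind this otherwise routine argument — is making the deletion-only component simultaneously linear-space, fast-to-query and cheap-to-preprocess: it must genuinely fit in $O(n)$ words, which rules out simply reusing (a pruned copy of) the $\Theta(n\log_B n)$-space structure of Lemma~\ref{lemma:pl} and forces a persistence/sweepline static solution; and one must verify that periodic global rebuilding keeps its amortized deletion cost $o(1)$, so that the overall update bound is dominated by the $O(\log_B n\log\log_B n)$ term coming from the sparse fully-dynamic component. Once these parameter checks are carried out, the theorem follows directly from Lemma~\ref{lemma:spaceCN15}.
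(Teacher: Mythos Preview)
Your reduction via Lemma~\ref{lemma:spaceCN15} is clean in spirit, but the deletion-only component has a genuine gap. Lazy marking in a static vertical-ray-shooting structure does \emph{not} preserve an $O(\log_B n)$ query bound: a query to a persistent $B$-tree (or to the static structures of~\cite{GoodrichTVV93,ArgeDT03}) returns the single segment directly below $q$, and if that segment happens to be marked you must step to the next one, and so on. Even when only a constant fraction of all segments are marked, the adversary can delete every segment lying between $q$ and its true live predecessor along the line $x=q_x$, forcing the query to walk through $\Theta(n)$ marked segments before succeeding. Ray shooting is a ``find-closest'' query, not a reporting query, and does not tolerate lazy deletion the way decomposable reporting problems do. So the claimed $Q_D(n)=O(\log_B n)$ is unjustified, and without it the whole parameter calculation collapses; any honest deletion-only component here must actually update under deletions while remaining linear-space with $O(\log_B n)$ queries, which is essentially the statement you are trying to prove.

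The paper does \emph{not} go through Lemma~\ref{lemma:spaceCN15} for this theorem. It proves it directly, following the LCA decomposition of~\cite{BaumgartenJM94,GioraK}: each segment $s$ is split at its lowest-common-ancestor node $v_s$ into a middle piece $s_m$ (spanning full children of $v_s$) and two flanks $s_l,s_r$. The middle pieces are each stored at exactly one node, so the machinery of Lemma~\ref{lemma:pl} applied to $\Pi_m$ is already linear-space. The flanks $\Pi_r$ (and symmetrically $\Pi_l$) are handled by grouping each $S_r(v)$ into blocks of $\Theta(\log_B n)$ segments, keeping only one ``winner'' $win(\cW)$ per block in a global structure $D_r$ (so $D_r$ holds $O(n/\log_B n)$ segments and stays linear even with the $\log_B n$ blow-up), together with a cascaded structure $D_y$ on the $y$-coordinates and small per-block priority-search-tree-like structures. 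The non-obvious ingredient is the Fact from~\cite{GioraK} guaranteeing that the answer in $\Pi_r$ lies in one of two specific blocks per node of the search path, which is what lets the query finish in $O(\log_B n\log\log_B n)$ I/Os.
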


\subsection{Reducing Space to Linear}
\label{sec:horlinspace}
We follow the same approach as in~\cite{BaumgartenJM94,GioraK}. For a segment 
$s=(x_f,x_e;y_s)$, let $v_s$ be the lowest common ancestor of the leaves in 
which $x_f$ and $x_e$ are stored. The node $v_s$ is the lowest node 
such that the range of $v$ contains $[x_f,x_e]$, but $[x_f,x_e]$ does not 
span $v$.  Suppose that $s$ spans the children $v_i,\ldots,v_j$ of $v$.
We represent $s$ as a union of three segments: $s_m=(a_i,b_j;y_s)$, 
 $s_l=(x_f,a_i;y_s)$, and $s_r=(b_j,x_e;y_s)$, where 
$rng(v_i)=[a_i,b_i]$ and $rng(v_j)=[a_j,b_j]$. 
Segments $s_m$, $s_l$ and $s_r$ are stored in 
sets $S_m(v)$, $S_l(v_{i-1})$, and $S_r(v_{j+1})$ respectively. Let 
$\Pi_m=\cup_{v\in\cT} S_m(v)$, $\Pi_l=\cup_{v\in\cT} S_l(v)$, and
$\Pi_r=\cup_{v\in\cT} S_r(v)$. A ray shooting query can be answered 
by answering a ray shooting query for $\Pi_m$, $\Pi_r$, and $\Pi_l$. 

We can identify the predecessor segment of $q$ in $\Pi_m$ by storing 
all segments $s\in\Pi_m$ in the data structure of section~\ref{sec:pl}. 
Since every segment is stored only once the total space usage is $O(n)$. Now we describe the data structure for $\Pi_r$. 
A query on $\Pi_l$ can be answered using a symmetrically defined 
 data structure.

Each set $S_r(v)$ is divided into blocks $\cW$, so that each block contains 
$\Theta(\log_B n)$ segments. We denote by $win(\cW)$ the segment in a block 
$\cW$ with the largest $x$-coordinate of the right endpoint. 
The segments $win(\cW)$ for all blocks $\cW$ and all sets $S_r(u)$ 
are stored in a data structure $D_r$ implemented as in section~\ref{sec:pl}. 
Since the total number of segments in $D_r$ is $O(n/\log_Bn)$, 
$D_r$ needs $O(n)$ words. 
We denote by $Y_r(u)$ the set of $y$-coordinates of all segments in $S_r(u)$. 
The data structure $D_y$ is defined on all sets $Y_r(u)$. 
Let $u_i$ denote the nodes that lie on some root-to-leaf path of $\cT$.
Using $D_y$, we can search in all sets $Y_r(u_i)$ in $O(\log\log_B n)$ I/Os 
per node. 
To implement $D_y$, we apply the construction of section~\ref{sec:pl}, i.e., 
the augmented sets and CUSF structures, to sets $Y_r(u)$. 
Finally, for each block $\cW$ we store a data structure that 
supports queries on segments of $\cW$ in $O(\log_B (|\cW|))$ I/Os.
This data structure uses the fact that the left endpoints of all segments 
in $\cW$ lie on the same vertical line and is very similar to an external 
memory priority search tree~\cite{ArgeSV99}. 

A point location query for $q=(q_x,q_y)$ on $\Pi_r$ can be answered as
 follows. We start by identifying the successor segment $s^+$ of $q$ in $D_r$ and the predecessor segment $s^-$ of $q$ in $D_r$. 
Let $y^+$ and $y^-$ denote the $y$ coordinates of $s^+$ and $s^-$.  
Then, we use the data structure $D_y$ and find 
$y_1(u)=\spred(y^-,Y(u))$  and $y_2(u)=\ssucc(y^+,Y(u))$ for every node 
$u\in\pi$, where $\pi$ is the  search path  for $q_x$ in $\cT$. 
The following fact is proved in~\cite{GioraK}.
\begin{fact}
Let $\cW_1(u)$ and $\cW_2(u)$ be the blocks in $S_r(u)$ that contain 
segments with $y$-coordinates $y_1(u)$ and $y_2(u)$ respectively.
Let $s^*$ be the predecessor segment of $q$ in $\Pi_r$. 
Then $s^*$ belongs to a block $\cW_1(u)$ or $\cW_2(u)$ for some $u\in\pi$.
\end{fact}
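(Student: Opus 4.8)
The plan is to pin down exactly the block $\cW^*$ of $S_r(u)$ that contains $s^*$, where $u\in\pi$ is the node with $s^*\in S_r(u)$. First I would observe that this $u$ really does lie on $\pi$: by the construction of the sets $S_r(\cdot)$, every segment of $S_r(u)$ runs from the left boundary of the slab of $u$ rightwards to a point still strictly left of the right boundary of that slab, so such a segment can contain $q_x$ only when $q_x$ lies in the slab of $u$ --- i.e.\ only for $u$ on the search path $\pi$. Write $y^*$ for the $y$-coordinate of $s^*$, and let $\alpha\le\beta$ be the smallest and the largest $y$-coordinates occurring in $\cW^*$. Since the blocks of $S_r(u)$ are contiguous runs in the $y$-order of $S_r(u)$ (and $y$-coordinates within $S_r(u)$ are pairwise distinct), for any $z\in Y_r(u)$ the block of $S_r(u)$ containing the segment of $y$-coordinate $z$ equals $\cW^*$ iff $z\in[\alpha,\beta]$. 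Hence it suffices to show that $y_1(u)\in[\alpha,\beta]$ or $y_2(u)\in[\alpha,\beta]$.

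The whole argument hinges on the segment $w:=win(\cW^*)$. First, $w$ covers $q_x$: $w$ has the largest right-endpoint $x$-coordinate in $\cW^*$, which is at least that of $s^*$ and hence $\ge q_x$; and $w$ shares the left-endpoint $x$-coordinate of $s^*$, namely the left boundary of the slab of $u$, which is $\le q_x$ because $u\in\pi$. So $w$ is one of the $win$-segments considered by the ray shooting query on $D_r$. Second, I would sandwich $y^*$ between $y^-$ and $y^+$: clearly $y^*\le q_y\le y^+$, while if $y^*<y^-$ then $s^-$ --- itself a genuine segment of $\Pi_r$ lying below $q$ and covering $q_x$ --- would be a candidate for the predecessor of $q$ in $\Pi_r$ strictly above $s^*$, contradicting the choice of $s^*$; hence $y^-\le y^*\le y^+$. (When $s^-$ does not exist we read $y^-=-\infty$, and symmetrically for $s^+$; the case where neither exists cannot occur once $s^*$ does, by the first observation applied to $w$.)

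Next I would split according to the side of $q$ on which $w$ lies. If $y_w\le q_y$, then $w$ is a $win$-segment below $q$ covering $q_x$, so $y_w\le y^-$ by the choice of $s^-$ as the highest such; since $w\in\cW^*$ this forces $\alpha\le y^-$, and since $s^*\in\cW^*$ with $y^*\ge y^-$ it forces $\beta\ge y^-$. Because $\alpha\in Y_r(u)$ and $\alpha\le y^-$, the value $y_1(u)=\spred(y^-,Y_r(u))$ exists and obeys $\alpha\le y_1(u)\le y^-\le\beta$, so $y_1(u)\in[\alpha,\beta]$ and $\cW^*=\cW_1(u)$. Symmetrically, if $y_w>q_y$ then $w$ lies above $q$ and covers $q_x$, so $y_w\ge y^+$ by the choice of $s^+$ as the lowest such; then $\beta\ge y_w\ge y^+$, and $y_w\le\beta$ with $y_w\in Y_r(u)$, $y_w\ge y^+$ gives $y_2(u)=\ssucc(y^+,Y_r(u))\le y_w\le\beta$, while $y_2(u)\ge y^+\ge q_y\ge y^*\ge\alpha$; thus $y_2(u)\in[\alpha,\beta]$ and $\cW^*=\cW_2(u)$. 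Either way $s^*$ lies in $\cW_1(u)$ or $\cW_2(u)$ for this $u\in\pi$.

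The step I expect to be the real obstacle is this sandwiching of $y^*$ and its exploitation. The $win$-predecessor $s^-$ and $win$-successor $s^+$ returned by $D_r$ need not belong to $S_r(u)$ at all --- they may be block representatives coming from entirely different nodes of $\cT$ --- so they cannot be used directly to locate $\cW^*$; this is exactly why one must first project $y^-$ and $y^+$ into $Y_r(u)$ via $\spred$ and $\ssucc$, and why the auxiliary structure $D_y$ over the sets $Y_r(u)$ is introduced. Everything else is routine bookkeeping: non-existence of $s^-$ or $s^+$, coincidences of $y$-coordinates within some $S_r(u)$, and the boundary situations where $q$ lies exactly on $s^-$, $s^+$, or $s^*$.
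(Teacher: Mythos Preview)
Your proof is correct. The paper does not actually prove this fact: it is stated as a Fact with the sentence ``The following fact is proved in~\cite{GioraK}'' immediately before it, and no argument is given. So there is no paper proof to compare your attempt against.

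Your argument follows the natural line one would expect (and essentially what one finds in the cited source): locate the block $\cW^*$ containing $s^*$, observe that its representative $win(\cW^*)$ necessarily covers $q_x$ because all segments of $S_r(u)$ share the same left endpoint, and then use the extremality of $s^-$ and $s^+$ among representatives together with the extremality of $s^*$ in $\Pi_r$ to sandwich one of $y_1(u)$, $y_2(u)$ inside the $y$-interval $[\alpha,\beta]$ of $\cW^*$. The case split on the vertical position of $win(\cW^*)$ relative to $q$ is exactly the right move, and your handling of the degenerate cases (nonexistence of $s^-$ or $s^+$) is fine since the existence of $w$ in $D_r$ forces at least one of them to exist on the relevant side. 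The only implicit assumption you rely on---that blocks of $S_r(u)$ are contiguous in the $y$-order---is indeed how the construction is meant to be read.
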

We can complete the search by querying all $\cW_1(u)$ and $\cW_2(u)$, 
$u\in\pi$, in $O(\log_B n\log_B\log_B n)$ I/Os and selecting the highest segment
among all answers.

When a new segment is inserted, we identify the node $v_s$ in $O(\log_B n)$ 
I/Os. Insertion of $s_m$ into $\Pi_m$ is handled as in section~\ref{sec:pl}. 
Insertion of $s_r$ starts with identifying the 
child $v_r$ of $v_s$ such that $rng(v_r)$ intersects with $[x_f,x_e]$ but 
$s$ does not span $v_r$. Let $\cW_s$ be the block of $S_r(v_r)$ into which 
$s$ must be inserted. If the $x$-coordinate of the right endpoint of $s$ 
is larger than the $x$-coordinate of $win(\cW_s)$, then we 
remove $win(\cW_s)$ from $D_r$ and insert $s$ into $D_r$. 
We also insert the $y$-coordinate of $s$ into $D_y$. Finally, if the 
number of elements in $\cW_s$ after an insertion equals $2\log_B n$, then 
$\cW_s$ is split into two blocks that contain $\log_B n$ segments each. 
We can show using standard methods that the amortized cost of splitting a 
block  is $O(\log_B\log_B n)$. An insertion into $\Pi_l$ is symmetric.
Hence, the total cost of an insertion is $O(\log_B n \log\log_B n)$. 
Deletions  can be implemented in the same way.

\section{Weighted Telescoping Search: Simplified Scenario}
\label{sec:tele}
In this section we provide a simple alternative description of  our main technique, the weighted telescoping search. This section is not necessary in order to understand the material in the rest of this paper; the only purpose of this  section is to provide a simple description of the weighted telescoping search. To introduce our technique, we  digress from the point location problem and consider the following more simple scenario.  Suppose that we are given a balanced tree $\cT$ of node degree $r$ and we keep a  list (or catalog) $L(u)$ in every node $u$ of  $\cT$.
We assume that elements of $L(u)$ are numbers. The successor of  $q$ in a set $S$ is the smallest element $e$ in $S$ that is larger than or equal to $q$, $\ssucc(q,S)=\min\{\, e\in S\,|\, e\ge q\,\}$.   Suppose that we want to traverse a path $\pi(\ell)$ from the root  of $\cT$ to a leaf $\ell$ and search for the  successor of some $q$ in the union of all lists $L(u)$ along the path.  
In other words, for any element $q$ and any leaf $\ell$ we want to quickly find $\ssucc(q,\cup_{u\in \pi(\ell)}L(u))$.  This problem can be solved using the standard fractional cascading technique~\cite{ChazelleG} within the same time, but in this paper we describe an alternative solution. 
We also believe that this general technique  can be used  in other scenarios when the standard fractional cascading is hard to apply.  Unlike the rest of this paper, in this section we describe the solution for  the internal memory model.

Our solution is based on assigning weights to elements of $L(u)$ and maintaining a forest of weighted trees on each list $L(u)$ for every node $u\in \cT$. Roughly speaking, we choose the weights in such a way that  the weight of an element $e\in L(u)$ 
gives us an estimate on the number of elements $e'$, such that $e'$ is stored in $L(v)$ for some descendant $v$ of $u$ and $pred(e,L(u))\le e' \le e$.  We keep augmented catalogs $AL(u)\supseteq L(u)$ in order to compute and maintain element weights. 

\myparagraph{Augmented Lists.} We maintain an augmented catalog $AL(u)$ in every node $u$. Augmented catalogs are supersets of $L(u)$, $AL(u)\supseteq L(u)$, that satisfy the following properties:
\renewcommand{\labelenumi}{\theenumi}
\renewcommand{\theenumi}{\roman{enumi}}
\begin{itemize*}
\item[(i)] 
  If $e\in (AL(u)\setminus L(u))$, then $e\in L(v)$ for an ancestor $v$ of $u$.
\item[(ii)]
  Let a subset $E_i(u)$ of  $AL(u)$ be defined as $E_i(u)=AL(u)\cap AL(u_i)$ for a child $u_i$ of $u$.  There are at most $d=O(r^2)$ elements of $AL(u)$ between any two consecutive elements of $E_i(u)$.
\end{itemize*}
Elements of $E_i(u)$ for some $1\le i\le r$ will be called down-bridges; elements of the set $UP(u)=AL(u)\cap AL(\parent(u))$, where $\parent(u)$ denotes the parent node of $u$, 
are called up-bridges. We will say that a sub-list of a catalog $AL(u)$ bounded by two up-bridges is a \emph{portion} of  $AL(u)$.  
We can create and maintain augmented lists $AL(u)$ using the fractional cascading technique~\cite{ChazelleG,MehlhornN}. The main idea is to copy selected elements from $L(u)$ and store the copies in lists $AL(v)$ for descendants $v$ of $u$; see e.g.,~\cite{ChanN15} for a detailed description. If the same element $e$ is stored in lists $AL(u)$ and $AL(\parent(u))$, where $\parent(u)$ is the parent node of $u$, then we assume that there are pointers between instances of $e$ in $AL(u)$ and $AL(\parent(u))$.


\myparagraph{Element  weights.}  We assign the weight to each element of $AL(u)$ in a bottom-to-top manner: for a leaf node $\ell$ every element $e\in AL(\ell)$ is assigned  weight $1$. Consider an internal node $u$ with children $u_1$, $\ldots$, $u_r$. We associate values $weight_i(e,u)$ with each $e\in AL(u)$ for every $i$, $1\le i \le r$. Let $e_1$ and $e_2$ denote two consecutive bridge elements in $E_i(u)$.  Let $W(e_1,e_2,u)=\sum_{e_1\le e'\le e_2} weight(e',u)$ denote the total weight of all elements $e'\in AL(u)$ such that $e_1< e'\le e_2$.  Every element $e\in AL(u)$ that satisfies $e_1< e\le e_2$ is assigned the same value  of \[weight_i(e,u)=W(e_1,e_2,u_i)/d.\]
The weight of $e\in AL(u)$ is defined\footnote{We observe that the same element $e$ can be assigned different weights $weight(e,u)$ in different nodes $u$.} as $weight(e,u)=\sum_{i=1}^{r}weight_i(e,u)$.

\myparagraph{Telescoping  Search.}
Consider a sub-list $\cP(u)$ of the augmented catalog $AL(u)$ bounded by two up-bridges. All elements of $\cP(u)$ are stored in a weighted search tree satisfying the following property: The depth of a leaf holding an element $e\in \cP(u)$ is bounded by $O(\log(W_P/weight(e)))$ where $W_P=\sum_{e'\in\cP(u)}weight(e')$ is the total weight of all elements in $\cP(u)$. We can use e.g. biased search trees~\cite{BentST85,FeigenbaumT83} for this purpose. 

Now suppose that we want to find, for some number $q$ and some leaf $\ell$ of $\cT$,  the successor of $q$ in $\cup_{u\in \pi(\ell)}L(u)$ where $\pi(\ell)$ is the path 
from the root to $\ell$. We start in the root node $u_0$ and  identify  $n(u_0)=\ssucc(q,AL(u_0))$.  Suppose that $u_1$  is the $j$-th child of $u_0$. We find the largest down-bridge $b_p(u_0)\le n(u_0)$ and the smallest down-bridge $b_n(u_0)\ge n(u_0)$ where $b_p\in E_j(u_0)$ and $b_n\in E_j(u_0)$. We use finger search~\cite{GuibasMPR77} in $AL(u_0)$ with $n(u_0)$ as a finger to find 
$b_n(u_0)$ and $b_p(u_0)$ in $O(\log r)$ time. Next we identify the portion $\cP(u_1)$ bounded by $b_p(u_0)$ and $b_n(u_0)$. We find  $n(u_1)=\ssucc(q,AL(u_1))$ by searching in the weighted tree for $\cP(u_1)$. We then find the largest $b_p(u_1)\le n(u_1)$ and the smallest $b_n(u_1)\ge n(u_1)$ where $b_p\in E_g(u_1)$, $b_n\in E_g(u_1)$ and $u_2$ is the $g$-th child of $u_1$. Again we use finger search  and find $b_n(u_1)$, $b_p(u_1)$ in $O(\log r)$ time.  We continue in the same way until the leaf node is reached.  See Fig.~\ref{fig:telescoping}.

When we know $\ssucc(q,AL(u_i))$ for every node $u_i\in \pi(\ell)$, $n^*=\ssucc(q,\cup_{u_i\in\pi(\ell)} AL(u_i))$ can be computed. Every element $e\in AL(u)$ is either  from the set $L(u)$ or from the set $L(w)$ for some ancestor $w$ of $u$.  Hence $\cup_{u\in\pi(\ell)} AL(u)=\cup_{u\in\pi(\ell)} L(u)$.  Hence  $n^*$ is the successor of $q$ in $\cup_{u\in \pi(\ell)} L(u)$.

The total time can be estimated as follows. Let $\omega_i$ denote the weight of $n(u_i)$. We can find the element $e_n(u_0)$ in time $\log(W_0/\omega_0)$, where $W_0$ is the total weight of all elements in $AL(u_0)$.  Down-bridges $b_p(u_0)$ and $b_n(u_0)$ can be found in $O(\log r)$ time by finger search in $AL(u_0)$. When we know 
$b_p(u_i)$ and $b_n(u_i)$, we can compute $n(u_{i+1})$ in time $O(\log(W_{i+1}/\omega_{i+1}))$
where $W_{i+1}$ is the total weight of all elements in $\cP(u_{i+1})$. When $n(u_{i+1})$ is known, we can compute  $b_p(u_{i+1})$ and $b_n(u_{i+1})$ in $O(\log r)$ time. The total time needed to compute all $n(u_i)$ is $O(\sum_{i=0}^h \log(W_i/\omega_i))$ where $h$ is the tree height.  Since $\omega_i\ge W_{i+1}/r^2$, we have 
\[\sum_{i=0}^h \log (W_i/\omega_i)= \log W_0+ \sum_{i=0}^{h-1} (\log W_{i+1} -\log \omega_i) - \log \omega_h\le \log(W_0/\omega_h) + 2(h+1)\log r.\] 
By definition, $\omega_h=1$. We will show below that $W_0\le n$. Since $h=\log_r n$, $2(h+1)\log r=O(\log n)$ and the sum above can be bounded by $O(\log n)$. All finger searches also take $O(\log n)$ time. 

It remains to prove that $W_0\le n$. We will show by induction that the total weight of all elements on every level of $\cT$ is bounded by $n$: Every element in a leaf node has weight  $1$; hence their total weight does not exceed $n$. Suppose that, for some $k\ge 1$, the total weight of all elements on level $k-1$ does not exceed $n$. Consider an arbitrary node $v$ on level $k$, let $v_1$, $\ldots$, $v_r$ be the children of $v$, and let $m_i$ denote the total weight of elements in $AL(v_i)$. Every element in $AL(v_i)$ contributes $1/d$ fraction of its weight to at most $d$ different elements in $AL(v)$. Hence $\sum_{e\in AL(v)}weight_i(v)\le m_i$ and the total weight of all elements in $AL(v)$ does not exceed $\sum_{i=1}^r m_i$. Hence, for any level $k\ge 1$, the total weight of $AL(v)$ for all nodes $v$ on  level $k$ does not exceed $n$. Hence the total weight of $AL(u_0)$ for the root node $u_0$ is also bounded by $n$.  
\begin{figure}[tb]
  \begin{subfigure}{.3\textwidth}
    \includegraphics[width=.7\linewidth]{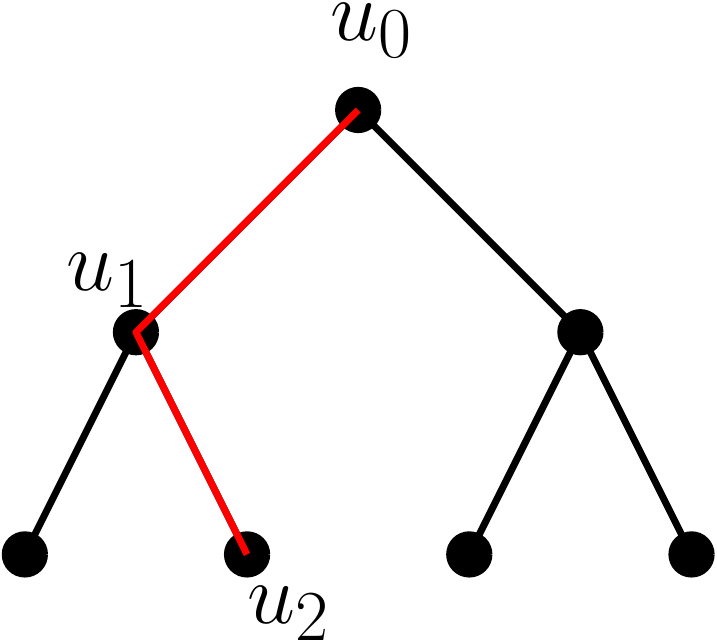}
  \end{subfigure}
  \begin{subfigure}{.7\textwidth}
     \centering
     \includegraphics[width=.7\linewidth]{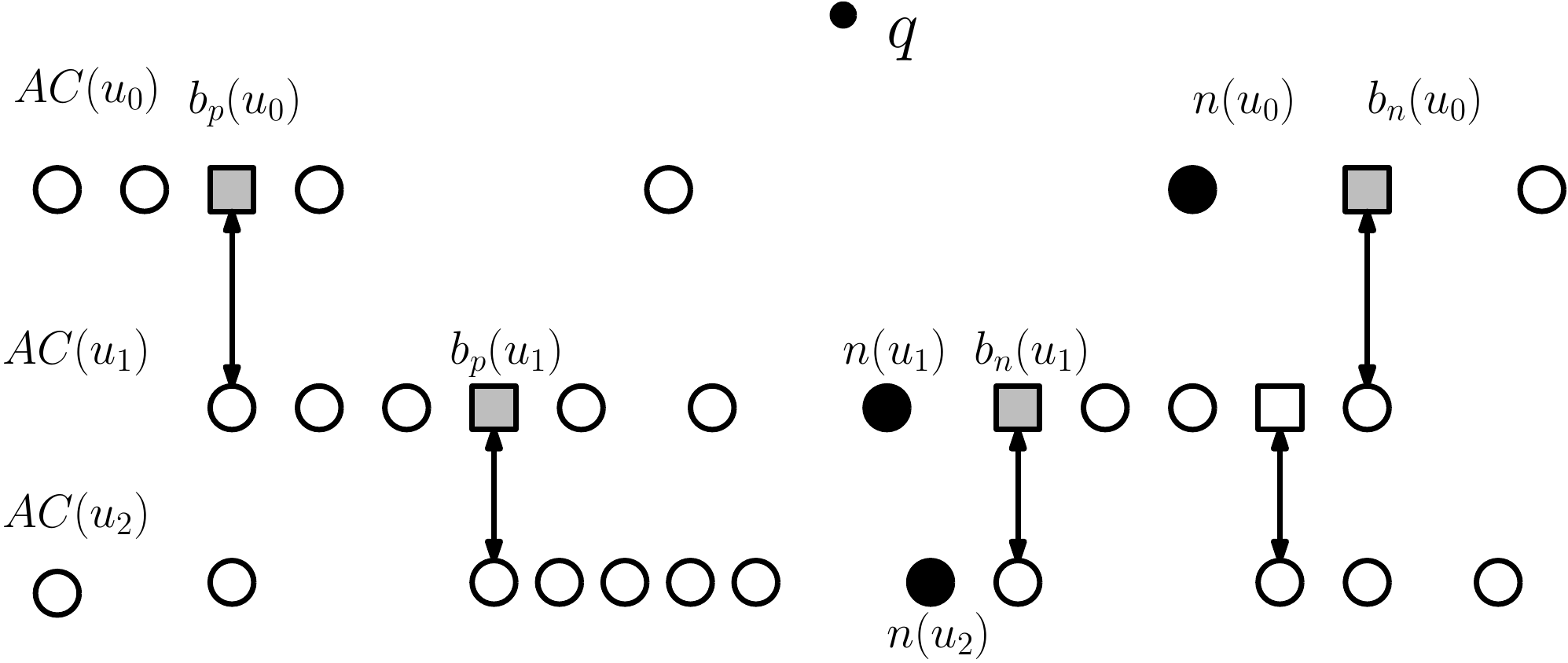}
  \end{subfigure}
  \caption{Telescoping Search. Left: tree $\cT$ with arity $r=2$ and a path $\pi=u_0,\,u_1,\,u_2$ in $\cT$. Right: Example of the telescoping search. We are searching for the predecessor of $q$ in $AL(u_0)\cup AL(u_1)\cup AL(u_2)$.   Down-bridges are shown with squares, other nodes are shown with circles.  $n(u_i)$ are shown with filled circles, $b_n(u_i)$ and $b_p(u_i)$ are shown with gray squares. Only relevant down-bridges and relevant parts of $AL(u_i)$ are shown.}
  \label{fig:telescoping}
\end{figure}
Thus we have shown the following result.
\begin{lemma}
  \label{lemma:tele1}
 Suppose that we store a sorted list $L(u)$ in every node $u$ of a balanced degree-$r$ tree $\cT$. Then it is possible to find $\ssucc(q,\cup_{u\in \pi}L(u))$ for any $q$ and for any root-to-leaf path $\pi$ in $O(\log n)$ time, where $n$ is the total size of all lists $L(u)$. The underlying data structure uses  space $O(n)$.
\end{lemma}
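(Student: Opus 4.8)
The plan is to assemble the three ingredients introduced just above the statement — augmented catalogs, element weights, and biased search trees — and then to bound the cost of a single root-to-leaf telescoping search.

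First I would build the augmented catalogs $AL(u)\supseteq L(u)$ by the standard fractional cascading construction~\cite{ChazelleG,MehlhornN}: copy a suitably spaced subset of the elements of each $L(u)$ downward into the catalogs of all descendants, so that properties (i) and (ii) hold with $d=O(r^2)$. This construction is exactly what yields both $\sum_{u\in\cT}|AL(u)|=O(n)$ and the bound that at most $d$ elements of $AL(u)$ lie between two consecutive down-bridges of any $E_i(u)$. Having computed the weights bottom-up as described, I store every portion $\cP(u)$ in a biased $(a,b)$-tree~\cite{BentST85,FeigenbaumT83} so that the leaf holding $e$ has depth $O(\log(W_P/weight(e)))$, and I additionally keep a finger-search structure on each whole list $AL(u)$. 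Because there are only $O(d)=O(r^2)$ elements of $AL(u)$ between consecutive down-bridges, once $n(u)=\ssucc(q,AL(u))$ is known a finger search started at $n(u)$ locates the closest down-bridges $b_p(u)\le n(u)\le b_n(u)$ of the relevant $E_i(u)$ in $O(\log d)=O(\log r)$ time.

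Next I would verify correctness of the search. At $u_0$ it finds $n(u_0)=\ssucc(q,AL(u_0))$ in the biased tree of $AL(u_0)$; at each $u_i$, $i\ge 1$, it searches the biased tree of the portion $\cP(u_i)$ delimited by $b_p(u_{i-1})$ and $b_n(u_{i-1})$. The point to check is that this portion really contains $\ssucc(q,AL(u_i))$: the bridges $b_p(u_{i-1})\le n(u_{i-1})\le b_n(u_{i-1})$ lie in $E_j(u_{i-1})=AL(u_{i-1})\cap AL(u_i)$, and since $n(u_{i-1})=\ssucc(q,AL(u_{i-1}))$, everything of $AL(u_{i-1})$ strictly below $n(u_{i-1})$ is $<q$ while $b_n(u_{i-1})\ge q$; as both bridges also belong to $AL(u_i)$, the successor of $q$ in $AL(u_i)$ is sandwiched between them (a degenerate boundary case simply returns a bridge). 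Finally, by property (i) every element of $AL(u)$ is either in $L(u)$ or in $L(v)$ for an ancestor $v$, so $\bigcup_{u\in\pi}AL(u)=\bigcup_{u\in\pi}L(u)$, and $n^\ast=\min_i n(u_i)$ is the desired successor.

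For the running time I would reproduce the telescoping estimate sketched above. The search in $\cP(u_i)$ costs $O(\log(W_i/\omega_i))$, where $\omega_i=weight(n(u_i),u_i)$ and $W_i$ is the total weight of $\cP(u_i)$ (with $\cP(u_0)=AL(u_0)$). By definition $weight_j(n(u_i),u_i)=W(b_p(u_{i-1}),b_n(u_{i-1}),u_i)/d\ge W_{i+1}/d$, hence $\omega_i\ge W_{i+1}/d$ and
\[\sum_{i=0}^{h}\log(W_i/\omega_i)\le \log(W_0/\omega_h)+2(h+1)\log d,\]
which is $O(\log n)$ since $\omega_h=1$, $h=O(\log_r n)$, $d=O(r^2)$, and $W_0\le n$; the $O(h\log r)$ total cost of the finger searches is likewise $O(\log n)$. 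The remaining ingredient $W_0\le n$ I prove by induction on the level of $\cT$: the leaf level has total weight $n$, and since every element of $AL(v_i)$ contributes only a $1/d$ fraction of its weight, and to at most $d$ elements of $AL(v)$, the total weight of $AL(v)$ is at most $\sum_i m_i$ where $m_i$ is the total weight of $AL(v_i)$; so the total weight of any level is at most $n$. Space is $O(n)$ because $\sum_u|AL(u)|=O(n)$ and all secondary structures are linear in the lists they sit on. The main obstacle is the simultaneous calibration of the two quantitative demands on the weights — we need $\omega_i\ge W_{i+1}/d$ for the logarithms to telescope, yet also need the total weight on every level to remain $O(n)$ so that $W_0\le n$ — and it is precisely the scaling factor $1/d$ together with property (ii) that makes both hold; checking that the per-node searches compose (the parent's bridges bracket the child's successor) is the other place that needs care.
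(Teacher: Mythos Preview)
Your proposal is correct and follows essentially the same approach as the paper: augmented catalogs via fractional cascading, bottom-up weights with the $1/d$ scaling, biased search trees on portions, the telescoping bound $\sum_i\log(W_i/\omega_i)=O(\log n)$, and the level-by-level induction showing $W_0\le n$. One small slip: in your line ``$weight_j(n(u_i),u_i)=W(b_p(u_{i-1}),b_n(u_{i-1}),u_i)/d$'' the bridges should be $b_p(u_i),b_n(u_i)$ and the third argument $u_{i+1}$, since $weight_j(\cdot,u_i)$ is defined via the portion in the child $u_{i+1}$; your conclusion $\omega_i\ge W_{i+1}/d$ is nonetheless correct.
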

It is possible to extend the result of this section to the external memory model and to dynamize our data structure. However Lemma~\ref{lemma:tele1} cannot be used to answer vertical ray shooting queries because in the scenario of this Lemma lists $L(u)$ contain numbers.
}

\end{document}